 \title{Strict Almost Non-Abelian Deformations}
 \author{A. Much\\ \footnotesize{Instituto de Ciencias Nucleares, UNAM, M\'exico D.F. 04510, M\'exico}
 	\\ \footnotesize{Faculty of Mathematics, University of Vienna, 1090 Vienna, Austria}}
 \newtheorem{theorem}{\textsc{Theorem}}[section]
 \newtheorem{lemma}{\textsc{Lemma}}[section]
 \newtheorem{proposition}{\textsc{Proposition}}[section]
 \theoremstyle{definition}
 \newtheorem{definition}{\textsc{Definition}}[section]
 \newtheorem{Notation}{Notation}[section]
   \newtheorem{convention}{Conventions}[section]
 \theoremstyle{remark}
 \newtheorem{remark}{Remark}[section]
 \numberwithin{equation}{section} 
\begin{document}
 	\maketitle
 	\abstract{An almost non-abelian extension of the Rieffel deformation is presented in this work. The non-abelicity comes into play by the introduction of   unitary groups which are dependent of the infinitesimal generators of $SU(n)$. This extension is applied to quantum mechanics and quantum field theory.} 
 	\tableofcontents
 	
 	\section{Introduction}
 	Is it possible to understand and investigate the weak and  strong interaction in terms of a deformation method?  The intuition leading to this question comes from using warped convolutions \cite{GL1, BS, BLS} in the context of quantum measurement theory and quantum mechanics (QM) \cite{AA, Muc1}.
 	By using the novel tool of warped convolutions in a quantum mechanical context, we recast  many fundamental physical effects involving electromagnetism. In particular, the deformation of a free Hamiltonian becomes, after setting deformation parameters to physical constants, the minimally coupled Hamiltonian. This means that through deformation we generated abelian gauge fields. Hence, the main question  following this insight goes as follows: Can we formulate a similar rigorous apparatus in order to obtain non-abelian gauge fields by deformation? 
 	\newline\newline
 	The investigation shows that there is a natural extension to the framework of warped convolutions in the non-abelian context. In particular, the formulation in the non-abelian case resembles the abelian version in many parts. Most of the theorems, lemmas and propositions that were stated in the abelian case hold in the non-abelian version as well. For example, we can express the deformation in terms of strong limits, which is equivalent to the spectral representation. Furthermore, for the operators belonging to a suitable sub-algebra such deformations are well defined. Another example of similarity is the case of symmetry. If a self-adjoint operator is deformed the result will be a symmetric operator. Now if the operator fulfills in addition certain decay requirements, self-adjointness follows.  
 	\newline\newline
 	By applying the apparatus to QM we obtain minimal substitutions corresponding to non-abelian gauge fields. However,  the non-abelian part of the field strength tensor (calculated by the algebra see \cite{Muc1}), i.e. the quadratic terms of the gauge fields vanish.   In particular, the reason for  this  vanishing can be traced back to the necessity of a strongly continuous unitary group in order to fulfill the requirements needed for warped convolutions. Nevertheless, following ideas outlined in the abelian deformation of QM, we are able to define a non-commutative   space-time, which we  name the non-abelian Moyal-Weyl. 
 	\newline\newline 
 	By using the algebra of the  new space-time, we define a QFT on the four dimensional  non-abelian Moyal-Weyl. 
 	This is done in order to obtain or construct
 	non-trivial interacting fields which was done for the abelian case in \cite{A, GL1, GL2, GL4, GL5, Mor, MUc}. We further prove the isomorphism to   fields defined on the  non-abelian Moyal-Weyl to quantum fields (QF) deformed with non-abelian-warped convolutions. This is done by explicitly constructing a unitary operator. Furthermore, we prove the Wightman properties of the deformed QF and relate them to fields defined on a wedge.
 	\newline\newline 
 	The advantage of QFT on wedges is its close relation to a modified version of Lorentz-covariance which, under certain requirements, lead to local observables. It was recognized by the authors in  \cite{GL1} that QF on a Moyal-Weyl space-time can be identified with fields on the wedge and it is rewarded with an exact two particle scattering matrix.    The notion of wedge-covariance and wedge-locality (the vanishing of the commutator of two fields which are wedge-like separated) can be applied as well in our case. 
 	\newline\newline 
 	The paper is organized as follows: The second section   introduces some important definitions and lemmas regarding the original warped convolutions. The third section is devoted to the mathematical formulation of non-abelian deformations. In the fourth and fifth  sections we apply non-abelian-deformations to QM and QFT. 
 		
 		\begin{convention}
 			Throughout this work we use $d=n+1$, for $n\in\mathbb{N}$. The Greek letters are split into  $\mu, \,\nu=0,\dots,n$ and for the $su(m)$-generators we use $\alpha, \beta,\gamma=1,\cdots,m^2-1$. Moreover, we use Latin letters for the spatial components which run from $1,\dots,n$ and we choose the following convention for the Minkowski scalar product of $d $-dimensional vectors, $a\cdot b=a_0b^0+a_kb^k=a_0b^0- \vec{a}\cdot\vec{b}$.
 		\end{convention}
 		
 		\section{Warped convolutions}\label{s2}
 		In the current work, warped convolutions are generalized to an almost non-abelian case.  Since we permanently refer to the method  we lay out in this section the novel deformation procedure
 		and present the most important definitions, lemmas and propositions for the current work. 
 		\newline\newline 
 		We start by assuming the existence of a strongly continuous unitary group $U$ that is a representation of the additive
 		group $\mathbb{R}^{d}$ on some separable Hilbert space $\mathcal{H}$. 
 		\begin{Notation}
 			We denote by $\mathcal{C}\subset\mathcal{B}(\mathcal{H})$ the $C^*$-algebra of all bounded operators on $\mathcal{H}$ on which the $\mathbb{R}^d$-action
 			$$
 			\alpha_k(A):=U(k)\, A\,U(k)^{-1}
 			$$
 			is strongly continuous, i.e. such that the function $\mathbb{R}^d\ni k\to\|\alpha_k(A) \|$ is continuous. We denote by $\mathcal{C}^\infty$ the $*$-subalgebra of $\alpha$-smooth elements in $\mathcal{C}$ and  let $E$ be the spectral resolution of the unitary operator $U$, i.e.
 			$$
 			U(k)=\int_{\mathbb{R}^d}e^{ikx}\, dE(x).
 			$$
 			Finally,   let $\mathcal{D}$ be the dense and  stable subspace of vectors in $\mathcal{H}$ which transform smoothly under $U$. By using the former notations and definitions we are now able to give the definition of the warped convolutions of an operator.
 		\end{Notation}
 		\begin{definition}\label{defwca}
 			Let $\Theta$ be a real skew-symmetric matrix, regarded as an operator on $\mathbb{R}^d$. The warped convolutions  of an operator $A\in\mathcal{C}^\infty$ with respect to $(\alpha,\Theta)$ are the operators $A_{\Theta}$ and  $_{\Theta}A$ defined on the domain $\mathcal{D}\subset\mathcal{H}$ according to
 			\begin{equation}\label{WC1}
 			A_{\Theta}:=\int dE(x)\,\alpha_{\Theta x}(A) ,\qquad\qquad _{\Theta}A:=\int \alpha_{\Theta x}(A)\,dE(x) .
 			\end{equation}
 		\end{definition}
 		The restriction to smooth operators could perhaps be lifted slightly, but since the deformation is preformed with operator-valued integrals, it is difficult to ensure that the formula makes sense. In \cite{BLS}, it is proven that one can represent the warped
 		convolution $A_\Theta$ and $_{\Theta}A$ of $A \in \mathcal{C}^{\infty}$, on the dense domain
 		$\mathcal{D}\subset\mathcal{H}$ of vectors smooth w.r.t. the action of $U$,  in terms of strong limits 
 		\begin{align*}
 		\int dE(x)\,\alpha_{\Theta x}(A) \Phi=(2\pi)^{-d}
 		\lim_{\epsilon\rightarrow 0}
 		\int\int dx\, dy \,\chi(\epsilon x,\epsilon y )\,e^{-ixy}\,U(y)\, \alpha_{\Theta x}(A) \Phi,  \\
 			\int \alpha_{\Theta x}(A)\, dE(x)\Phi=(2\pi)^{-d}
 			\lim_{\epsilon\rightarrow 0}
 			\int\int dx\, dy \,\chi(\epsilon x,\epsilon y )\,e^{-ixy}\, \alpha_{\Theta x}(A)\, U(y)\Phi, 
 		\end{align*}
 		where $\chi \in\mathscr{S}(\mathbb{R}^d\times\mathbb{R}^d)$ with $\chi(0,0)=1$. Furthermore, it was proven rigorously, by expressing the warped deformations in terms of strong limits, that the two different deformations are equivalent,
 			\begin{equation}\label{eq1}
 			\int dE(x)\,\alpha_{\Theta x}(A)
 =\int \alpha_{\Theta x}(A)\, dE(x).
 			\end{equation} 
In  this work we  consider unbounded real vector-valued functions of the self-adjoint   operators with domain $\mathcal{E}\subseteq\mathscr{S}(\mathbb{R}^d)$ (the Schwartz space). For this  purpose we recall the following theorem \cite[ Theorem VIII.6]{ RS1}.
 		\begin{theorem}\label{tsa}
 			Let $X$ be an  unbounded vector-valued  self-adjoint   operator,  ${Z}(.)$ be an unbounded real vector-valued Borel function on  $\mathbb{R}^{d}$,   and let the dense domain $\mathcal{D}_{{Z}}$ be given as, 
 			\begin{equation*}
 			\mathcal{D}_{{Z}}=\Big\{\phi\Big|\  \int\limits_{-\infty}^{ \infty} |Z_{\mu}({x})|^2\,d(\phi,P_{x}\phi) <\infty\Big\},
 			\end{equation*}
 			where $\{P_x\}_{x\in\mathbb{R}^d}$ are projection-valued measures on $\mathcal{H}$. Then ${Z}({X})$ is a self-adjoint operator with domain $\mathcal{D}_{{Z}}$. 
 		\end{theorem}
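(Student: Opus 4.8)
The plan is to deduce the statement from the spectral theorem, reducing it to the scalar functional calculus, and then to obtain self-adjointness on the stated domain by the standard truncation argument. Working componentwise, fix a real Borel function $Z_\mu:\mathbb{R}^d\to\mathbb{R}$ and let $\{P_x\}_{x\in\mathbb{R}^d}$ be the (joint) spectral resolution of the commuting tuple $X=(X_0,\dots,X_n)$, so that $X_\mu=\int x_\mu\,dP_x$. One defines $Z_\mu(X):=\int Z_\mu(x)\,dP_x$ on its natural domain $\mathcal{D}_{Z_\mu}=\{\phi:\int|Z_\mu(x)|^2\,d(\phi,P_x\phi)<\infty\}$, and the domain appearing in the statement is then $\mathcal{D}_Z=\bigcap_\mu\mathcal{D}_{Z_\mu}$, which is precisely the set of vectors on which the full spectral integral defining $Z(X)$ converges. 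It thus suffices to prove that each $Z_\mu(X)$ is self-adjoint on $\mathcal{D}_{Z_\mu}$.

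First I would verify density of $\mathcal{D}_{Z_\mu}$: with $B_n:=\{x\in\mathbb{R}^d:|Z_\mu(x)|\le n\}$ one has $P_{B_n}\phi\in\mathcal{D}_{Z_\mu}$ for every $\phi\in\mathcal{H}$, since $Z_\mu$ is bounded on $B_n$, while $P_{B_n}\phi\to\phi$ as $n\to\infty$ because $Z_\mu$ is finite everywhere, so that $B_n\uparrow\mathbb{R}^d$. Symmetry of $Z_\mu(X)$ on $\mathcal{D}_{Z_\mu}$ follows at once from $\overline{Z_\mu}=Z_\mu$ together with the elementary identities of the spectral calculus.

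The decisive step is the converse domain inclusion $\mathcal{D}(Z_\mu(X)^*)\subseteq\mathcal{D}_{Z_\mu}$. Let $\psi\in\mathcal{D}(Z_\mu(X)^*)$ and put $\eta:=Z_\mu(X)^*\psi$, so that $(Z_\mu(X)\phi,\psi)=(\phi,\eta)$ for all $\phi\in\mathcal{D}_{Z_\mu}$. I would test this identity against the truncated vectors $\phi_n:=(Z_\mu\mathbf{1}_{B_n})(X)\psi$, which lie in $\mathcal{D}_{Z_\mu}$ because $Z_\mu$ is bounded on $B_n$. Using $Z_\mu(X)(Z_\mu\mathbf{1}_{B_n})(X)\psi=(Z_\mu^2\mathbf{1}_{B_n})(X)\psi$ and the reality of $Z_\mu$, the left-hand side equals $a_n:=\int_{B_n}|Z_\mu(x)|^2\,d(\psi,P_x\psi)=\|\phi_n\|^2$, while Cauchy--Schwarz bounds the right-hand side by $\|\phi_n\|\,\|\eta\|=a_n^{1/2}\|\eta\|$. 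Hence $a_n\le a_n^{1/2}\|\eta\|$, i.e. $a_n\le\|\eta\|^2$ uniformly in $n$, and monotone convergence yields $\int|Z_\mu(x)|^2\,d(\psi,P_x\psi)\le\|\eta\|^2<\infty$, that is $\psi\in\mathcal{D}_{Z_\mu}$. Combined with symmetry this shows $Z_\mu(X)$ is self-adjoint on $\mathcal{D}_{Z_\mu}$, and running over $\mu$ gives the assertion.

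The only genuine subtlety — and hence the step I expect to require the most care in a fully detailed write-up — is the joint spectral theorem for the commuting tuple $X$ together with the $\sigma$-finiteness of the scalar spectral measures $d(\phi,P_x\phi)$ on the separable Hilbert space $\mathcal{H}$, which is what legitimizes exhausting the spectrum by the sets $B_n$; both facts are classical. Since the assertion is precisely Theorem VIII.6 of \cite{RS1} (read componentwise), one may alternatively simply invoke that reference, the argument above merely recording the mechanism behind it.
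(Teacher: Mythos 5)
Your argument is correct: it is the standard truncation proof of the spectral-calculus self-adjointness result, and the paper itself offers no proof at all for this statement — it simply recalls it as Theorem VIII.6 of Reed--Simon, which is exactly the fallback you note at the end. Your write-up merely supplies the classical mechanism behind that citation (density via the sets $B_n$, symmetry from reality of $Z_\mu$, and the adjoint-domain inclusion via testing against $(Z_\mu\mathbf{1}_{B_n})(X)\psi$), all of which checks out.
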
 
 		\begin{remark}
 			In the case that ${X}$ is the coordinate or momentum operator, which is used in the QM and QFT sections, the density and stability of domain $\mathcal{D}_{Z}$ can be seen in \cite[Lemma 3.1]{Muc1}. 
 		\end{remark}

 		\section{Nonabelian Deformation}
 		In this Section we define the almost non-abelian deformation. The form of   deformation that resembles in a familiar manner the formula of warped convolutions was deduced from  minimal coupling. To be more precise, a reverse engineering path led us to the form of  the non-abelian formula.  Let us first give the following definition. 
 		\begin{definition}\label{nao}
 			Let $\tau_{\alpha}$ be the infinitesimal generators of the special unitary group $SU(m)$, with commutator relations
 			\begin{equation}
 			[\tau_{\alpha},\tau_{\beta}]= if_{\alpha\beta\gamma}\tau^{\gamma}\qquad  f_{\alpha\beta\gamma}\in\mathbb{C}, \quad \alpha, \beta,\gamma=1,\cdots,m^2-1.
 			\end{equation}
 			Then, we define a matrix valued operator $Q({X}) $ on the Hilbert-space  $\mathcal{D}_{ {Z}}\otimes  {\mathbb{C}^{m}}$  as follows
 			\begin{equation}
 			Q( {X})_{\mu} := Z( {X})_{\mu}\otimes Y^{\alpha} \tau_{\alpha} ,\qquad \mu=0,1,\cdots,n
 			\end{equation}  
 			with   operator valued vector  $Z$  which is a real vector-valued function of a self-adjoint operator (see Theorem \ref{tsa}) and $Y\in\mathbb{R}^{m^2-1}$. Another possible choice for $Y$ is to choose it as  $m^2-1$ complex-valued $m\times m$-matrices such that $Y^{\alpha}\tau_{\alpha}$ is hermitian. However the choice of $Y$ may be chosen, we demand that the matrix  $Y^{\alpha}\tau_{\alpha}$ has no zero eigenvalues. The case of all eigenvalues being zero is given by $Y=0$, which in a sense corresponds to the commutative case.
 		\end{definition}
 			\begin{remark}
 				Note that a similar choice of   $Y$ is given   in \cite{CCM} where they define the coordinate operator of a sphere in order to describe geometry by pure algebraic terms, i.e. $Y=Y^{\alpha}\otimes\tau_{\alpha}$.
 			\end{remark}
 		In the next proposition we discuss the properties of the operator which was defined as the  product of   unbounded real vector-valued Borel functions of the coordinate operator and the infinitesimal generators of the special unitary group. 
 		\begin{proposition}
 			The operators $Q( {X})_0,\dots,Q({X})_n$ as given in Definition \ref{nao} are mutually commuting self-adjoint operators on the domain $\mathcal{D}_{{Z}}\otimes  {\mathbb{C}^{m}}$. Hence, there is a strongly continuous $n$-parameter group of unitaries $U^{\tau}(p)$ on $\mathcal{H}\otimes  {\mathbb{C}^{m}}$ with infinitesimal generator ${Z}({X})\otimes Y^{\alpha}\tau_{\alpha}$:
 			\begin{equation}
 			U^{\tau}(p): =\mathrm{exp}(i p^{\mu} Z_{{\mu}}( {X})\otimes Y^{\alpha}\tau_{\alpha}),\qquad\forall p\in\mathbb{R}^{d}.
 			\end{equation} 
 			
 		\end{proposition}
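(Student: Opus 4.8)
The plan is to reduce everything to the spectral theorem (Theorem \ref{tsa}) applied on the tensor-product Hilbert space $\mathcal{H}\otimes\mathbb{C}^m$, exploiting that the non-abelian factor $Y^\alpha\tau_\alpha$ is a fixed finite hermitian matrix. First I would diagonalise $Y^\alpha\tau_\alpha$: since it is hermitian on $\mathbb{C}^m$, there is a unitary $V\in U(m)$ and real eigenvalues $\lambda_1,\dots,\lambda_m$ with $V(Y^\alpha\tau_\alpha)V^{-1}=\mathrm{diag}(\lambda_1,\dots,\lambda_m)$; by the standing assumption every $\lambda_j\neq 0$. Conjugating the whole problem by $\mathbf{1}\otimes V$, the operators $Q(X)_\mu$ become $\bigoplus_{j=1}^m \lambda_j\, Z(X)_\mu$, i.e. a direct sum over the $m$ eigenspaces of $\lambda_j$-rescaled copies of the self-adjoint operator $Z(X)_\mu$.

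Next I would establish self-adjointness. By Theorem \ref{tsa}, each $Z(X)_\mu$ is self-adjoint on $\mathcal{D}_Z$; rescaling by the nonzero real constant $\lambda_j$ preserves self-adjointness with the same domain, and a finite direct sum of self-adjoint operators is self-adjoint on the direct sum of the domains, which is exactly $\mathcal{D}_Z\otimes\mathbb{C}^m$. Undoing the conjugation by the unitary $\mathbf{1}\otimes V$ gives that each $Q(X)_\mu$ is self-adjoint on $\mathcal{D}_Z\otimes\mathbb{C}^m$. For mutual commutativity I would argue at the level of the projection-valued measures: the $Z(X)_\mu$ are functions $Z_\mu(X)$ of a single vector-valued self-adjoint operator $X$, hence are built from the one joint spectral resolution $\{P_x\}$ and therefore strongly commute; tensoring with the single fixed matrix $Y^\alpha\tau_\alpha$ (equivalently, passing to the block-diagonal form above) does not disturb this, so the $Q(X)_\mu$ are mutually strongly commuting self-adjoint operators. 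One should phrase the commutativity in the resolvent/spectral-projection sense to avoid the usual domain subtleties with unbounded operators — that is the one place where care is genuinely required, and it is the step I expect to be the main obstacle, although it dissolves once one works with the joint spectral measure of $X$ rather than with the operators themselves.

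Finally, having a family of $d$ (equivalently $n{+}1$) mutually strongly commuting self-adjoint operators $Q(X)_0,\dots,Q(X)_n$, the multivariate version of Stone's theorem (SNAG theorem) yields a unique strongly continuous $d$-parameter unitary group on $\mathcal{H}\otimes\mathbb{C}^m$ whose generators are the $Q(X)_\mu$, namely
\begin{equation*}
U^\tau(p):=\exp\!\bigl(i\,p^\mu Q(X)_\mu\bigr)=\exp\!\bigl(i\,p^\mu Z_\mu(X)\otimes Y^\alpha\tau_\alpha\bigr),\qquad p\in\mathbb{R}^d,
\end{equation*}
which is the claimed formula. (In the block-diagonal picture this is simply $\bigoplus_j \exp(i\lambda_j p^\mu Z_\mu(X))$, manifestly strongly continuous as a finite direct sum of the strongly continuous groups supplied by Stone's theorem applied to each $\lambda_j Z_\mu(X)$, and conjugation by the fixed unitary $\mathbf{1}\otimes V$ preserves strong continuity.) Strong continuity in $p$ follows because each summand is strongly continuous and the sum is finite; uniqueness follows from uniqueness in Stone's theorem. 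This completes the proof.
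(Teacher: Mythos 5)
Your proposal is correct, but it routes the argument differently from the paper, and in one respect more carefully. The paper's own proof is terse: self-adjointness is asserted from Theorem \ref{tsa} together with the fact that the $su(m)$ generators are finite-dimensional hermitian matrices; commutativity is established by a purely formal algebraic computation $[Q(X)_\mu,Q(X)_\nu]=[Z(X)_\mu,Z(X)_\nu]\otimes Y^\alpha\tau_\alpha Y^\beta\tau_\beta=0$; and the group is obtained from Stone's theorem. You instead diagonalise $Y^\alpha\tau_\alpha$ first and work with the block decomposition $\bigoplus_j \lambda_j Z(X)_\mu$ — this is precisely the device the paper introduces only later, in Lemma \ref{eis}, to rewrite $U^\tau(p)$ as $\sum_r U(\lambda_r p)\otimes W B_r W^{-1}$; you are front-loading it into the proposition itself, which makes self-adjointness and strong continuity of the group essentially automatic. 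More significantly, your treatment of commutativity via the joint spectral measure $\{P_x\}$ of $X$ is genuinely stronger than the paper's: a formal vanishing of commutators on a common domain does not by itself give strong commutativity of unbounded operators (Nelson's example), whereas observing that all the $Z(X)_\mu$ are Borel functions of the same spectral resolution does. What the paper's approach buys is brevity and a computation that generalises verbatim to the later commutator manipulations; what yours buys is that the hypotheses of the multivariate Stone/SNAG theorem are actually verified rather than implicitly assumed. One cosmetic remark: the non-vanishing of the eigenvalues $\lambda_j$ is not needed for any step of your argument (a zero block is still self-adjoint and still generates the trivial unitary group), so you could drop that appeal to the standing assumption.
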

 		\begin{proof}
 			Self-adjointness is    proven by using Theorem \ref{tsa} and the fact that   infinitesimal generators of the special unitary group are represented as traceless finite-dimensional self-adjoint matrices.  Commutativity can be proven by a simple calculation,
 			\begin{align*}
 			[Q( {X})_{\mu} ,Q( {X})_{\nu}] &=   [Z( {X})_{\mu}\otimes Y^{\alpha}\tau_{\alpha},Z( {X})_{\nu}\otimes Y^{\beta}\tau_{\beta}] \\&= Z( {X})_{\mu} Z( {X})_{\nu}\otimes Y^{\alpha}\tau_{\alpha} Y^{\beta}\tau_{\beta}-Z( {X})_{\nu} Z( {X})_{\mu} \otimes Y^{\beta}\tau_{\beta}Y^{\alpha}\tau_{\alpha}\\&= Z( {X})_{\mu}Z( {X})_{\nu}\otimes Y^{\alpha}\tau_{\alpha} Y^{\beta}\tau_{\beta}-Z( {X})_{\nu}Z( {X})_{\mu}\otimes Y^{\alpha}\tau_{\alpha} Y^{\beta}\tau_{\beta}\\&=[ Z( {X})_{\mu},Z( {X})_{\nu}]\otimes Y^{\alpha}\tau_{\alpha} Y^{\beta}\tau_{\beta}
 			\\&=0.
 			\end{align*} 
 			Continuity and unitarity of the group is due to Stone's theorem \cite[ Theorem VIII.7]{ RS1}.
 		\end{proof}Concerning the forthcoming proofs  the following lemma is essential.

 		\begin{lemma}\label{eis}Let
 		$\mathcal{D}_{ {Q}} \subset\mathcal{H}\otimes \mathbb{C}^{m}$
 		be a dense and stable subspace of vectors which are smooth w.r.t. the action of $U^{\tau}$	and let the hermitian $m\times m$ matrix $Y^{\alpha}\tau_{\alpha}$ be written as follows,
 			\begin{equation}
 			Y^{\alpha}\tau_{\alpha}= W\left(\sum_{r=1}^{m} \lambda_r  \, B_r\right) W^{-1},
 			\end{equation} 
 			where $W$ are the diagonalizing matrices, $\lambda_r$ represents the real $r$-eigenvalue of $Y^{\alpha}\tau_{\alpha}$ and $B_r$ are matrices such that,
 			\begin{align*}
 			B_rB_l=\delta_{rl}B_{l},\qquad \sum_{r=1}^{m}B_r=\mathbb{I}_{m\times m}.
 			\end{align*}  
 			Then, the non-abelian unitary operator $U^{\tau}(p)$  can be rewritten in terms of the abelian unitary operator  $U(p)$  		on the domain $\mathcal{D}_{ {Q}}$ as follows,		
 			\begin{align}
 			U^{\tau}(p) =  
 			\sum_{r=1}^{m}	U( \lambda_r p)\otimes W  \, B_r W^{-1},\qquad\forall p\in\mathbb{R}^{d}.
 			\end{align} 
 		\end{lemma}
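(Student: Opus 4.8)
The plan is to conjugate the problem into a block-diagonal form in the $\mathbb{C}^{m}$-factor and then exponentiate each block against the single abelian group $U$. Recall from the preceding proposition that the $Q(X)_{\mu}=Z_{\mu}(X)\otimes Y^{\alpha}\tau_{\alpha}$ are mutually commuting self-adjoint operators, so by Stone's theorem $U^{\tau}(p)=\exp\!\big(i\,p^{\mu}Z_{\mu}(X)\otimes Y^{\alpha}\tau_{\alpha}\big)$ is a well-defined strongly continuous unitary group; all identities below are meant to hold first on the dense stable smooth domain $\mathcal{D}_{Q}$ and are then extended by continuity. Since $\mathbf{1}\otimes Y^{\alpha}\tau_{\alpha}$ is a fixed bounded self-adjoint matrix commuting with $Z_{\mu}(X)\otimes\mathbf{1}$, and $Y^{\alpha}\tau_{\alpha}=W\big(\sum_{r}\lambda_{r}B_{r}\big)W^{-1}$ with $W$ invertible, the first step is to write
\[
U^{\tau}(p)=(\mathbf{1}\otimes W)\,\exp\!\Big(i\,p^{\mu}Z_{\mu}(X)\otimes\textstyle\sum_{r=1}^{m}\lambda_{r}B_{r}\Big)\,(\mathbf{1}\otimes W^{-1}),
\]
which is legitimate because conjugating a self-adjoint generator by the bounded invertible operator $\mathbf{1}\otimes W$ conjugates the associated unitary group.

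Next I would exploit the idempotent relations $B_{r}B_{l}=\delta_{rl}B_{l}$ and $\sum_{r}B_{r}=\mathbb{I}_{m\times m}$: the $\mathbf{1}\otimes B_{r}$ are bounded, mutually commuting idempotents summing to $\mathbf{1}$, each of them reduces the self-adjoint operator $T:=p^{\mu}Z_{\mu}(X)\otimes\sum_{r'}\lambda_{r'}B_{r'}$, and on the range of $\mathbf{1}\otimes B_{r}$ one simply has $T=\lambda_{r}\,p^{\mu}Z_{\mu}(X)\otimes\mathbf{1}$. Here one uses that the $Z_{\mu}(X)$ commute (established in the previous proposition), so that $p^{\mu}Z_{\mu}(X)$ is itself self-adjoint and, by definition of the abelian group, $\exp\!\big(i\,q^{\mu}Z_{\mu}(X)\big)=U(q)$. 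Consequently $\exp(iT)$ reduces blockwise and
\[
\exp\!\Big(i\,p^{\mu}Z_{\mu}(X)\otimes\textstyle\sum_{r}\lambda_{r}B_{r}\Big)=\sum_{r=1}^{m}\exp\!\big(i\,\lambda_{r}p^{\mu}Z_{\mu}(X)\big)\otimes B_{r}=\sum_{r=1}^{m}U(\lambda_{r}p)\otimes B_{r}.
\]
Inserting this into the conjugation formula and using $(\mathbf{1}\otimes W)(A\otimes B_{r})(\mathbf{1}\otimes W^{-1})=A\otimes WB_{r}W^{-1}$ yields exactly $U^{\tau}(p)=\sum_{r}U(\lambda_{r}p)\otimes WB_{r}W^{-1}$ on $\mathcal{D}_{Q}$; stability of $\mathcal{D}_{Q}$ under the right-hand side is immediate, since each $U(\lambda_{r}p)$ acts only in the $\mathcal{H}$-factor and the $WB_{r}W^{-1}$ are bounded.

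As an independent check, the same formula can be obtained by a direct power-series computation on a core of analytic vectors: from $B_{r}^{2}=B_{r}$ one gets $(Z_{\mu}(X)\otimes B_{r})^{k}=Z_{\mu}(X)^{k}\otimes B_{r}$ for $k\ge 1$, hence $\exp\!\big(i\,\lambda_{r}p^{\mu}Z_{\mu}(X)\otimes B_{r}\big)=\mathbf{1}\otimes(\mathbb{I}_{m\times m}-B_{r})+U(\lambda_{r}p)\otimes B_{r}$, and multiplying these $m$ mutually commuting factors while collapsing every mixed product $B_{r}B_{l}$ ($r\neq l$) to zero reproduces $\sum_{r}U(\lambda_{r}p)\otimes B_{r}$. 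The only point that genuinely requires care is the bookkeeping with the unbounded generators — making sure that the power-series manipulations, or the block reduction, are carried out on one common dense stable domain and then extended using unitarity of both sides — but since $\mathbf{1}\otimes W$ and $\mathbf{1}\otimes Y^{\alpha}\tau_{\alpha}$ are bounded and act trivially on the $\mathcal{H}$-factor, this reduces entirely to the standard spectral calculus for the single abelian group $U$, and no analytic input beyond Theorem \ref{tsa} and Stone's theorem is needed.
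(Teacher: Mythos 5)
Your proposal is correct, and its primary route differs from the paper's. The paper proves the identity by Taylor-expanding $U^{\tau}(p)=\exp\bigl(i\,p^{\mu}Z_{\mu}(X)\otimes Y^{\alpha}\tau_{\alpha}\bigr)$ directly on $\mathcal{D}_{Q}$, using $\bigl(W\sum_{r}\lambda_{r}B_{r}W^{-1}\bigr)^{k}=W\sum_{r}\lambda_{r}^{k}B_{r}W^{-1}$ to resum each eigenvalue sector into $U(\lambda_{r}p)$ — essentially your ``independent check,'' modulo the cosmetic difference that the paper expands the full conjugated exponential at once rather than factoring it into $m$ commuting pieces. Your main argument instead conjugates by $\mathbf{1}\otimes W$ and then reduces $\exp(iT)$ blockwise by the orthogonal projections $\mathbf{1}\otimes B_{r}$, identifying the restriction to each range with $U(\lambda_{r}p)$ via the functional calculus. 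This buys something real: the paper's assertion that ``the Taylor expansion of $U^{\tau}(p)$ is well defined'' on the smooth domain is the weakest point of its proof, since smooth vectors need not be analytic vectors for the unbounded generator $p^{\mu}Z_{\mu}(X)$, whereas your spectral-reduction argument needs no series convergence at all and yields the identity on all of $\mathcal{H}\otimes\mathbb{C}^{m}$ by boundedness of both sides. One minor point of hygiene: since $Y^{\alpha}\tau_{\alpha}$ is hermitian you may (and should) take $W$ unitary, so that the conjugated generator remains self-adjoint and the phrase ``conjugating the associated unitary group'' is literally correct; for merely invertible $W$ the identity still holds as a similarity transform on the finite-dimensional factor, but the Stone-theorem language would not apply. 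The paper itself invokes ``unitarity of $W$'' in its final step, consistent with this choice.
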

 		
 		\begin{proof}
 			From the definition of $Y^{\alpha}\tau_{\alpha}$ we know that it represents a hermitian matrix and that the eigenvalues different from zero are real.   By diagonalizing the matrix   $Y^{\alpha}\tau_{\alpha}= W D W^{-1}$ we  obtain a diagonal matrix  $D$ which we write as the sum of   real eigenvalues in the following manner,
 			\begin{equation}
 			D=\sum_{r=1}^{m} \lambda_r \, B_r,
 			\end{equation} 
 			where  $B_r$ are $m\times m$ matrices that have a one in the $r$-diagonal, i.e. in the $r$-th row and $r$-th column  and zero elsewhere. The properties of $B_r$ given in the lemma are easily verified. Next, on the domain $\mathcal{D}_{ {Q}}$ the Taylor expansion of  $U^{\tau}(p)$ is well defined and is given by, 
 			\begin{align*}
 			U^{\tau}(p)= \mathrm{exp}(i p^{\mu} Z_{{\mu}}( {X})\otimes Y^{\alpha}\tau_{\alpha}) 
 			&=\sum_{k=0}^{\infty}\frac{i^k}{k!} (p^{\mu} Z_{{\mu}})^k\otimes (W \sum_{r=1}^{m} \lambda_r \, B_r W^{-1})^k
 			\\
 			&=\sum_{r=1}^{m}\sum_{k=0}^{\infty} \frac{i^k}{k!} (\lambda_rp^{\mu} Z_{{\mu}})^k\otimes W  B_r W^{-1}  
 			\\&= \sum_{r=1}^{m} U(\lambda_r p)\otimes W  B_r W^{-1},
 			\end{align*} 
 			where in the last lines we used the unitarity of $W$ and  the power properties of $B_r$.
 		\end{proof}

 		In the next step we define   
 		$U^{\tau}(p)$ as  generator of the automorphisms $\alpha^{\tau}$.
 		
 		\begin{definition}\label{wcna}
 			Let $\Theta$ be a real skew-symmetric matrix on $\mathbb{R}^d$, let $A\in \mathcal{C}^{\infty} $ and let $E$ be
 			the spectral resolution of the unitary operator $U$. 
 			Then the corresponding non-abelian warped convolutions  of an operator $A\in\mathcal{C}^\infty$ with respect to $(\alpha^{\tau},\Theta)$ are the operators  $A^{{\Theta_{\tau}}}$ and  $_{{\Theta_{\tau}}}A$
 			of $A$  defined on the dense and stable domain $\mathcal{D}_{ {Q}}$ according to	\begin{align}
 			& A^{{\Theta_{\tau}}} :=\int dE(x)\alpha^{\tau}_{
 				\Theta x} (A\otimes \mathbb{I}_{m\times m}), \label{WC}
 			\\& _{{\Theta_{\tau}}}A :=\int \alpha^{\tau}_{
 				\Theta x} (A\otimes \mathbb{I}_{m\times m})  dE(x)\nonumber
 			\end{align}
 			where $\alpha^{\tau}$ denotes the adjoint action of $U^{\tau}$ given by $\alpha^{\tau}_{k}(A)=U^{\tau}(k)A U^{\tau}(k)^{-1}$.
 		\end{definition}
 		
 		\begin{remark}
 			In literature concerning non-abelian fields, the $m\times m$-unit matrix often do not appear explicitly in formulas. Hence to ease readability we use this abuse of notation and do not write the unit matrices, unless necessarily in order to avoid confusion.
 		\end{remark} 
 	Concerning the commutative limit we have due to the former definition two possible limits. We can either take   $\Theta$ or $Y$  in the limit to zero. Another important distinction to the abelian case of \cite{BS} is the use of a non-abelian  adjoint action $U^{\tau}$ for   deformation while the spectral resolution is of the   unitary operator $U$. Hence  lemmas and theorems of the original work do not hold and have to be proven for the non-abelian adjoint action. However, it turns out that many proofs can be done analogously by using Lemma \ref{eis}.   Note, that the well-definedness of the integrals is given in the case of the deformed operator  $A\in \mathcal{C}^{\infty}$ since the non-abelian case is in a sense a $m\times m$ matrix of the original deformation. Hence for each component of the matrix we can use the same results of \cite{BLS} to argue the well-definedness. Nevertheless, we deform in this work unbounded operators hence in order to prove the validness of the deformation formula we need the non-abelian warped convolutions in terms of strong limits.

 	\begin{lemma}\label{eis2}
 		Let $\Theta$ be a real skew-symmetric matrix on $\mathbb{R}^d$, let $A\in \mathcal{C}^{\infty}$ and let $E$ be
 		the spectral resolution of the unitary operator $U$.  Then, the corresponding non-abelian warped convolution $A^{{\Theta_{\tau}}}$ and  $_{{\Theta_{\tau}}}A$
 		of $A$ are defined on the domain $\mathcal{D}_{ {Q}}$ in terms of strong limits as follows.
 		\begin{align} 
 		\int dE(x)\alpha^{\tau}_{
 			\Theta x} (A)     \Psi&=  (2\pi)^{-d}
 		\lim_{\epsilon\rightarrow 0}
 		\iint  d x\, d y \,\chi(\epsilon x,\epsilon y )\,e^{-ixy}\, U(y)\, \alpha^{\tau}_{\Theta x}(A)\Psi,\nonumber \\
 		\int \alpha^{\tau}_{
 			\Theta x} (A)  dE(x)   \Psi&=  (2\pi)^{-d}
 		\lim_{\epsilon\rightarrow 0}
 		\iint d x\, d y \,\chi(\epsilon x,\epsilon y )\,e^{-ixy}\, \alpha^{\tau}_{\Theta x}(A) U(y)\,\Psi
 		\end{align}
 		where $\alpha^{\tau}$ denotes the adjoint action of $U^{\tau}$ given by $\alpha^{\tau}_{k}(A )=U^{\tau}(k) \,A \, U^{\tau}(k)^{-1}$ and  $\chi \in\mathscr{S}(\mathbb{R}^d\times\mathbb{R}^d)$ with $\chi(0,0)=1$.
 	\end{lemma}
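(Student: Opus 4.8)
The plan is to reduce the non-abelian statement to the abelian strong-limit representation of \cite{BLS} recalled in Section \ref{s2}, using the block decomposition of $U^{\tau}$ furnished by Lemma \ref{eis}. Throughout I keep the abuse of notation $A\equiv A\otimes\mathbb{I}_{m\times m}$ and $U(y)\equiv U(y)\otimes\mathbb{I}_{m\times m}$. The first move is to decompose the non-abelian adjoint action. Writing $P_{r}:=W B_{r}W^{-1}$, the hypotheses on the $B_{r}$ make the $P_{r}$ mutually orthogonal projections on $\mathbb{C}^{m}$ with $\sum_{r=1}^{m}P_{r}=\mathbb{I}_{m\times m}$. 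By Lemma \ref{eis} one has $U^{\tau}(k)=\sum_{r}U(\lambda_{r}k)\otimes P_{r}$ and $U^{\tau}(k)^{-1}=\sum_{r}U(-\lambda_{r}k)\otimes P_{r}$, so that $P_{r}P_{s}=\delta_{rs}P_{r}$ gives
\[
\alpha^{\tau}_{\Theta x}(A)=U^{\tau}(\Theta x)\,(A\otimes\mathbb{I}_{m\times m})\,U^{\tau}(\Theta x)^{-1}=\sum_{r=1}^{m}\alpha_{\lambda_{r}\Theta x}(A)\otimes P_{r},
\]
where $\alpha$ is the adjoint action of the abelian group $U$ and each $\lambda_{r}\Theta$ is again a real skew-symmetric matrix. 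Since $E$ is the spectral resolution of $U$ on $\mathcal{H}$, the measure entering Definition \ref{wcna} is $E\otimes\mathbb{I}_{m\times m}$, and pulling the finite sum out of the operator-valued integral yields
\[
\int dE(x)\,\alpha^{\tau}_{\Theta x}(A)=\sum_{r=1}^{m}\Bigl(\int dE(x)\,\alpha_{\lambda_{r}\Theta x}(A)\Bigr)\otimes P_{r}.
\]

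Next I would settle the domain question and invoke the abelian result. Each summand $\int dE(x)\,\alpha_{\lambda_{r}\Theta x}(A)$ is exactly the abelian warped convolution $A_{\lambda_{r}\Theta}$ of $A\in\mathcal{C}^{\infty}$ in the sense of Definition \ref{defwca}, which is well defined on the smooth domain $\mathcal{D}\subset\mathcal{H}$. Using Lemma \ref{eis} together with the standing assumption $\lambda_{r}\neq 0$ for all $r$, one checks that $U^{\tau}$ leaves each subspace $(\mathbb{I}\otimes P_{r})(\mathcal{H}\otimes\mathbb{C}^{m})$ invariant and acts there as $U(\lambda_{r}\,\cdot\,)\otimes\mathbb{I}_{m\times m}$; since a vector is smooth for $p\mapsto U(\lambda_{r}p)$ precisely when it is smooth for $U$, it follows that $\mathcal{D}_{Q}\subseteq\mathcal{D}\otimes\mathbb{C}^{m}$ (algebraic tensor product). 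Hence every $\Psi\in\mathcal{D}_{Q}$ is a finite sum $\sum_{j}\phi_{j}\otimes v_{j}$ with $\phi_{j}\in\mathcal{D}$, and the strong-limit formula of \cite{BLS} applies to each $\phi_{j}$ with $\lambda_{r}\Theta$ in place of $\Theta$:
\[
\int dE(x)\,\alpha_{\lambda_{r}\Theta x}(A)\,\phi_{j}=(2\pi)^{-d}\lim_{\epsilon\to 0}\iint dx\,dy\,\chi(\epsilon x,\epsilon y)\,e^{-ixy}\,U(y)\,\alpha_{\lambda_{r}\Theta x}(A)\,\phi_{j}.
\]

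Finally I would reassemble. The sum over $r$ is finite, hence commutes with both the strong limit and the $(x,y)$-integration; substituting the last display into the preceding one and using $U(y)\,\alpha^{\tau}_{\Theta x}(A)=\sum_{r}\bigl(U(y)\,\alpha_{\lambda_{r}\Theta x}(A)\bigr)\otimes P_{r}$ gives
\[
\int dE(x)\,\alpha^{\tau}_{\Theta x}(A)\,\Psi=(2\pi)^{-d}\lim_{\epsilon\to 0}\iint dx\,dy\,\chi(\epsilon x,\epsilon y)\,e^{-ixy}\,U(y)\,\alpha^{\tau}_{\Theta x}(A)\,\Psi,
\]
which is the first identity; the second follows verbatim, keeping $U(y)$ to the right of $\alpha^{\tau}_{\Theta x}(A)$ throughout and invoking the mirror abelian formula of \cite{BLS}.

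I expect the only genuinely delicate point to be the domain identification in the middle step: one must make sure that smoothness with respect to the non-abelian group $U^{\tau}$ guarantees smoothness with respect to the abelian $U\otimes\mathbb{I}_{m\times m}$, so that the representation of \cite{BLS}, proven on $\mathcal{D}$, may legitimately be used on $\mathcal{D}_{Q}$. This hinges on the eigenvalue decomposition of $Y^{\alpha}\tau_{\alpha}$ from Lemma \ref{eis} together with the hypothesis that $Y^{\alpha}\tau_{\alpha}$ has no vanishing eigenvalue, which makes each reparametrization $p\mapsto\lambda_{r}p$ invertible. Everything else — linearity of the operator-valued integral, finiteness of the spectral sum, and the elementary algebra of the projections $P_{r}$ — is routine bookkeeping.
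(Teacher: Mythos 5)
Your proposal is correct and follows essentially the same route as the paper: both use Lemma \ref{eis} to decompose $\alpha^{\tau}_{\Theta x}(A)$ into the finite sum $\sum_{r}\alpha_{\lambda_{r}\Theta x}(A)\otimes WB_{r}W^{-1}$, reduce the non-abelian warped convolution to the abelian ones $A_{\lambda_{r}\Theta}$, and then invoke the strong-limit representation of \cite{BLS} componentwise. Your explicit verification that $\mathcal{D}_{Q}\subseteq\mathcal{D}\otimes\mathbb{C}^{m}$ (via $\lambda_{r}\neq 0$) is a welcome refinement of a point the paper leaves implicit, but it does not change the argument.
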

 	\begin{proof}
 		The proof can be done in an analogous manner to the proof in \cite{BLS}, where one uses first the fact that
 		\begin{equation*} 
 		U(y)=\int e^{ixy}dE(x),
 		\end{equation*}
 		hence one can deduce from this the following 
 		\begin{equation*} 
 		\int dE(x)=
 		(2\pi)^{-d}
 		\lim_{\epsilon\rightarrow 0}
 		\iint d x\, d y \,\chi(\epsilon x,\epsilon y )\,e^{-ixy}\, U(y).
 		\end{equation*}
 		Moreover since the non-abelian unitary operator can be rewritten in terms of the abelian one (see Lemma \ref{eis}) we have the following for     corresponding non-abelian warped convolutions $A^{{\Theta_{\tau}}}$    of $A$  
 		\begin{align*} 
 	A^{{\Theta_{\tau}}}=	\int dE(x)\alpha^{\tau}_{
 			\Theta x} (A)     \Psi
 		&=\sum_{r,l=1}^{m}\int dE(x)  U(
 			\lambda_r \Theta x)\, A \, U(
 			-\lambda_l \Theta x) \otimes W  B_rB_l W^{-1}\\&=
 	\sum_{r=1}^{m}	\int dE(x)  \alpha_{
 			\lambda_r \Theta x} (A)  \otimes W  B_r W^{-1} \\&= 	\sum_{r=1}^{m}	A_{{	\lambda_r \Theta }}\otimes W  B_r W^{-1},
 		 		\end{align*}
 		 	where $A_{{	\lambda_r \Theta }}$ is the abelian warped convolutions (see Definition \ref{defwca}) with the deformation matrix $\lambda_r \Theta$. For   operator $_{{\Theta_{\tau}}}A$ of $A$  we have analogously,
 		 		\begin{align*} 
 		 		_{{\Theta_{\tau}}}A=	\int \alpha^{\tau}_{
 		 			\Theta x} (A)   dE(x)  \Psi
 		 		&=\int  \sum_{r,l=1}^{m} U(
 		 		\lambda_r \Theta x)\, A \, U(
 		 		-\lambda_l \Theta x)  dE(x) \otimes W  B_rB_l W^{-1} \\&=
 		 		\sum_{r=1}^{m}	\int  \alpha_{
 		 			\lambda_r \Theta x} (A) dE(x)  \otimes W  B_r W^{-1} \\&= 	\sum_{r=1}^{m}	\,_{\lambda_r\Theta}A\otimes W  B_r W^{-1},
 		 		\end{align*}
 		where $_{{	\lambda_r \Theta }}A$ is the other possible definition of the  abelian warped convolutions (see Definition \ref{defwca}) with the deformation matrix $\lambda_r \Theta$. Therefore, the non-abelian deformed operator can be expressed as the abelian deformed one which is the left side of a tensor product. 
 		Hence, all the considerations concerning convergence and the strong limits of \cite{BLS} apply in this case as well, since the right hand side of the tensor product produced merely multiplication by matrices.   	
 		As already said the proof that the former formulas are well-defined for  $A\in \mathcal{C}^{\infty}$ is   analogous   to the proof in \cite[Section 2.2]{BLS}.
 	\end{proof}
 From the last proof an essential detail of this deformation emerged. The non-abelian deformation can be expressed as a tensor product, where the abelian case is on the left of this product, i.e.
 \begin{equation}\label{eqnaa}
 	A^{{\Theta_{\tau}}}=\sum_{r=1}^{m}	A_{{	\lambda_r \Theta }}\otimes W  B_r W^{-1}, \qquad \qquad 	_{{\Theta_{\tau}}}A=	\sum_{r=1}^{m}	\,_{\lambda_r\Theta}A\otimes W  B_r W^{-1}
 \end{equation}
  
 	In the next lemma we prove that the two deformations $A^{{\Theta_{\tau}}}$ and  $_{{\Theta_{\tau}}}A$ are equivalent. Moreover, we proof important statements concerning self-adjointness and commutativity which are essential in regards to physical (QM and QFT) aspects. 	
 	\begin{lemma}\label{wcl2}
 		Let $\Theta$ be a real skew-symmetric matrix on $\mathbb{R}^d$, let $A, F \in \mathcal{C}^{\infty}$. Then, on the dense domain $\mathcal{D}_{ {Q}}$ the following holds,\newline 
 		\begin{enumerate} \renewcommand{\labelenumi}{(\roman{enumi})}
 			\item   $\int \alpha^{\tau}_{\Theta x}(A)dE(x)=\int dE(x)\alpha^{\tau}_{\Theta x}(A)$
 			\item $\left(\int \alpha^{\tau}_{\Theta x}(A)dE(x)\right)^{*}\subset\int \alpha^{\tau}_{\Theta
 				x}(A^{*})dE(x)$
 			\item  Let $A, F \in \mathcal{C}^{\infty}$ be operators such that $[\alpha^{\tau}_{\Theta
 				x}(A),\alpha^{\tau}_{-\Theta y}(F)]=0$ for all $x,y \in sp \,U$. Then the non-abelian deformed operators commute, i.e. 
 			\begin{equation*}
 			[A^{{\Theta_{\tau}}}, F^{-{\Theta_{\tau}}}]=0.
 			\end{equation*}
 		\end{enumerate}
 	\end{lemma}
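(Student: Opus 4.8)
The plan is to deduce all three assertions from the corresponding abelian statements via the tensor‑product decomposition \eqref{eqnaa}. Write $P_r:=W B_r W^{-1}$. Since $Y^{\alpha}\tau_{\alpha}$ is hermitian the diagonalizing matrix $W$ may be taken unitary, so $W^{-1}=W^{*}$ and the $P_r$ form a finite family of mutually orthogonal self‑adjoint projections with $P_rP_l=\delta_{rl}P_l$ and $\sum_{r=1}^{m}P_r=\mathbb{I}_{m\times m}$. By \eqref{eqnaa} together with Lemma \ref{eis2}, on $\mathcal{D}_{Q}$ one has $A^{\Theta_{\tau}}=\sum_{r}A_{\lambda_r\Theta}\otimes P_r$ and ${}_{\Theta_{\tau}}A=\sum_{r}{}_{\lambda_r\Theta}A\otimes P_r$, i.e. both operators are block‑diagonal with respect to the orthogonal decomposition of $\mathcal{H}\otimes\mathbb{C}^{m}$ induced by the $P_r$, the $r$‑th block being the abelian warped convolution of $A$ with deformation matrix $\lambda_r\Theta$ (see Definition \ref{defwca}).

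For (i) I would apply the abelian equivalence \eqref{eq1}, for each fixed $r$ with $\Theta$ replaced by $\lambda_r\Theta$, to get $A_{\lambda_r\Theta}={}_{\lambda_r\Theta}A$; tensoring with $P_r$ and summing yields $\int \alpha^{\tau}_{\Theta x}(A)\,dE(x)=\int dE(x)\,\alpha^{\tau}_{\Theta x}(A)$ on $\mathcal{D}_{Q}$. For (iii) I would insert the decomposition into the commutator; because $P_rP_l=\delta_{rl}P_l$ all mixed terms drop out and $[A^{\Theta_{\tau}},F^{-\Theta_{\tau}}]=\sum_{r}[A_{\lambda_r\Theta},F_{-\lambda_r\Theta}]\otimes P_r$, so it suffices to kill each block commutator. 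For this I first note that $\alpha^{\tau}_{\Theta x}(A)=\sum_{r}\alpha_{\lambda_r\Theta x}(A)\otimes P_r$, immediate from Lemma \ref{eis} and $B_rB_l=\delta_{rl}B_l$; hence the hypothesis $[\alpha^{\tau}_{\Theta x}(A),\alpha^{\tau}_{-\Theta y}(F)]=0$ reads $\sum_{r}[\alpha_{\lambda_r\Theta x}(A),\alpha_{-\lambda_r\Theta y}(F)]\otimes P_r=0$, and linear independence of the $P_r$ forces $[\alpha_{\lambda_r\Theta x}(A),\alpha_{-\lambda_r\Theta y}(F)]=0$ for every $r$ and all $x,y\in sp\,U$. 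This is precisely the premise of the abelian version of (iii) with deformation matrix $\lambda_r\Theta$, whence $[A_{\lambda_r\Theta},F_{-\lambda_r\Theta}]=0$; note the $\lambda_r$ are nonzero by Definition \ref{nao}.

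For (ii) I would again use the block‑diagonal picture. Let $T:=\int\alpha^{\tau}_{\Theta x}(A)\,dE(x)$, defined on $\mathcal{D}_{Q}$, which by (i) equals $\sum_{r}A_{\lambda_r\Theta}\otimes P_r$ there. Since the $P_r$ are mutually orthogonal projections summing to $\mathbb{I}_{m\times m}$, the Hilbert‑space adjoint of this finite block‑diagonal sum is computed blockwise, $T^{*}=\sum_{r}\big({}_{\lambda_r\Theta}A\big)^{*}\otimes P_r$. Invoking the abelian adjoint relation of \cite{BLS,BS}, $\big({}_{\lambda_r\Theta}A\big)^{*}\subset{}_{\lambda_r\Theta}(A^{*})$ for each $r$, and because a finite direct sum preserves such inclusions, $T^{*}\subset\sum_{r}{}_{\lambda_r\Theta}(A^{*})\otimes P_r=\int\alpha^{\tau}_{\Theta x}(A^{*})\,dE(x)$, which is (ii).

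The main obstacle is the domain/adjoint bookkeeping in (ii): one must argue carefully that the adjoint of the finite sum $\sum_{r}{}_{\lambda_r\Theta}A\otimes P_r$ of (generally unbounded) operators, each a priori defined on the domain $\mathcal{D}$ of $U$‑smooth vectors, is really the blockwise adjoint, and that the inclusion — not equality — of domains in the abelian adjoint relation is inherited after tensoring with the $P_r$ and summing. The projection identities $P_rP_l=\delta_{rl}P_l$, $\sum_rP_r=\mathbb{I}_{m\times m}$ are exactly what make this go through, but this deserves to be spelled out; once \eqref{eqnaa} and these identities are in hand, parts (i) and (iii) are essentially bookkeeping.
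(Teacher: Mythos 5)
Your proposal is correct and follows essentially the same route as the paper: both reduce all three assertions to the abelian results of \cite{BLS} via the decomposition $A^{\Theta_{\tau}}=\sum_{r}A_{\lambda_r\Theta}\otimes WB_rW^{-1}$ and the orthogonality $B_rB_l=\delta_{rl}B_l$. The only differences are cosmetic --- the paper additionally runs a direct strong-limit computation for (i) before noting the same one-liner you use, and it skips the blockwise adjoint bookkeeping for (ii) that you sketch, referring instead to \cite{BLS}.
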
 
 	\begin{proof} $(i)$ The proof is done by expressing the non-abelian deformation  in terms of strong limits using Lemma \ref{eis}, which gave a specific relation between the non-abelian and the abelian unitary operator. Hence, for $\Psi\in \mathcal{D}_{ {Q}}$ we have 
 		\begin{align*} 
 		&(2\pi)^d\int dE(x)\alpha^{\tau}_{
 			\Theta x} (A)\Psi=     	  
 		\lim_{\epsilon\rightarrow 0}
 		\iint  d x\, d y \,\chi(\epsilon x,\epsilon y )\,e^{-ixy}\, U(y)\, \alpha^{\tau}_{\Theta x}(A)\Psi, \\
 		&=      
 		\lim_{\epsilon\rightarrow 0}
 		\iint  d x\, d y \,\chi(\epsilon x,\epsilon y )\,e^{-ixy}\, U(y)\, U^{\tau}(\Theta x)(A)U^{\tau}(-\Theta x)\Psi, \\
 			&=     	  
 			\lim_{\epsilon\rightarrow 0} \sum_{r,l=1}^{m}\left(
 			\iint  d x\, d y \,\chi(\epsilon x,\epsilon y )\,e^{-ixy}  U(y) 	 U(\lambda_r \Theta x)(A)U(-\lambda_l \Theta x)\otimes W   {B_rB_l}  W^{-1}\right)\Psi, \\ 
 			&=     	  
 			\lim_{\epsilon\rightarrow 0}\sum_{r =1}^{m} \left(
 			\iint  d x\, d y \,\chi(\epsilon x,\epsilon y )\,e^{-ixy}\, U(y) 	 U(\lambda_r \Theta x)(A)U(-\lambda_r \Theta x)\otimes W  B_r  W^{-1}\right)\Psi, \\		&=     	  
 			\lim_{\epsilon\rightarrow 0}\sum_{r =1}^{m}\left (
 			\iint  d x\, d y \,\chi_1(\epsilon x,\epsilon y )\,e^{-ixy}\, 	 U(\lambda_r \Theta x)(A)U(-\lambda_r \Theta x+y)\otimes W  B_r  W^{-1}\right)\Psi,
 		\end{align*} 
 		where in the last lines we restricted the integration  to the submanifold $(\ker \Theta)^{\perp} \times (\ker \Theta)^{\perp}$, since
 		the remaining integrals merely produce factors of $2\pi$ and we performed the variable substitution $x \rightarrow x + (\lambda_r\Theta)^{-1} y$ 	and took into account that $\Theta^{-1}$ is skew symmetric. Furthermore, we defined a  cutoff function $\chi_1(\epsilon x,\epsilon y ):=\chi(\epsilon (x+  (\lambda_r\Theta)^{-1} y ),\epsilon y )$ and hence in the limit $\epsilon\rightarrow 0$ one obtains
 			\begin{align*} 
 		  	  &\lim_{\epsilon\rightarrow 0}\sum_{r =1}^{m}\left (
 		  	  \iint  d x\, d y \,\chi_1(\epsilon x,\epsilon y )\,e^{-ixy}\, 	 U(\lambda_r \Theta x)(A)U(-\lambda_r \Theta x+y)\otimes W  B_r  W^{-1}\right)\Psi\\&=
 		  	  \lim_{\epsilon\rightarrow 0}\sum_{r,l }^{m}\left (
 		  	  \iint  d x\, d y \,\chi_1(\epsilon x,\epsilon y )\,e^{-ixy}\, 	 U(\lambda_r \Theta x)(A)U(-\lambda_l \Theta x )U(y)\otimes { W  \delta_{rl}B_rW^{-1}}   \right)\Psi\\& = 
 		  	  \lim_{\epsilon\rightarrow 0} \left (
 		  	  \iint  d x\, d y \,\chi_1(\epsilon x,\epsilon y )\,e^{-ixy}\, 	 U^{\tau}(  \Theta x)(A)U^{\tau}(-\Theta x)U(y) \right)\Psi\\&
=	 (2\pi)^d\int \alpha^{\tau}_{
 			\Theta x} (A)dE(x)\Psi.		\end{align*} The importance to prove it here using strong limits is motivated by introducing to the reader how to work with   non-abelian deformations. However, since we know how the non-abelian deformation is related to the abelian one by Formula (\ref{eqnaa}), assertion $(i)$ follows from the original work by a one-liner, i.e.
 		\begin{align*}
 			A^{{\Theta_{\tau}}}&=\sum_{r=1}^{m}	A_{{	\lambda_r \Theta }}\otimes W  B_r W^{-1} =\sum_{r=1}^{m}\,_{{	\lambda_r \Theta }}A\otimes W  B_r W^{-1}=\, _{{\Theta_{\tau}}}A,
 		\end{align*}
 		where we used $A_{\Theta}=\,_{\Theta}A$ (see Equation \ref{eq1}) from the abelian case. \newline\newline
 	$(ii)$ The prove of the second assertion is done analogously to \cite[Lemma 2.2.]{BLS} by using the former assertion. 	   Since it is straightforward we skip the proof. \newline\newline
 	$(iii)$ This can be proven by rewriting the nonabelian operators in terms of the abelian deformed ones (see Lemma \ref{eis} and \ref{eis2}), i.e. 
 	 			\begin{align*} 
 	 			[A^{{\Theta_{\tau}}},F^{{\Theta_{\tau}}}]&=\sum_{l,r=1}^{m}	[A_{\lambda_r{\Theta}},F_{\lambda_l{\Theta}}]\otimes WB_rB_lW^{-1}\\&=
 	 			\sum_{ r=1}^{m}	[A_{\lambda_r{\Theta}},F_{\lambda_r{\Theta}}]\otimes WB_r W^{-1},
 			\end{align*} 
and hence from this point forward the proof of \cite[Proposition 2.10]{BLS} can be adapted to prove this assertion since the requirement $[\alpha^{\tau}_{\Theta
	x}(A),\alpha^{\tau}_{-\Theta y}(F)]=0$ can be translated in to the abelian case as follows,
	\begin{align*} 
[\alpha^{\tau}_{\Theta
	x}(A),\alpha^{\tau}_{-\Theta y}(F)]&=\sum_{l,r=1}^{m}[\alpha_{\lambda_r\Theta
	x}(A),\alpha_{-\lambda_l\Theta y}(F)]\otimes WB_rB_l W^{-1}\\&=
\sum_{ r=1}^{m}[\alpha_{\lambda_r\Theta
	x}(A),\alpha_{-\lambda_r\Theta y}(F)]\otimes WB_r  W^{-1}.
	\end{align*}

 	\end{proof}
 	In the original work \cite{BLS} the authors proved that warped convolutions supply isometric
 	representations of Rieffel's strict deformations of $C^{*}$-dynamical systems with
 	actions of $\mathbb{R}^d$, \cite{RI}. In the following lemma we introduce the deformed product, also known as the Rieffel product.
 	\newline
 	\begin{lemma}\label{l2.1}
 		Let $\Theta$ be a real skew-symmetric matrix on $\mathbb{R}^d$ and let $  {A},   {F} \in
 		\mathcal{C}^{\infty}$. Then 
 		\begin{equation*}
 		{A}_{\Theta}   {F}_{\Theta} \Phi= (A\times_{\Theta }F)_{\Theta}\Phi, \qquad
 		\Phi\in\mathcal{D}.
 		\end{equation*}
 		where $\times_{\Theta}$ is the \textbf{Rieffel product} on $\mathcal{C}^{\infty}$, defined by 
 		\begin{equation}\label{dp0}
 		(A\times_{\Theta}F )\Phi=(2\pi)^{-d}
 		\lim_{\epsilon\rightarrow 0}\iint dx\, dy \,\chi(\epsilon x,\epsilon y )\,e^{-ixy} \, \alpha_{\Theta x}(A)\alpha_{y}(F)\Phi.
 		\end{equation}
 	\end{lemma}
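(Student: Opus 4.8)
The plan is to reduce the claim to the regularised (strong-limit) representations of the warped convolution recalled in Section~\ref{s2} and then to reorganise the resulting oscillatory integrals. This is precisely the step by which Buchholz--Lechner--Summers established that $A\mapsto A_{\Theta}$ carries the operator product into Rieffel's product, so the task is essentially to transcribe the relevant part of \cite{BLS} (cf.\ Rieffel \cite{RI}) into the present notation; since the statement is purely abelian, the non-abelian machinery of this paper plays no role.

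I would start by unfolding the right-hand side. Writing the outer warped convolution in the form with the unitary on the right --- permissible by \eqref{eq1}, i.e.\ $A_{\Theta}={}_{\Theta}A$ ---
\[
(A\times_{\Theta}F)_{\Theta}\Phi=(2\pi)^{-d}\lim_{\epsilon\to0}\iint dp\,dq\;\chi(\epsilon p,\epsilon q)\,e^{-ipq}\,\alpha_{\Theta p}(A\times_{\Theta}F)\,U(q)\,\Phi ,
\]
and inserting the definition \eqref{dp0} of the Rieffel product with the automorphism $\alpha_{\Theta p}$ pushed inside, using $\alpha_{\Theta p}\big(\alpha_{\Theta x}(A)\alpha_{y}(F)\big)=\alpha_{\Theta p+\Theta x}(A)\,\alpha_{\Theta p+y}(F)$, one obtains a fourfold integral (prefactor $(2\pi)^{-2d}$) whose integrand is $\chi(\epsilon p,\epsilon q)\chi(\epsilon x,\epsilon y)\,e^{-ipq}e^{-ixy}\,\alpha_{\Theta(p+x)}(A)\,\alpha_{\Theta p+y}(F)\,U(q)\,\Phi$.

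Now come the changes of variables. Shifting $x\mapsto x-p$ frees the $A$-factor, at the price of a cross-phase $e^{ipy}$; the further shift $q\mapsto q+y$ absorbs it and leaves $\alpha_{\Theta p+y}(F)\,U(q+y)$, which by the covariance identity $\alpha_{\Theta p+y}(F)\,U(q+y)=U(y)\,\alpha_{\Theta p}(F)\,U(q)$ (valid since $U(y)\,\alpha_{\Theta p}(F)\,U(-y)=\alpha_{y+\Theta p}(F)$) brings the integrand to the factorised shape $\big[e^{-ixy}\alpha_{\Theta x}(A)U(y)\big]\big[e^{-ipq}\alpha_{\Theta p}(F)U(q)\big]\Phi$, the only non-factorised remnant being the $\epsilon$-dependent arguments $\chi(\epsilon(x-p),\epsilon y)$ and $\chi(\epsilon p,\epsilon(q+y))$ of the two cutoffs. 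Letting $\epsilon\to0$, the $(p,q)$-integration produces $(2\pi)^{d}F_{\Theta}$ and the $(x,y)$-integration $(2\pi)^{d}A_{\Theta}$, so that after the prefactor $(2\pi)^{-2d}$ one is left with $(A\times_{\Theta}F)_{\Theta}\Phi=A_{\Theta}F_{\Theta}\Phi$ on $\mathcal{D}$.

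The delicate point --- and the only one --- is the analytic justification of these formal moves: interchanging the $\epsilon$-limit with the integrations, invoking Fubini for the operator-valued integrals, performing the substitutions under the limit sign, and showing that the cross-dependence of the two cutoffs on the ``foreign'' variables washes out as $\epsilon\to0$. This is exactly what is carried out in \cite[Sect.~2.2]{BLS}: because $A,F\in\mathcal{C}^{\infty}$ their $\alpha$-derivatives are bounded, so the integrands decay in the ``sharp'' variables fast enough for dominated convergence to apply, and the stability of $\mathcal{D}$ under $U$ keeps every intermediate vector in the domain. I expect the entire proof to amount to a direct adaptation of that estimate.
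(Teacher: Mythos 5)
The paper offers no proof of this lemma at all --- it is quoted from \cite{BLS} as the known statement that warped convolutions represent Rieffel's deformed product --- so the only thing to compare against is the original BLS argument, and your sketch is a correct reconstruction of exactly that argument: unfolding both deformations into regularised oscillatory integrals, the shifts $x\mapsto x-p$, $q\mapsto q+y$, the covariance identity $\alpha_{\Theta p+y}(F)U(q+y)=U(y)\alpha_{\Theta p}(F)U(q)$, and the deferral of the limit-interchange estimates to \cite[Sect.~2.2]{BLS} all match the standard derivation. The one tacit ingredient worth making explicit is that $A\times_{\Theta}F$ again lies in $\mathcal{C}^{\infty}$ (Rieffel's theorem), which is needed both for the outer deformation to be defined and for the identity $(A\times_{\Theta}F)_{\Theta}={}_{\Theta}(A\times_{\Theta}F)$ invoked in your first step.
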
$\,$\newline
 	Next we give a lemma stating how the non-abelian deformations supply the isometry.
	\begin{lemma}
		Let $\Theta$ be a real skew-symmetric matrix on $\mathbb{R}^d$ and let $A, F\in \mathcal{C}^{\infty}$.   Then, on the domain $\mathcal{D}_{ {Q}}$, we have
		\begin{equation*}
		{A} ^{{\Theta_{\tau}}}   {F}^{{\Theta_{\tau}}}= (A\times^{{\Theta_{\tau,\mathbb{I}}} }F)^{{\Theta_{\tau}}}.
		\end{equation*}
		where  the non-abelian Rieffel product is given 
		by a natural extension of the original definition for $\Phi\in\mathcal{D}_{ {Q}}$ as,  
			\begin{equation*} 
			(A\times^{{\Theta_{\tau,\mathbb{I}}}}F )\Phi=(2\pi)^{-d}
			\lim_{\epsilon\rightarrow 0}
			\iint dx\, dy \,\chi(\epsilon x,\epsilon y )\,e^{-ixy} \, \alpha^{\tau}_{\Theta x}(A\otimes \mathbb{I}_{m\times m})\alpha^{\mathbb{I}}_{y}(F\otimes \mathbb{I}_{m\times m})\Phi.
			\end{equation*}
		where the adjoint action $\alpha^{\mathbb{I}}_y$ is defined by the unitary operator $U(y)\otimes\mathbb{I}_{m\times m}$.

	\end{lemma}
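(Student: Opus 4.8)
The strategy is to push everything through the block decomposition of Equation \eqref{eqnaa} and reduce the identity to the abelian Rieffel product of Lemma \ref{l2.1}, in the same spirit as the proof of Lemma \ref{wcl2}(iii).

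First I would rewrite the left-hand side. By \eqref{eqnaa} applied to $A$ and to $F$, together with the idempotency relations $B_rB_l=\delta_{rl}B_l$ of Lemma \ref{eis},
\begin{align*}
A^{\Theta_\tau}F^{\Theta_\tau}
&=\Big(\sum_{r=1}^{m}A_{\lambda_r\Theta}\otimes WB_rW^{-1}\Big)\Big(\sum_{l=1}^{m}F_{\lambda_l\Theta}\otimes WB_lW^{-1}\Big)\\
&=\sum_{r,l=1}^{m}A_{\lambda_r\Theta}F_{\lambda_l\Theta}\otimes WB_rB_lW^{-1}
=\sum_{r=1}^{m}A_{\lambda_r\Theta}F_{\lambda_r\Theta}\otimes WB_rW^{-1}.
\end{align*}
This composition is well defined on $\mathcal{D}_{Q}$: since $Y^\alpha\tau_\alpha$ has no zero eigenvalue, every vector in $\mathcal{D}_{Q}$ has each of its blocks in the abelian smooth domain $\mathcal{D}$, which the abelian warped convolution leaves stable. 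Now I apply Lemma \ref{l2.1} with the real skew-symmetric matrix $\lambda_r\Theta$ in place of $\Theta$ to each summand, obtaining $A_{\lambda_r\Theta}F_{\lambda_r\Theta}=(A\times_{\lambda_r\Theta}F)_{\lambda_r\Theta}$ on $\mathcal{D}$, so that on $\mathcal{D}_{Q}$ one gets $A^{\Theta_\tau}F^{\Theta_\tau}=\sum_{r=1}^{m}(A\times_{\lambda_r\Theta}F)_{\lambda_r\Theta}\otimes WB_rW^{-1}$.

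Next I would unfold the non-abelian Rieffel product. Writing $\alpha^\tau_{\Theta x}(A\otimes\mathbb{I}_{m\times m})=U^\tau(\Theta x)(A\otimes\mathbb{I}_{m\times m})U^\tau(-\Theta x)$ and inserting the decomposition of $U^\tau$ from Lemma \ref{eis}, the relations $B_rB_l=\delta_{rl}B_l$ collapse it to $\sum_{r=1}^{m}\alpha_{\lambda_r\Theta x}(A)\otimes WB_rW^{-1}$, while $\alpha^{\mathbb{I}}_y(F\otimes\mathbb{I}_{m\times m})=\alpha_y(F)\otimes\mathbb{I}_{m\times m}$ because $U(y)\otimes\mathbb{I}_{m\times m}$ acts trivially on the matrix factor. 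Substituting into the defining strong-limit integral and pulling the finite sum outside the limit, comparison with \eqref{dp0} (again with $\lambda_r\Theta$ replacing $\Theta$) gives $A\times^{\Theta_{\tau,\mathbb{I}}}F=\sum_{r=1}^{m}(A\times_{\lambda_r\Theta}F)\otimes WB_rW^{-1}$. Since the Rieffel product preserves smooth elements, each $A\times_{\lambda_r\Theta}F$ lies in $\mathcal{C}^{\infty}$, so this is a finite sum of smooth operators tensored with fixed matrices, and the non-abelian warped convolution of Definition \ref{wcna} extends to it by linearity. The identical computation that produced \eqref{eqnaa} then yields $\big((A\times_{\lambda_r\Theta}F)\otimes WB_rW^{-1}\big)^{\Theta_\tau}=(A\times_{\lambda_r\Theta}F)_{\lambda_r\Theta}\otimes WB_rW^{-1}$, hence $(A\times^{\Theta_{\tau,\mathbb{I}}}F)^{\Theta_\tau}=\sum_{r=1}^{m}(A\times_{\lambda_r\Theta}F)_{\lambda_r\Theta}\otimes WB_rW^{-1}$. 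This is exactly the expression obtained above for $A^{\Theta_\tau}F^{\Theta_\tau}$, which proves the claim.

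The manipulations with the $B_r$ and with the strong limits are routine and parallel Lemma \ref{eis2}. The step I expect to demand the most care is the middle one: checking cleanly that, with $\alpha^\tau$ (generated by $U^\tau$) acting on the first argument and $\alpha^{\mathbb{I}}$ (generated by $U\otimes\mathbb{I}_{m\times m}$) on the second, the matrix factors only ever produce $WB_rB_lW^{-1}=\delta_{rl}WB_rW^{-1}$, so that $A\times^{\Theta_{\tau,\mathbb{I}}}F$ genuinely lands in a direct sum of smooth blocks with the stated structure; this is what makes the outer non-abelian warped convolution legitimate and puts both sides in the same space. One must also keep track of the domain bookkeeping — all the block-wise abelian identities hold on $\mathcal{D}$ whereas the assertion is on $\mathcal{D}_{Q}$ — which is precisely where the no-zero-eigenvalue hypothesis on $Y^\alpha\tau_\alpha$ is used.
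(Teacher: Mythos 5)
Your proposal is correct and follows essentially the same route as the paper: decompose both sides via the block formula $\sum_{r}(\cdot)_{\lambda_r\Theta}\otimes WB_rW^{-1}$, collapse the double sums with $B_rB_l=\delta_{rl}B_l$, and invoke the abelian Rieffel identity of Lemma \ref{l2.1} for each eigenvalue $\lambda_r\Theta$. In fact you spell out the final step --- that deforming $\sum_r(A\times_{\lambda_r\Theta}F)\otimes WB_rW^{-1}$ block-wise returns $\sum_r(A\times_{\lambda_r\Theta}F)_{\lambda_r\Theta}\otimes WB_rW^{-1}$ --- more explicitly than the paper, which dispatches it with a single sentence.
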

	\begin{proof}For the following proof we make extensive use of Lemma \ref{eis}. The product of the non-abelian deformed operators $A,F\in \mathcal{C}^{\infty}$  reads on vectors of the domain $\mathcal{D}_{ {Q}}$,
		\begin{align*} {A} ^{{\Theta_{\tau}}}   {F}^{{\Theta_{\tau}}} &= \sum_{l,r=1}^{m}
		A_{\lambda_r{\Theta}}F_{\lambda_l{\Theta}}\otimes W  B_r B_l W^{-1}  \\&=
		\sum_{r=1}^{m}
		A_{\lambda_r{\Theta}}F_{\lambda_r{\Theta}}\otimes W  B_r W^{-1} \\&=
		\sum_{r=1}^{m}	(A\times_{\lambda_r\Theta }F)_{\lambda_r\Theta}\otimes W  B_r W^{-1} ,
		\end{align*}
		where in the last lines we used properties of the matrices $B_r$ and the isometry between warped convolutions and the Rieffel product given in Lemma \ref{l2.1}. Next we turn our attention to the  non-abelian Rieffel product given on $\mathcal{D}_{ {Q}}$   by, 
			\begin{align*}  (A\times^{{\Theta_{\tau,\mathbb{I}}}}F)&= 
			(2\pi)^{-d}
			\lim_{\epsilon\rightarrow 0}
			\iint dx\, dy \,\chi(\epsilon x,\epsilon y )\,e^{-ixy} \, \alpha^{\tau}_{\Theta x}(A\otimes \mathbb{I}_{m\times m})\alpha^{\mathbb{I}}_{y}(F\otimes \mathbb{I}_{m\times m}) \\&=  
			(2\pi)^{-d}\sum_{ r=1}^{m}
			\lim_{\epsilon\rightarrow 0}
			\iint dx\, dy \,\chi(\epsilon x,\epsilon y )\,e^{-ixy} \, \alpha_{\lambda_r\Theta x}(A )\alpha_{  y}(F )\otimes W  B_r  W^{-1} 
			\\&=  	\sum_{r=1}^{m}	(A\times_{\lambda_r \Theta }F) \otimes W  B_r W^{-1}.
			\end{align*}
			By deforming the  non-abelian Rieffel product $\times^{{\Theta_{\tau,\mathbb{I}}} }$ of $A,F$ the assertion follows. 
	\end{proof}
 This lemma is important in the context of algebraic quantum field theory and will be investigated in future works.
 	\\\\
 	Next we turn our attention to the well-definedness of the non-abelian deformation in the case of unbounded operators. 
 	To prove that the non-abelian deformation   (\ref{wcna}) holds as well for an unbounded operator
 	$A$, defined on a dense domain $\mathcal{D}(A)\subset \mathcal{H}$ of some separable Hilbert space $\mathcal{H}$, let us
 	consider the deformed operator $A^{{\Theta_{\tau}}}$ as follows
 	\begin{align*}
 	\langle \Psi,A^{{\Theta_{\tau}}}\Phi\rangle&=
 	(2\pi)^{-d}
 	\lim_{\epsilon\rightarrow 0}
 	\iint  \, dx \,  dy \, e^{-ixy}  \, \chi(\epsilon x,\epsilon
 	y) {\langle \Psi, 
 		U(y)\alpha^{\tau}_{\Theta x}(A)\Phi\rangle} \nonumber \\&
 	=:
 	(2\pi)^{-d}
 	\lim_{\epsilon\rightarrow 0}
 	\iint  \, dx\,  dy \, e^{-ixy}  \, \chi(\epsilon x,\epsilon
 	y) \, b(x,y)
 	\end{align*}
 	for $\Psi, \Phi \in \mathcal{D}^{\infty}(A)\otimes \mathbb{C}^m:=\{
 	\varphi \in\mathcal{D}(A)| U(x)\varphi \in \mathcal{D}(A)$  is smooth in $\|\cdot\|_{\mathcal{H}}
 	\}\otimes \mathbb{C}^m$ and 	   $\chi \in\mathscr{S}(\mathbb{R}^d\times\mathbb{R}^d)$ with $\chi(0,0)=1$. Note that   the scalar product is   w.r.t. $\mathcal{H}
 	\otimes \mathbb{C}^m$.
 	The expression is well-defined if $b(x,y)$ is a symbol which is given in the following definition (\cite[Section 7.8, Definition 7.8.1]{H}).

 	\begin{definition}
 		Let $r$, $\rho$, $\delta$, be real numbers with $0<\rho\leq1$ and $0\leq \delta <1$. Then we
 		denote by $S^{r}_{\rho,\delta}(X\times \mathbb{R}^d)$, the set of all $b\in C^{\infty}(X\times
 		\mathbb{R}^d)$ such that for every compact set $K\subset X$ and all  $\gamma, \kappa$  the estimate 
 		\begin{equation*}
 		|\partial_{x}^{\gamma}\partial_{y}^{\kappa}b(x,y)|\leq
 		C_{\gamma,\kappa,K}(1+|x|)^{r-\rho|\gamma|+\delta|\kappa|},\qquad x\in K, \,\, y\in
 		\mathbb{R}^d,  
 		\end{equation*}
 		is valid for some constant $C_{\gamma,\kappa,K}$. The elements  $S^{r}_{\rho,\delta}$ are called
 		symbols of order $r$ and type $\rho,\delta$. Note that  $\gamma, \kappa$ are multi-indices and  $|\gamma|$, $|\kappa|$ are the corresponding sums of the index-components. 
 	\end{definition}

 	\begin{lemma}\label{lfhs}
 		Assume that the derivatives of the adjoint action of $A$ w.r.t. the unitary operator $U^{\tau}$ are polynomially bounded on vectors in $\mathcal{D}^{\infty}(A)\otimes \mathbb{C}^m$, i.e.
 		\begin{equation}\label{pb}
 		\|\partial_{x}^{\gamma}\alpha^{\tau}_{\Theta x}(A)\Phi\| \leq  C_{\gamma}(1+|x|)^{r-\rho|\gamma|},\qquad \forall \Phi \in
 		\mathcal{D}^{\infty}(A)\otimes \mathbb{C}^m.
 		\end{equation}
 		Then, $b(x,y)$ belongs to the symbol class $S^{r}_{\rho,0}$ for $\Psi, \Phi \in
 		\mathcal{D}^{\infty}(A)\otimes \mathbb{C}^m$ and therefore the non-abelian deformation  of the unbounded operator $A$ is given as a well-defined
 		oscillatory integral.
 	\end{lemma}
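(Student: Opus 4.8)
The plan is to check that, for fixed $\Psi,\Phi\in\mathcal{D}^{\infty}(A)\otimes\mathbb{C}^{m}$, the function $b(x,y)=\langle\Psi,\,U(y)\,\alpha^{\tau}_{\Theta x}(A)\,\Phi\rangle$ (unit matrices suppressed, as agreed) is jointly smooth and that all of its derivatives satisfy the estimates defining the class $S^{r}_{\rho,0}$, and then to invoke the theory of oscillatory integrals \cite[Section 7.8]{H}: since the regularised double integral defining $\langle\Psi,A^{\Theta_{\tau}}\Phi\rangle$ carries the non-degenerate quadratic phase $-xy$, membership of $b$ in a symbol class guarantees that the limit $\epsilon\to0$ exists and is independent of the cutoff $\chi$, i.e. the non-abelian deformation exists as a well-defined oscillatory integral. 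Since $b$ depends on $\Psi$ and $\Phi$, the constants in the symbol estimates may depend on them.

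First I would differentiate in $x$. By hypothesis $x\mapsto\alpha^{\tau}_{\Theta x}(A)\Phi$ is norm-smooth with polynomially bounded derivatives, so differentiation passes through the continuous pairing and gives $\partial_{x}^{\gamma}b(x,y)=\langle\Psi,\,U(y)\,\partial_{x}^{\gamma}\alpha^{\tau}_{\Theta x}(A)\Phi\rangle$; Cauchy--Schwarz together with unitarity of $U(y)$ and the assumed bound~(\ref{pb}) then yields $|\partial_{x}^{\gamma}b(x,y)|\le\|\Psi\|\,C_{\gamma}(1+|x|)^{r-\rho|\gamma|}$, uniformly in $y\in\mathbb{R}^{d}$.

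The one subtle step is differentiation in $y$, since smoothness of $\alpha^{\tau}_{\Theta x}(A)\Phi$ under $U$ is not assumed; the fix is to move $U(y)$ onto the left entry of the inner product. Writing $b(x,y)=\langle U(-y)\Psi,\,\alpha^{\tau}_{\Theta x}(A)\Phi\rangle$ and using $U(-y)=\int e^{-ixy}\,dE(x)$ together with the fact that $\Psi$ is a smooth vector for $U$ — so that $\int|x|^{2k}\,d\|E(x)\Psi\|^{2}<\infty$ for every $k$ — one obtains
\[
\partial_{y}^{\kappa}\bigl(U(-y)\Psi\bigr)=\int(-ix)^{\kappa}e^{-ixy}\,dE(x)\Psi=:U(-y)\,\Psi_{\kappa},\qquad \|\Psi_{\kappa}\|=\Bigl(\int|x^{\kappa}|^{2}\,d\|E(x)\Psi\|^{2}\Bigr)^{1/2}<\infty,
\]
the norm being finite and independent of $y$. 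Combining this with the previous step,
\[
\partial_{x}^{\gamma}\partial_{y}^{\kappa}b(x,y)=\bigl\langle U(-y)\Psi_{\kappa},\;\partial_{x}^{\gamma}\alpha^{\tau}_{\Theta x}(A)\Phi\bigr\rangle,\qquad |\partial_{x}^{\gamma}\partial_{y}^{\kappa}b(x,y)|\le\|\Psi_{\kappa}\|\,C_{\gamma}\,(1+|x|)^{r-\rho|\gamma|}.
\]
As the exponent contains no $|\kappa|$ and the bound holds uniformly in $(x,y)$ — in particular on every compact set of the first variable — this places $b$ in $S^{r}_{\rho,0}$; joint smoothness of $b$ follows from the norm-smoothness of $y\mapsto U(-y)\Psi$ and of $x\mapsto\alpha^{\tau}_{\Theta x}(A)\Phi$ together with bilinearity of the inner product.

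Finally, I would record that this is consistent with, and can be reduced to, the abelian case: by the factorisation~(\ref{eqnaa}) one has $\alpha^{\tau}_{\Theta x}(A)=\sum_{r=1}^{m}\alpha_{\lambda_{r}\Theta x}(A)\otimes WB_{r}W^{-1}$, so hypothesis~(\ref{pb}) is equivalent to the polynomial boundedness of the ordinary adjoint actions $x\mapsto\alpha_{\lambda_{r}\Theta x}(A)$, and the symbol estimate above splits term by term into the estimates established for the abelian deformation in \cite[Section 2.2]{BLS}, the bounded matrices $WB_{r}W^{-1}$ contributing only multiplicative constants. Having $b\in S^{r}_{\rho,0}$, the oscillatory-integral theorem of \cite[Section 7.8]{H} applies verbatim and yields the stated well-definedness of $A^{\Theta_{\tau}}$; the operator ${}_{\Theta_{\tau}}A$ is handled identically, the only change being the order of $U(y)$ and $\alpha^{\tau}_{\Theta x}(A)$, which affects none of the estimates. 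The main obstacle I anticipate is the $y$-derivative step together with the justification of differentiation under the pairing and the convergence of the oscillatory integral; everything else is bookkeeping on top of \cite{BLS} and \cite{H}.
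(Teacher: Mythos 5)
Your proposal is correct and follows essentially the same route as the paper: the symbol estimate is obtained by Cauchy--Schwarz, with the $y$-derivatives absorbed into the smooth vector $\Psi$ (your $\|\Psi_{\kappa}\|$ is the paper's $\|(-iZ)^{\kappa}\Psi\|$) and the $x$-derivatives controlled by hypothesis~(\ref{pb}), after which the oscillatory-integral machinery of \cite{H} gives well-definedness. The only differences are cosmetic: you justify the $y$-differentiation more explicitly by moving $U(y)$ to the left slot, and you cite \cite{H} where the paper instead writes out the regularised integral with a product cutoff; your closing reduction to the abelian case via~(\ref{eqnaa}) is a consistency check the paper uses elsewhere but not in this proof.
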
 
 	\begin{proof}
 		For the derivatives of the  scalar product $b(x,y)$ the following estimates hold 
 		\begin{align*} \partial_{x}^{\gamma}\partial_{y}^{\kappa}
 		{\langle \Psi, 
 			U(y)\alpha^{\tau}_{\Theta x}(A)\Phi\rangle}&\leq \| (-iZ)^{\kappa} \Psi\| \| 
 		\partial_{x}^{\gamma}\alpha^{\tau}_{\Theta x}(A)\Phi\|\\&\leq\underbrace{
 			\| (-iZ)^{\kappa} \Psi\|  C_{\gamma}}_{ =:C_{\gamma,\kappa,K}}(1+|x|)^{r-\rho|\gamma|},
 		\end{align*}
 		where in the last lines we used Assumption (\ref{pb}). To prove the second part of the statement we first define $r(\rho):=r-\rho|\gamma|$ and use the former inequality, i.e. 
 		\begin{align*}
 		&(2\pi)^{-d}
 		\lim_{\epsilon\rightarrow 0}
 		\iint  \, dx\,  dy \, e^{-ixy}  \, \chi(\epsilon x,\epsilon
 		y) \,\partial_{x}^{\gamma}\partial_{y}^{\kappa} b(x,y)
 		\\ &\leq(2\pi)^{-d}
 		\lim_{\epsilon\rightarrow 0}
 		\iint  \, dx\,  dy \, e^{-ixy}  \, \chi(\epsilon x,\epsilon
 		y) \,C_{\gamma,\kappa,K}(1+|x|)^{r(\rho)}
 		\\&\leq  (2\pi)^{-d} C_{\gamma,\kappa,K}
 		\lim_{\varepsilon_1\rightarrow 0}  \left(
 		\int dx \lim_{\varepsilon_2\rightarrow 0} 
 		\left(\int dy  e^{-ixy}
 		\chi_2(\varepsilon_2 y)\right)\,\chi_1(\varepsilon_1  x)\,
 		\right)(1+|x|)^{r(\rho )}
 		\\
 		&=       (2\pi)^{-d/2}C_{\gamma,\kappa,K}
 		\lim_{\varepsilon_1\rightarrow 0}  \left(
 		\int dx \,
 		\delta( {x} )\,\chi_1(\varepsilon_1  x)   (1+|x|)^{r(\rho )} \right) =C_{\gamma,\kappa,K}.
 		\end{align*}
 		Note that the oscillatory integral does not depend on the chosen cut-off function. Hence, we can proceed as in
 		\cite{RI}, where the regulator is chosen as  $\chi (\varepsilon x,\varepsilon y)= \chi_2(\varepsilon_2 x
 		)\chi_1(\varepsilon_1 y)$ 
 		with $\chi  \in\mathscr{S}(\mathbb{R}^{d}\times\mathbb{R}^{d})$ and $\chi_{l}(0 )=1$, $l=1$, $2$,
 		and we obtained the delta
 		distribution $\delta(x)$ in the limit $\varepsilon_2 \rightarrow
 		0$, \cite[Section 7.8, Equation 7.8.5]{H}.   For another proof of the second statement we refer the reader to \cite{LW} and \cite[Theorem 1]{AA}. 
 	\end{proof}
 	Next we concentrate on the self-adjointness of an operator $A$ that is deformed with non-abelian warped convolutions where the  undeformed version is self-adjoint. 
 	\begin{theorem}\label{lfhs1}
 		The operator $A^{{\Theta_{\tau}}}-A\otimes\mathbb{I}_{m\times m}$ is a symmetric operator on the dense domain $\mathcal{D}^{\infty}(A)\otimes \mathbb{C}^m$. Furthermore, let $A$ be an unbounded self-adjoint operator that fulfills the assumptions of polynomial boundedness w.r.t. the non-abelian adjoint action. Then $A^{{\Theta_{\tau}}}$ is essentially self-adjoint on  $\mathcal{D}^{\infty}(A)\otimes \mathbb{C}^m$.
 	\end{theorem}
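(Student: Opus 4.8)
The natural strategy is to reduce both assertions to the abelian theory by means of the tensor decomposition (\ref{eqnaa}). Set $P_r:=W B_r W^{-1}$; because $Y^{\alpha}\tau_{\alpha}$ is hermitian with $W$ unitary, the $P_r$ are mutually orthogonal self-adjoint projections on $\mathbb{C}^m$ with $\sum_{r=1}^{m}P_r=\mathbb{I}_{m\times m}$, so $\mathbb{C}^m=\bigoplus_{r=1}^{m}V_r$ with $V_r:=P_r\mathbb{C}^m$, and (\ref{eqnaa}) becomes
\begin{equation*}
A^{\Theta_{\tau}}=\bigoplus_{r=1}^{m}\bigl(A_{\lambda_r\Theta}\otimes\mathbb{I}_{V_r}\bigr),\qquad
A^{\Theta_{\tau}}-A\otimes\mathbb{I}_{m\times m}=\bigoplus_{r=1}^{m}\bigl((A_{\lambda_r\Theta}-A)\otimes\mathbb{I}_{V_r}\bigr)
\end{equation*}
on $\mathcal{H}\otimes\mathbb{C}^m=\bigoplus_r\mathcal{H}\otimes V_r$, with $\mathcal{D}^{\infty}(A)\otimes\mathbb{C}^m=\bigoplus_r\mathcal{D}^{\infty}(A)\otimes V_r$. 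Since $\alpha^{\tau}_{\Theta x}(A)=\sum_r\alpha_{\lambda_r\Theta x}(A)\otimes P_r$ by Lemma \ref{eis}, the hypothesis (\ref{pb}) restricts on each summand $\mathcal{D}^{\infty}(A)\otimes V_r$ to the polynomial bound $\|\partial_x^{\gamma}\alpha_{\lambda_r\Theta x}(A)\phi\|\le C_{\gamma}(1+|x|)^{r-\rho|\gamma|}$ for the abelian adjoint action with deformation matrix $\lambda_r\Theta$ (here $\lambda_r\neq0$ by the no-zero-eigenvalue requirement of Definition \ref{nao}). Because symmetry and essential self-adjointness pass to finite orthogonal direct sums of operators defined on the direct sum of their domains, and are preserved under tensoring with a finite-dimensional identity, it suffices to establish both claims for the \emph{abelian} warped convolution $A_{\sigma}$, $\sigma:=\lambda_r\Theta$, on $\mathcal{D}^{\infty}(A)$, and then reassemble.

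For symmetry I would proceed as in Lemma \ref{wcl2}(i): for $\phi,\psi\in\mathcal{D}^{\infty}(A)$ write $\langle\psi,A_{\sigma}\phi\rangle$ as the oscillatory integral supplied by Lemma \ref{lfhs} (abelian case), take the complex conjugate of $\langle\phi,A_{\sigma}\psi\rangle$, and use that $\alpha_{\sigma x}(A)=U(\sigma x)AU(-\sigma x)$ is self-adjoint since $A=A^{*}$. Moving $\alpha_{\sigma x}(A)$ across the inner product on $\mathcal{D}^{\infty}(A)$, substituting $y\mapsto-y$, and invoking the independence of the oscillatory integral from the regulator $\chi$, one obtains $\langle A_{\sigma}\psi,\phi\rangle=\langle\psi,{}_{\sigma}A\,\phi\rangle$; the identity $A_{\sigma}={}_{\sigma}A$ of (\ref{eq1}) then shows $A_{\sigma}$ is symmetric, whence $A_{\sigma}-A$ is symmetric because $A$ is. Reassembling the summands gives the first assertion of the theorem.

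For essential self-adjointness, with $A$ self-adjoint and (\ref{pb}) in force, I would write $A_{\sigma}=A+B_{\sigma}$ with $B_{\sigma}:=A_{\sigma}-A$ symmetric on $\mathcal{D}^{\infty}(A)$. The operator $A$ is self-adjoint on $\mathcal{D}(A)$ and essentially self-adjoint on $\mathcal{D}^{\infty}(A)$, the latter being a core (density and stability as recalled after Theorem \ref{tsa}, cf. \cite[Lemma 3.1]{Muc1}). It then remains to control $B_{\sigma}$ relative to $A$: if $B_{\sigma}$ is $A$-bounded with relative bound strictly less than one (ideally infinitesimally small) the Kato--Rellich theorem delivers self-adjointness of $A_{\sigma}$ on $\mathcal{D}(A)$ and essential self-adjointness on the core $\mathcal{D}^{\infty}(A)$, and the finite direct sum then yields essential self-adjointness of $A^{\Theta_{\tau}}$ on $\mathcal{D}^{\infty}(A)\otimes\mathbb{C}^m$. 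For the relative bound I would exploit that $\alpha_{\sigma x}(A)-A=\int_0^1\tfrac{d}{dt}\alpha_{t\sigma x}(A)\,dt$ is, by (\ref{pb}) with $|\gamma|=1$, one power of $|x|$ milder than $\alpha_{\sigma x}(A)$ itself, feed this into the symbol estimate of Lemma \ref{lfhs} for the integrand $b(x,y)=\langle\psi,U(y)(\alpha_{\sigma x}(A)-A)\phi\rangle$, and read off a bound $\|B_{\sigma}\phi\|\le\varepsilon\|A\phi\|+C_{\varepsilon}\|\phi\|$ from the $y$-localization (the $\delta(x)$ appearing in the limit) beating the residual $(1+|x|)$-growth.

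The bookkeeping around (\ref{eqnaa}), the direct-sum and tensor manipulations, and the symmetry computation are routine; the only nuisance in the latter is keeping the $U$-smooth domain $\mathcal{D}^{\infty}(A)$ and the $U^{\tau}$-smooth domain $\mathcal{D}_{Q}$ straight, which the stability hypotheses take care of. The genuine obstacle is the final step: extracting an infinitesimally small relative $A$-bound for the deformation part $B_{\sigma}=A_{\sigma}-A$ from the polynomial boundedness (\ref{pb}) alone, i.e. checking that the order parameter $r$ and the oscillatory localization conspire in the right direction. If Kato--Rellich turns out too rigid one can instead try Nelson's commutator theorem with $N:=\mathbb{I}+|A|$, verifying $\|A^{\Theta_{\tau}}\phi\|\le c\,\|N\phi\|$ and $|\langle A^{\Theta_{\tau}}\phi,N\phi\rangle-\langle N\phi,A^{\Theta_{\tau}}\phi\rangle|\le c\,\|N^{1/2}\phi\|^{2}$, but both bounds again rest on (\ref{pb}) and are of comparable difficulty.
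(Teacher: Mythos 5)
Your reduction to the abelian case via the decomposition (\ref{eqnaa}), $A^{\Theta_{\tau}}=\sum_{r}A_{\lambda_r\Theta}\otimes WB_rW^{-1}$ with the $WB_rW^{-1}$ forming an orthogonal resolution of the identity on $\mathbb{C}^m$, is sound and is exactly the mechanism the paper relies on throughout Section 3. Your symmetry argument --- pass $\alpha_{\sigma x}(A)$ across the inner product using self-adjointness of $A$, identify the result with ${}_{\sigma}A$, and invoke $A_{\sigma}={}_{\sigma}A$ from (\ref{eq1}) --- is the content of Lemma \ref{wcl2}(i)--(ii), and the paper's own proof of the first assertion consists precisely of the one line ``symmetry follows from Lemma \ref{wcl2}'' together with the observation that $A\otimes\mathbb{I}_{m\times m}$ is self-adjoint. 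So the first half of your proposal matches the paper and is correct.

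The genuine gap is in the second assertion, and you have correctly diagnosed it yourself: nothing in the proposal actually establishes essential self-adjointness. The Kato--Rellich route requires $\|(A_{\sigma}-A)\phi\|\le\varepsilon\|A\phi\|+C_{\varepsilon}\|\phi\|$, and the hypothesis (\ref{pb}) does not obviously deliver this --- the bound there is polynomial in the deformation parameter $x$, with constants that depend on the vector $\Phi$, whereas a relative bound must be uniform over $\phi$ and controlled by $\|A\phi\|$. Your heuristic that $\alpha_{\sigma x}(A)-A=\int_0^1\tfrac{d}{dt}\alpha_{t\sigma x}(A)\,dt$ is ``one power milder'' is also not right as stated: the derivative produces a factor $(\sigma x)^{\mu}$, so the integrand scales like $|x|(1+|x|)^{r-\rho}$, which for $r\ge\rho$ is no improvement over $(1+|x|)^{r}$. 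The same uniformity problem afflicts the Nelson commutator alternative. To be fair, the paper does not close this gap either: its entire proof of essential self-adjointness is the citation ``along the same lines as in \cite[Theorem 5.1]{MUc5}'', where the argument proceeds by exhibiting a dense set of analytic vectors for the deformed operator rather than by a relative-bound perturbation estimate. So your proposal is an honest plan whose decisive step is missing; if you want a self-contained argument you should follow the analytic-vector strategy (as is done for the deformed field in Proposition \ref{prop1x}(c)) rather than Kato--Rellich.
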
 
 	\begin{proof}
 		Since the undeformed operator $A\otimes\mathbb{I}_{m\times m}$ is a self-adjoint operator for all vectors of $\mathcal{D} (A)\otimes \mathbb{C}^m$ and $\mathcal{D}^{\infty}(A)\otimes \mathbb{C}^m$ is a dense subset of the domain, symmetry follows. For the first part of the operator, i.e. $A^{{\Theta_{\tau}}}$ symmetry follows easily from Lemma \ref{wcl2}. The proof of self-adjointess can be done  along the same lines as in \cite[Theorem 5.1]{MUc5}.
 	\end{proof} 
 	\section{Application of Deformation in  Quantum Mechanics}
 	
 	In the so called \textbf{Schr\"odinger
 		representation}, \cite{RS1, BEH} the pair of operators
 	$(P_{i},X_{j})$, satisfying the \textbf{canonical commutation relations}  (CCR)
 	\begin{equation}\label{ccr}
 	[X_i,P_{k}]=-i\eta_{ik},
 	\end{equation}
 	are represented as essentially self-adjoint operators on the dense domain
 	$\mathscr{S}(\mathbb{R}^n)$. Here $X_{i}$ and $P_{k}$ are the closures of
 	$x_{i}$ and multiplication by $i {\partial}/{\partial x^k}$ on
 	$\mathscr{S}(\mathbb{R}^n)$ respectively. The dense domain representing  $\mathcal{D}^{\infty}(A)\otimes \mathbb{C}^m$ for   $A$ being the coordinate and/or momentum operator is given by $D_{\mathcal{E}}:=\mathcal{E}\otimes \mathbb{C}^m$, where $\mathcal{E}\subseteq \mathscr{S}(\mathbb{R}^n)$ (see \cite[Lemma 3.1]{Muc1}). 
 	\subsection{Non-Abelian Gauge Fields}
 	In this Section  we  generate nonabelian gauge fields by applying the deformation procedure on the momentum operator  in quantum mechanics.  The infinitesimal generator of deformations is an unbounded real-vector-valued function of the coordinate operator. 	\begin{proposition}
 		The quantum mechanical momentum operator deformed with the non-abelian  warped convolutions, i.e. $\vec{P}^{\Theta_{\tau}}$, by using   as generator   $Q_{k}=Z(\mathbf{X})_{k}\otimes Y_{\alpha}  \tau^{\alpha}$, is well-defined on  the Hilbert-space $D_{\mathcal{E}}$  and is given explicitly by the following equation 
 		\begin{equation} 
 		{P}_{i}^{\Theta_{\tau}}= {P}_{i}   -q {A}_{i,\alpha}  \tau^{\alpha},\end{equation}
 		where the gauge field $\vec{A}_{\alpha}$ is given as

 		\begin{equation} 
 		-q\vec{A}_{\alpha} :=(\Theta Z(\mathbf{X}))_k \vec{\partial} Z(\mathbf{X})^{k}\otimes Y_{\alpha}. \end{equation}
 	\end{proposition}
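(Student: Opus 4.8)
The plan is to combine the structural identity (\ref{eqnaa}), which expresses the non-abelian deformation as a matrix-weighted sum of abelian warped convolutions, with an elementary Hadamard computation of the adjoint action of the abelian group on $P_i$, and then to collapse the resulting sum using the spectral identities $\sum_r B_r=\mathbb I_{m\times m}$ and $\sum_r\lambda_rWB_rW^{-1}=Y^\alpha\tau_\alpha$.

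\textbf{Well-definedness.} First I would note that the relevant abelian group is $U(p)=\exp(ip^\mu Z_\mu(\mathbf X))$, so that by Lemma \ref{eis} and $B_rB_l=\delta_{rl}B_l$ the non-abelian adjoint action on the momentum operator is
\begin{equation*}
\alpha^{\tau}_{\Theta x}(P_i)=\sum_{r=1}^{m}\alpha_{\lambda_r\Theta x}(P_i)\otimes WB_rW^{-1}.
\end{equation*}
The key observation, established in the next step, is that each $\alpha_{\lambda_r\Theta x}(P_i)$ is \emph{affine} in $x$; hence all $x$-derivatives of order $\geq 2$ vanish, and the first-order derivatives are $x$-independent multiplication operators by smooth, polynomially bounded functions of $\mathbf X$. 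Since $\mathcal E\subseteq\mathscr S(\mathbb R^n)$ is stable under $P_i$ and under such multiplications (cf.\ \cite[Lemma 3.1]{Muc1}), Assumption (\ref{pb}) of Lemma \ref{lfhs} holds with $r=1$, $\rho=1$ on $D_{\mathcal E}=\mathcal E\otimes\mathbb C^m$, which yields that $P_i^{\Theta_\tau}$ is a well-defined oscillatory integral on $D_{\mathcal E}$.

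\textbf{The explicit form.} Next I would compute $\alpha_k(P_i)=U(k)P_iU(k)^{-1}$ by the Hadamard formula. Because $Z_\mu$ is a function of $\mathbf X$ alone, the canonical commutation relations (\ref{ccr}) give $[Z_\mu(\mathbf X),P_i]=-i\,\partial_iZ_\mu(\mathbf X)$, which is again a function of $\mathbf X$ and therefore commutes with every $Z_\nu(\mathbf X)$. The Hadamard series thus truncates after the linear term, giving $\alpha_{k}(P_i)=P_i+k^\mu\,\partial_iZ_\mu(\mathbf X)$. Integrating against the spectral resolution $E$ of $U$ — for which $\int x_\nu\,dE(x)=Z_\nu(\mathbf X)$ and which, being built from functions of $\mathbf X$, commutes with $\partial_iZ_\mu(\mathbf X)$ — produces the abelian warped momentum $(P_i)_{\lambda_r\Theta}=P_i+\lambda_r\,(\Theta Z(\mathbf X))_k\,\partial_iZ(\mathbf X)^k$. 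Inserting this into (\ref{eqnaa}) and using $\sum_rWB_rW^{-1}=W\mathbb I_{m\times m}W^{-1}=\mathbb I_{m\times m}$ together with $\sum_r\lambda_rWB_rW^{-1}=WDW^{-1}=Y^\alpha\tau_\alpha$ collapses the sum:
\begin{align*}
P_i^{\Theta_\tau}=\sum_{r=1}^{m}(P_i)_{\lambda_r\Theta}\otimes WB_rW^{-1}
&=P_i\otimes\mathbb I_{m\times m}+(\Theta Z(\mathbf X))_k\,\partial_iZ(\mathbf X)^k\otimes Y^\alpha\tau_\alpha,
\end{align*}
which is exactly $P_i-qA_{i,\alpha}\tau^\alpha$ with the gauge field as defined in the statement, the coupling $q$ being absorbed into the definition of $\vec A_\alpha$.

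\textbf{Main obstacle.} The genuinely delicate points are confined to the first two steps: justifying rigorously that the Hadamard series truncates and that the formal conjugation identity holds on the core $\mathcal E$, and the unbounded-operator bookkeeping ensuring that $\alpha_{\lambda_r\Theta x}(P_i)$ and its $x$-derivatives map $D_{\mathcal E}$ into itself with the polynomial bounds required by Lemma \ref{lfhs}. By contrast, the interchange of $\int dE(x)$ with the multiplication operator $\partial_iZ_\mu(\mathbf X)$ is routine, since both are functions of the commuting family $\mathbf X$, and the final collapse of the $r$-sum is purely algebraic.
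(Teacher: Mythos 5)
Your argument is correct, and it reaches the paper's formula and its well-definedness by a route that is organizationally different but mathematically equivalent. The paper applies the Baker--Campbell--Hausdorff expansion directly to the non-abelian adjoint action $\alpha^{\tau}_{\Theta x}(P_i)$, so the single commutator $[Q(\mathbf{X})^{k},P_i\otimes\mathbb{I}_{m\times m}]=-i\,\partial_i Z(\mathbf{X})^{k}\otimes Y_\alpha\tau^{\alpha}$ already carries the full matrix factor $Y_\alpha\tau^{\alpha}$, the series truncates because that commutator is a function of $\mathbf{X}$ alone, and $\int dE(x)\,(\Theta x)_k$ is replaced by $(\Theta Z(\mathbf{X}))_k$; well-definedness is then checked exactly as you do, via the first-order polynomial bound and Lemma \ref{lfhs}. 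You instead first pass through the eigenvalue decomposition of Lemma \ref{eis} (equivalently Formula (\ref{eqnaa})), compute each abelian warped momentum $(P_i)_{\lambda_r\Theta}=P_i+\lambda_r(\Theta Z(\mathbf{X}))_k\,\partial_i Z(\mathbf{X})^{k}$, and then resum using $\sum_r WB_rW^{-1}=\mathbb{I}_{m\times m}$ and $\sum_r\lambda_r WB_rW^{-1}=Y^{\alpha}\tau_{\alpha}$. What your route buys is a clean reduction to the already-known abelian result (one could simply cite the abelian computation of \cite{Muc1} for each $\lambda_r\Theta$), at the small cost of the extra spectral identities needed to collapse the $r$-sum; the paper's direct route avoids the decomposition entirely because the tensor-product structure of the generator delivers $Y_\alpha\tau^{\alpha}$ in one step. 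Both proofs hinge on the same two facts — truncation of the BCH series after the linear term and affine (hence polynomially bounded) dependence on $x$ — so there is no gap in your argument.
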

 	
 	\begin{proof}
 		First, we  apply the new definition of deformation on the momentum operator and since we have shown the well-definedness through the equivalence of the spectral measure and the deformation in terms of strong limits we have,
 		\begin{align*} 
 		{P}^{\Theta_{\tau}}_{i }\Psi&=\int \alpha^{\tau}_{\Theta x}({P}_{i} \otimes \mathbb{I}_{m\times m} ) dE (x)  \Psi\\&=
 		\int\left( U^{\tau}(\Theta x) ({P}_{i} \otimes \mathbb{I}_{m\times m} )  U^{\tau}(\Theta x)^{-1}\right)dE (x)\Psi\\&=
 		\int \left(\vec{P} +i ( \Theta x)_k [{Q}(\mathbf{X})^{k}, ({P}_{i} \otimes \mathbb{I}_{m\times m} )  ]+  \,\cdots
 		\right)
 		dE (x)\Psi
 		,
 		\end{align*} 
 		where in the last line we used the Backer-Campbell-Hausdorff formula. Let us take a closer look at the first commutator 
 		\begin{align*} 
 		[{Q}(\mathbf{X})^{k}, {P}_{i} \otimes \mathbb{I}_{m\times m} ]  =-i
 		{\partial}_{i} Z(\mathbf{X})^{k}\otimes Y_{\,\,\alpha} \tau^{\alpha},
 		\end{align*} 
 		which is calculated by using the CCR. Moreover, since the commutator gave an operator only depending on the coordinate operator we deduce that all other commutators vanish, i.e.
 		\begin{align*} 
 		[{Q}(\mathbf{X})^{j},[{Q}(\mathbf{X})^{k}, {P}_{i} \otimes \mathbb{I}_{m\times m} ]]=0.
 		\end{align*} 
 		Hence, by collecting all terms we have 
 		\begin{align*} 
 		{P}^{\Theta_{\tau}}_{i }\Psi&=
 		\int 	dE (x) \left({P}_{i}  +i  (\Theta x)_k [{Q}(\mathbf{X})^{k},{P}_{i} \otimes \mathbb{I}_{m\times m}    ]\right)
 	 \Psi\\&=
 		\int		dE (x)  \left({P}_{i} \otimes \mathbb{I}_{m\times m}  + (\Theta x)_k   {\partial}_{i} Z(\mathbf{X})^{k}\otimes Y_{\alpha} \tau^{\alpha}\right)
 \Psi\\&= \left({P}_{i} \otimes \mathbb{I}_{m\times m}  +\underbrace{( \Theta Z)_k  {\partial}_{i}  Z(\mathbf{X})^{k}\otimes Y_{\alpha}}_{-q\vec{A}_{\alpha}} \tau^{\alpha}\right)\Psi\\&=  \left({P}_{i} \otimes \mathbb{I}_{m\times m}   -q {A}_{i,\alpha}  \tau^{\alpha}\right)\Psi
 		\end{align*}  
 		The deformation formula is well-defined even though the operator is unbounded. This is due to the vanishing of all orders, except the first order, of the Baker-Campbell Hausdorff formula. Hence, the first-order term is polynomially bounded in $x$. Let us examine the expression of polynomial boundedness more careful, i.e.
 		\begin{align*}
 		\|\partial_{x}^{\gamma}\alpha^{\tau}_{\Theta x}({P}_{i})\Psi\| &=
 		\|\partial_{x}^{\gamma} 	  \left({P}_{i} \otimes \mathbb{I}_{m\times m}  + (\Theta x)_k    {\partial}_{i} Z(\mathbf{X})^{k}\otimes Y_{\alpha} \tau^{\alpha}\right)\Psi\|
 		\end{align*}
 		For the multi-index $\gamma$ equal to zero we have 
 		\begin{align*}
 		\|   \left({P}_{i}   + (\Theta x)_k   {\partial}_{i} Z(\mathbf{X})^{k}\otimes Y_{\alpha} \tau^{\alpha}\right)\Psi\|&\leq \| {P}_{i} \Psi\|+ \|\left((\Theta x)_k   {\partial}_{i} Z(\mathbf{X})^{k}\otimes Y_{\alpha} \tau^{\alpha}\right)\Psi\|\\&\leq
 		\| {P}_{i} \Psi\|+|\vec{x}|\,\|\left((\Theta_{kl}e^{l}  {\partial}_{i} Z(\mathbf{X})^{k}\otimes Y_{\alpha} \tau^{\alpha}\right)\Psi\|\\&\leq C_{0}(1+|\vec{x}|),
 		\end{align*}
 		where in the last lines we used Cauchy-Schwarz and the fact that $\Psi\in D_{\mathcal{E}}$ such that the norms of the respective operators are bounded and hence a constant $C_{0}$ can be found such that the inequality holds. 	For the multi-index $\gamma$ equal to one  the inequality is simply bounded by a constant. Due to the first-order boundedness in $x$ it follows from Lemma \ref{lfhs} that the deformation is well-defined (see also \cite{LW}, \cite[Theorem 1]{AA} and \cite{Muc1}).   
 	\end{proof} 	Next we investigate the case  of  a free Hamiltonian. For a free quantum-mechanical particle the energy is described by the  operator $H_{0}$ which is given as
 	\begin{equation}\label{fh}
 	H_{0}= -\frac{P_jP^j}{2m}.
 	\end{equation}   
 	In the case of non-abelian deformation there are two expressions that are worth investigating, i.e. \newline
 	\begin{itemize} 
 		\item    $H^{\Theta_{\tau}}_{0}:=-(P_{i}P^{i})^{\Theta_{\tau}}$\newline
 		\item   $H^{\Theta_{\tau}}_{1}:=-P^{\Theta_{\tau}}_{i}P^{i}_{\Theta_{\tau}}=-(P_{i}\times^{{\Theta_{\tau,\mathbb{I}}}}P^{i})^{\Theta_{\tau}}$   .
 	\end{itemize} $\,$\newline
 	The first case is the Hamiltonian plugged in to the Formula (\ref{wcna}) of non-abelian deformation. However another possibility of defining the deformed Hamiltonian is   by taking the scalar product of the non-abelian momentum operators.   
 	\begin{proposition}
 		The free Hamiltonian deformed with the non-abelian deformations,   by using  as generator  $Q_{k}=Z(\mathbf{X})_{k}\otimes Y_{\alpha}  \tau^{\alpha}$,  is well-defined  on the Hilbert-space  $D_{\mathcal{E}}$  and  regardless of the possible definitions unique. The explicit expression is given by 
 		\begin{equation} 
 		H^{\Theta_{\tau}}_{0}= - ({P}_{i}   -q {A}_{i,\alpha}  \tau^{\alpha}) ({P}^{i}   -q {A}^{i}_{\beta}  \tau^{\beta}) ,\end{equation}
 		where the gauge field $\vec{A}_{\alpha}$ is	
 		\begin{equation} 
 		-q\vec{A}_{\alpha} :=(\Theta Z(\mathbf{X}))_k \vec{\partial} Z(\mathbf{X})^{k}\otimes Y_{\alpha}. \end{equation}
 	\end{proposition}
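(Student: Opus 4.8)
The plan is to handle the two candidate definitions $H^{\Theta_\tau}_0=-(P_iP^i)^{\Theta_\tau}$ and $H^{\Theta_\tau}_1=-P^{\Theta_\tau}_iP^i_{\Theta_\tau}$ separately, establish the explicit form for each, and then conclude that they coincide. For the well-definedness I would argue exactly as in the preceding proposition: because $[Q(\mathbf{X})^j,[Q(\mathbf{X})^k,P_i\otimes\mathbb{I}_{m\times m}]]=0$, the Baker--Campbell--Hausdorff series for $\alpha^\tau_{\Theta x}$ acting on the momentum terminates after the first commutator, so $\alpha^\tau_{\Theta x}(P_i\otimes\mathbb{I}_{m\times m})=P_i\otimes\mathbb{I}_{m\times m}+(\Theta x)_k\,\partial_iZ(\mathbf{X})^k\otimes Y_\alpha\tau^\alpha$ exactly, and $\alpha^\tau_{\Theta x}(P_iP^i\otimes\mathbb{I}_{m\times m})=\alpha^\tau_{\Theta x}(P_i\otimes\mathbb{I}_{m\times m})\,\alpha^\tau_{\Theta x}(P^i\otimes\mathbb{I}_{m\times m})$ is a polynomial of degree two in $x$ whose operator coefficients (products of $P$'s and of $\partial Z(\mathbf{X})$'s) have finite norm on $D_{\mathcal{E}}$ and whose $x$-derivatives only lower the degree. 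Hence estimate (\ref{pb}) holds with $r=2$, $\rho=1$, Lemma~\ref{lfhs} gives $(P_iP^i)^{\Theta_\tau}$ as a well-defined oscillatory integral on $D_{\mathcal{E}}$, and Theorem~\ref{lfhs1} gives in addition that it is essentially self-adjoint there; since $D_{\mathcal{E}}$ is stable and $P^{\Theta_\tau}_i$, $P^i_{\Theta_\tau}$ are well defined on it by the preceding proposition, the product $P^{\Theta_\tau}_iP^i_{\Theta_\tau}$ is well defined on $D_{\mathcal{E}}$ as well.

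For the explicit form of $H^{\Theta_\tau}_0$ I would insert the terminating expansion into Definition~\ref{wcna}, expand the product, and perform the $\int dE(x)$-integration. Since $E$ is the spectral resolution of $U$, whose generator is $Z(\mathbf{X})$, one has $\int x^\mu\,dE(x)=Z_\mu(\mathbf{X})$ and $\int x^\mu x^\nu\,dE(x)=Z_\mu(\mathbf{X})Z_\nu(\mathbf{X})$ (using the mutual commutativity of the $Z_\mu(\mathbf{X})$), so the integration amounts to the substitution $\Theta x\mapsto\Theta Z(\mathbf{X})$ in each monomial; this yields $(P_iP^i)^{\Theta_\tau}=(P_i+(\Theta Z(\mathbf{X}))_k\partial_iZ(\mathbf{X})^k\otimes Y_\alpha\tau^\alpha)(P^i+(\Theta Z(\mathbf{X}))_l\partial^iZ(\mathbf{X})^l\otimes Y_\beta\tau^\beta)$, i.e. $-H^{\Theta_\tau}_0=(P_i-qA_{i,\alpha}\tau^\alpha)(P^i-qA^i_\beta\tau^\beta)$ with $-q\vec{A}_\alpha$ as stated. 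The one point that needs checking here is that commuting a factor $(\Theta Z(\mathbf{X}))$ past a $P$ inside the cross terms costs only a commutator proportional to $\Theta_{lm}\,\partial_iZ(\mathbf{X})^m\,\partial^iZ(\mathbf{X})^l$, which vanishes because $\Theta$ is skew-symmetric while $\sum_i\partial_iZ(\mathbf{X})^m\,\partial^iZ(\mathbf{X})^l$ is symmetric in $m,l$ --- the same cancellation mechanism by which the quadratic part of the gauge field was seen to drop out earlier.

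For uniqueness it then suffices to identify $H^{\Theta_\tau}_1$ with the same expression. By Lemma~\ref{wcl2}(i) one has $P^i_{\Theta_\tau}={}_{\Theta_\tau}P^i=(P^i)^{\Theta_\tau}$, and the preceding proposition applied to $P_i$ and to $P^i$ gives $P^{\Theta_\tau}_i=P_i-qA_{i,\alpha}\tau^\alpha$ and $(P^i)^{\Theta_\tau}=P^i-qA^i_\beta\tau^\beta$; hence $H^{\Theta_\tau}_1=-(P_i-qA_{i,\alpha}\tau^\alpha)(P^i-qA^i_\beta\tau^\beta)=H^{\Theta_\tau}_0$. Alternatively one may invoke the non-abelian Rieffel-product lemma to write $H^{\Theta_\tau}_1=-(P_i\times^{\Theta_{\tau,\mathbb{I}}}P^i)^{\Theta_\tau}$ and use its decomposition $P_i\times^{\Theta_{\tau,\mathbb{I}}}P^i=\sum_r(P_i\times_{\lambda_r\Theta}P^i)\otimes WB_rW^{-1}$: each abelian Rieffel product $P_i\times_{\lambda_r\Theta}P^i$ collapses to the undeformed $P_iP^i$, because the Taylor expansions of both factors terminate at first order, the zeroth-order contribution reproduces $P_iP^i$, the purely linear contributions integrate against $x$-independent operator coefficients and vanish, and the surviving bilinear contribution is proportional to $\Theta_{kl}\sum_i\partial_iZ(\mathbf{X})^k\,\partial^iZ(\mathbf{X})^l=0$. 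Using $\sum_rWB_rW^{-1}=\mathbb{I}_{m\times m}$ one then gets $P_i\times^{\Theta_{\tau,\mathbb{I}}}P^i=P_iP^i\otimes\mathbb{I}_{m\times m}$ and therefore $H^{\Theta_\tau}_1=-(P_iP^i)^{\Theta_\tau}=H^{\Theta_\tau}_0$.

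The main obstacle throughout is the bookkeeping of operator orderings: every identification above rests on the repeated cancellation of a term ``skew-symmetric in the $\Theta$-indices'' contracted against a ``symmetric $\partial Z\,\partial Z$'' factor. If one wishes to suppress these details, the computation can be written verbatim as in the preceding proposition, carrying out the spectral integration through the formal substitution $\Theta x\mapsto\Theta Z(\mathbf{X})$; it should then be remarked that this substitution is legitimate precisely because of the skew-symmetry of $\Theta$.
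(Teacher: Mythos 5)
Your proposal is correct and follows essentially the same route as the paper: terminate the Baker--Campbell--Hausdorff expansion of $\alpha^{\tau}_{\Theta x}(P_i)$ at first order, factor $\alpha^{\tau}_{\Theta x}(P_iP^i)$ as a product of the two deformed momenta, perform the spectral integration $\Theta x\mapsto\Theta Z(\mathbf{X})$, and identify $H^{\Theta_{\tau}}_1$ with $H^{\Theta_{\tau}}_0$ via the preceding proposition, with well-definedness from second-order polynomial boundedness and Lemma \ref{lfhs}. Your explicit remark that the operator-ordering ambiguity in the cross terms is resolved by contracting the skew-symmetric $\Theta$ against the symmetric factor $\sum_i\partial_iZ(\mathbf{X})^m\,\partial^iZ(\mathbf{X})^l$ addresses a step the paper passes over silently and is worth retaining.
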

 	
 	\begin{proof}
 		For the proof we shall use the spectral definition of the deformation. This can be done since the Hamiltonian is quadratically bounded w.r.t. the unitary operator $U^{\tau}$. This fact is easily seen since the adjoint action     is given by
 		\begin{align*} 
 		-\alpha^{\tau}_{\Theta x}(H_{0}  )&=U^{\tau}(\Theta x)(P_iP^i\otimes \mathbb{I}_{m\times m} ) U^{\tau}(\Theta x)^{-1}
 		\\&=U^{\tau}(\Theta x)(P_i\otimes \mathbb{I}_{m\times m} ) (P^i\otimes \mathbb{I}_{m\times m} )U^{\tau}(\Theta x)^{-1}
 		\\&=  U^{\tau}(\Theta x) (P_i\otimes \mathbb{I}_{m\times m} )U^{\tau}(\Theta x)^{-1}\,U^{\tau}(\Theta x)(P^i\otimes \mathbb{I}_{m\times m} ) U^{\tau}(\Theta x)^{-1},
 		\end{align*} 	
 		where the momentum operator is polynomially bounded to first order as was shown in the former proposition. Hence quadratic polynomial boundedness follows and by using Proposition \ref{lfhs}, one concludes that the deformation is well-defined. Therefore the integral   or  spectral representation are equivalent. Next we turn to the explicit result of  the deformed  free Hamiltonian in the non-abelian context. This is done by applying the deformation formula,
 		\begin{align*} 
 	-	H^{\Theta_{\tau}}_{0}\Psi:&=(P_{i}P^{i})^{\Theta_{\tau}}\Psi=\int dE(x)\,  \alpha^{\tau}_{\Theta x}(H_{0}  )  \Psi\\&=
 		\int dE(x)\left( U^{\tau}(\Theta x) \,H_{0}  \, U^{\tau}(\Theta x)^{-1}\right)\Psi\\&= 
 		\int dE(x)\left( U^{\tau}(\Theta x) \,P_i\,U^{\tau}(\Theta x)^{-1}\,U^{\tau}(\Theta x)\, P^i  \, U^{\tau}(\Theta x)^{-1}\right)\Psi\\&=
 		\int dE(x)\left({P}_{i}  +i  (\Theta x)_k [{Q}(\mathbf{X})^{k},{P}_{i}  ]\right) \left({P}^{i}  +i  (\Theta x)_r [{Q}(\mathbf{X})^{r},{P}^{i}  ]\right) \Psi\\&
 		= \left({P}_{i} + ( \Theta Z)_k  {\partial}_{i}  Z(\mathbf{X})^{k}\otimes Y_{\alpha}    \tau^{\alpha}\right) \left({P}^{i} + ( \Theta Z)_r  {\partial}^{i}  Z(\mathbf{X})^{r}\otimes Y_{\alpha}  \tau^{\alpha}\right) \Psi\\&
 		= \left({P}_{i}	-q {A}_{i,\alpha}    \tau^{\alpha}\right) \left({P}^{i}   	-q{A}^{i}_{\alpha}  \tau^{\alpha}\right) \Psi
 		.
 		\end{align*} 	
 		Next we turn our attention to defining the non-abelian deformed Hamiltonian as the scalar product of the non-abelian deformed momentum operators
 		\begin{align*} 
 	-	H^{\Theta_{\tau}}_{1}\Psi:=P^{\Theta_{\tau}}_{i}P^{i}_{\Theta_{\tau}}\Psi&= 
 		\left({P}_{i}  -q {A}_{i,\alpha}  \tau^{\alpha}\right)\left({P}^{i}  -q {A}^{i}_{\alpha}  \tau^{\alpha}\right)\Psi=
 		-H^{\Theta_{\tau}}_{0}\Psi,
 		\end{align*} 	
 		where we used the former proposition in order to obtain the result.   We conclude that all possible deformations in this case are equivalent. \\\\
 		The proof that the deformed Hamilton operator is well-defined   can be done as the former result concerning the momentum operators, i.e. due to the fact that the Backer-Campbell-Hausdorff formula vanishes to third order polynomial boundedness to second-order follows. Hence, as before by using Cauchy-Schwarz and domain arguments inequalities to second order in $x$ hold and we can use Lemma \ref{lfhs}.  
 	\end{proof} 
 	
 	In \cite{Muc1} it was proven that by identifying the deformation matrix with certain physical constants the deformed momentum represents the minimally substituted momentum operator. In this context, the commutator of two deformed momentum operators gave the spatial part of the field strength tensor, which either represents the magnetic or gravito-magnetic field. In the following we shall perform the same calculation with the non-abelian deformed momentum operators in order to prove that the outcome is a spatial non-abelian field strength tensor.

 	\begin{lemma}
 		The commutator of non-abelian deformed momentum operators   is given by the following term
 		\begin{align*}
 		[P^{\Theta_{\tau}}_{i},P^{\Theta_{\tau}}_{j}]
 		=-iq(\partial_{i}{A}_{j,\alpha}-  \partial_{j}{A}_{i,\alpha})\tau^{\alpha},
 		\end{align*}
 		which resembles the spatial field strength tensor without the quadratic term in the gauge-fields.
 	\end{lemma}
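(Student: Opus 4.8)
The plan is to bypass the Baker--Campbell--Hausdorff expansion entirely and work directly from the closed form of the deformed momentum operator obtained in the preceding proposition, $P_i^{\Theta_\tau} = P_i\otimes\mathbb{I}_{m\times m} - q\,A_{i,\alpha}\tau^\alpha$ with $-q\,A_{i,\alpha}\tau^\alpha = (\Theta Z(\mathbf{X}))_k\,\partial_i Z(\mathbf{X})^k\otimes Y^\alpha\tau_\alpha$. The first step is to record the structural fact underlying the whole computation: each component $A_{i,\alpha}\tau^\alpha$ is the product of a \emph{scalar} operator-valued function of the coordinate operator $\mathbf{X}$ and the \emph{single fixed} Hermitian matrix $Y^\alpha\tau_\alpha$. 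Setting $g_i(\mathbf{X}) := (\Theta Z(\mathbf{X}))_k\,\partial_i Z(\mathbf{X})^k$ and $M := Y^\alpha\tau_\alpha$, we thus have $P_i^{\Theta_\tau} = P_i\otimes\mathbb{I}_{m\times m} + g_i(\mathbf{X})\otimes M$ on the stable domain $D_{\mathcal{E}}$, where, by the preceding proposition, the deformed operators and all the products appearing below are well defined.

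Next I would expand $[P_i^{\Theta_\tau},P_j^{\Theta_\tau}]$ bilinearly into four terms on $D_{\mathcal{E}}$. The term $[P_i\otimes\mathbb{I}_{m\times m},P_j\otimes\mathbb{I}_{m\times m}]$ vanishes because the momenta commute. The two mixed terms combine to $[P_i,g_j(\mathbf{X})]\otimes M + [g_i(\mathbf{X}),P_j]\otimes M$; applying the CCR in the form $[P_i,f(\mathbf{X})] = i\,\partial_i f(\mathbf{X})$ --- the same relation used in the preceding proof, where $[Z(\mathbf{X})^k,P_i] = -i\,\partial_i Z(\mathbf{X})^k$ --- this collapses to $i\,(\partial_i g_j - \partial_j g_i)\otimes M$, and re-inserting $g_j(\mathbf{X})\otimes M = -q\,A_{j,\alpha}\tau^\alpha$ together with the constancy of $Y_\alpha$ turns it into $-iq\,(\partial_i A_{j,\alpha} - \partial_j A_{i,\alpha})\tau^\alpha$, which is exactly the asserted expression.

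It remains to dispose of the quadratic term $[\,g_i(\mathbf{X})\otimes M,\ g_j(\mathbf{X})\otimes M\,] = [\,g_i(\mathbf{X}),g_j(\mathbf{X})\,]\otimes M^2$, which vanishes identically because $g_i(\mathbf{X})$ and $g_j(\mathbf{X})$ are functions of the one self-adjoint coordinate operator and therefore commute. This is the mechanism responsible for the absence of the $f^{\alpha\beta\gamma}$-term: although the deformed momentum carries an $su(m)$-valued piece, all of its components are collinear in the Lie algebra (each proportional to $Y^\alpha\tau_\alpha$), so the structure constants never enter --- in agreement with the earlier observation tracing this vanishing to the requirement of a strongly continuous unitary generator. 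I do not expect a genuine obstacle here; the point that must be handled with care is precisely the factorisation $A_{i,\alpha}\tau^\alpha = g_i(\mathbf{X})\otimes(Y^\alpha\tau_\alpha)$, which is simultaneously what kills the quadratic term and what lets the two derivative terms reassemble into the stated curl, together with a line confirming that all these manipulations are legitimate on $D_{\mathcal{E}}$.
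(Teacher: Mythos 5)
Your proposal is correct and follows essentially the same route as the paper: both start from the closed form $P_i^{\Theta_\tau}=P_i-qA_{i,\alpha}\tau^\alpha$, expand the commutator bilinearly, obtain the curl from the CCR, and kill the quadratic term by observing that it factors as $[A_i,A_j]\otimes Y_\alpha\tau^\alpha Y_\beta\tau^\beta$ with $A_i,A_j$ commuting functions of the coordinate operator. Your added remark that the collinearity of all components in the Lie algebra is what prevents the structure constants from entering is a correct and slightly more explicit articulation of the same mechanism.
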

 	\begin{proof}
 		The explicit calculation is given by 
 		\begin{align*}
 		[P^{\Theta_{\tau}}_{i},P^{\Theta_{\tau}}_{j}]&=
 		[{P}_{i}  -q {A}_{i,\alpha}  \tau^{\alpha},{P}_{j}  -q {A}_{j,\beta}  \tau^{\beta}]\\&=
 		-q[{P}_{i},  {A}_{j,\beta}  \tau^{\beta}]-q [{A}_{i,\alpha}  \tau^{\alpha},{P}_{j}  ]+q^2
 		[{A}_{i,\alpha}  \tau^{\alpha},{A}_{j,\beta}  \tau^{\beta}]\\&
 		=-iq(\partial_{i}{A}_{j,\alpha}-  \partial_{j}{A}_{i,\alpha})\tau^{\alpha}+ q^2  \left({A}_{i}   {A}_{j}\otimes Y_{\alpha}\tau^{\alpha}Y_{\beta} \tau^{\beta}-{A}_{j}   {A}_{i}\otimes Y_{\alpha}\tau^{\alpha}Y_{\beta} \tau^{\beta}\right) \\&
 		=-iq(\partial_{i}{A}_{j,\alpha}-  \partial_{j}{A}_{i,\alpha})\tau^{\alpha}+ q^2 
 		\underbrace{\left([{A}_{i},{A}_{j}]\otimes Y_{\alpha}\tau^{\alpha}   Y_{\beta} \tau^{\beta} \right)}_{=0}
 		\end{align*}
 		where the gauge field $-qA_i= (\Theta Z(\mathbf{X}))_k  {\partial}_{i} Z(\mathbf{X})^{k}$.
 	\end{proof}
 	
 	Due to the vanishing of  quadratic terms of gauge fields in the spatial part of the field strength tensor, we call the deformation almost non-abelian.  However, note that the quadratic terms in the Hamiltonian do not vanish, i.e. 
 	\begin{align*}
 	q^2{A}_{i,\alpha}  \tau^{\alpha} {A}^{i}_{\beta}  \tau^{\beta} = q^2{A}_{i}{A}^{i}\otimes Y_{\alpha}  \tau^{\alpha} Y_{\beta} \tau^{\beta}.
 	\end{align*}

 	\subsection{Non-Abelian Moyal-Weyl}
 	In \cite{Muc1} we found a connection between deforming the momentum operator and the Moyal-Weyl space. This connection was given by the famous Landau quantization. We obtained the Landau quantization by deforming the momentum operator with the coordinate operator. To obtain the so called guiding center coordinates, that describe precisely the position of the particle we deformed the coordinate operator with the momentum operator. \\\\ Since, we obtained non-abelian gauge fields by deforming the momentum operator we reuse the idea found in QM and deform the coordinate operator. The resulting space-time  of this deformation is called the   non-abelian Moyal Weyl plane.  In order to obtain this new plane we have to deform the coordinate operator by using the non-abelian warped convolutions.  Afterwards the commutator of the two non-abelian deformed operators is calculated.
 	
 	\begin{proposition} \textbf{Non-Abelian Moyal-Weyl Plane}\newline 
 		Let the generator of deformations be given as the momentum operator i.e. $Q_{k}=P_{k}\otimes Y_{\alpha}  \tau^{\alpha}$. Then, the deformed coordinate operator $\vec{X}^{\Theta_{\tau}}$ is well-defined   on the Hilbert-space  $D_{\mathcal{E}}$ and given explicitly  by the following equation 
 		\begin{equation*} 
 		{X}^{i,\Theta_{\tau}}={X}^{i } -   (\Theta P)^{i} \otimes Y_{\alpha} \tau^{\alpha}.
 		\end{equation*}
 		Moreover, the non-abelian Moyal-Weyl plane is generated by the algebra of the non-abelian deformed coordinate operators  and is given as follows, 
 		\begin{equation}
 		[ {X}^{i,\Theta_{\tau}}, {X}^{j,\Theta_{\tau}}]=-2i\Theta^{ij}\otimes  Y_{\alpha} \tau^ {\alpha}.
 		\end{equation} 
 		
 	\end{proposition}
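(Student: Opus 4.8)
The plan is to reduce everything to the abelian warped‑convolution calculus through the tensor‑product factorisation \eqref{eqnaa}, using Lemma~\ref{eis} to express $U^{\tau}$ in terms of $U$, and to obtain well‑definedness for the \emph{unbounded} coordinate operator from Lemma~\ref{lfhs}. With the generator $Q_{k}=P_{k}\otimes Y_{\alpha}\tau^{\alpha}$ the adjoint action $\alpha^{\tau}_{\Theta x}(X^{i}\otimes\mathbb{I}_{m\times m})$ has a \emph{terminating} Baker--Campbell--Hausdorff expansion, because
\[
[Q^{k},X^{i}\otimes\mathbb{I}_{m\times m}]=[P^{k},X^{i}]\otimes Y_{\alpha}\tau^{\alpha}
\]
is (on $\mathcal{E}$) a $c$‑number times the fixed matrix $Y_{\alpha}\tau^{\alpha}$, so every higher commutator $[Q^{j},[Q^{k},X^{i}\otimes\mathbb{I}_{m\times m}]]$ vanishes. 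Hence $x\mapsto\alpha^{\tau}_{\Theta x}(X^{i}\otimes\mathbb{I}_{m\times m})\Psi$ is affine in $x$, in particular polynomially bounded to first order on $D_{\mathcal{E}}=\mathcal{E}\otimes\mathbb{C}^{m}$ (by Cauchy--Schwarz together with the domain estimates of \cite[Lemma~3.1]{Muc1}), so the hypothesis \eqref{pb} of Lemma~\ref{lfhs} holds with $r=1$, $\rho=1$: the deformation integral is a well‑defined oscillatory integral on $D_{\mathcal{E}}$, and the spectral and strong‑limit prescriptions of Lemma~\ref{eis2} coincide.

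For the explicit form I would evaluate the spectral integral directly. From the CCR~\eqref{ccr} one gets $[P^{k},X^{i}]=i\eta^{ik}\,\mathbb{I}_{m\times m}$, hence $i(\Theta x)_{k}[Q^{k},X^{i}\otimes\mathbb{I}_{m\times m}]=-(\Theta x)^{i}\otimes Y_{\alpha}\tau^{\alpha}$, so the truncated BCH series gives
\[
\alpha^{\tau}_{\Theta x}(X^{i}\otimes\mathbb{I}_{m\times m})=X^{i}\otimes\mathbb{I}_{m\times m}-(\Theta x)^{i}\otimes Y_{\alpha}\tau^{\alpha}.
\]
Since in the Schr\"odinger representation $E$ is the joint spectral measure of $\vec P$ (the generator of $U$), one has $\int x_{j}\,dE(x)=P_{j}$ on $D_{\mathcal{E}}$, and therefore
\[
X^{i,\Theta_{\tau}}=\int dE(x)\,\alpha^{\tau}_{\Theta x}(X^{i}\otimes\mathbb{I}_{m\times m})=X^{i}\otimes\mathbb{I}_{m\times m}-(\Theta P)^{i}\otimes Y_{\alpha}\tau^{\alpha}.
\]
The same result drops out of \eqref{eqnaa}: each abelian summand $X^{i}_{\lambda_{r}\Theta}$ is the guiding‑centre coordinate $X^{i}-\lambda_{r}(\Theta P)^{i}$ (the abelian computation of \cite{Muc1}), and $\sum_{r}WB_{r}W^{-1}=\mathbb{I}_{m\times m}$, $\sum_{r}\lambda_{r}WB_{r}W^{-1}=Y_{\alpha}\tau^{\alpha}$.

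For the commutator I would compute it straight from the explicit expression. Expanding $[X^{i,\Theta_{\tau}},X^{j,\Theta_{\tau}}]$ the term $[X^{i},X^{j}]\otimes\mathbb{I}_{m\times m}$ vanishes; the term $[(\Theta P)^{i},(\Theta P)^{j}]\otimes(Y_{\alpha}\tau^{\alpha})^{2}$ vanishes because the momenta commute --- and \emph{this} is precisely why only one power of $Y_{\alpha}\tau^{\alpha}$ survives --- while the two cross terms, evaluated with the CCR~\eqref{ccr} and the skew‑symmetry of $\Theta$, each contribute $-i\Theta^{ij}\otimes Y_{\alpha}\tau^{\alpha}$, so that
\[
[X^{i,\Theta_{\tau}},X^{j,\Theta_{\tau}}]=-2i\,\Theta^{ij}\otimes Y_{\alpha}\tau^{\alpha}.
\]
Alternatively, by the tensor‑product structure and the bookkeeping of Lemma~\ref{wcl2}(iii), $[X^{i,\Theta_{\tau}},X^{j,\Theta_{\tau}}]=\sum_{r}[X^{i}_{\lambda_{r}\Theta},X^{j}_{\lambda_{r}\Theta}]\otimes WB_{r}W^{-1}=\sum_{r}(-2i\lambda_{r}\Theta^{ij})\otimes WB_{r}W^{-1}=-2i\Theta^{ij}\otimes W\big(\sum_{r}\lambda_{r}B_{r}\big)W^{-1}$, and recognising $\sum_{r}\lambda_{r}B_{r}=D$ with $WDW^{-1}=Y_{\alpha}\tau^{\alpha}$ recovers the matrix on the right.

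The only genuine subtlety is the unbounded‑operator deformation: $X^{i}\notin\mathcal{C}^{\infty}$, so one cannot invoke the bounded theory of \cite{BLS} directly and must pass through the oscillatory‑integral criterion of Lemma~\ref{lfhs}. I expect this to be the main point to get right, but it is benign here precisely because the BCH series truncates after the linear term, which makes the symbol estimate immediate. The remaining ingredients --- density and $U^{\tau}$‑stability of $D_{\mathcal{E}}$, and the identification $\int x\,dE(x)=\vec P$ via Stone's theorem --- are standard and already recorded in \cite{Muc1}; the rest is simply tracking the metric and sign conventions fixed by the CCR~\eqref{ccr}.
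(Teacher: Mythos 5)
Your proposal is correct and follows essentially the same route as the paper: a terminating Baker--Campbell--Hausdorff expansion of $\alpha^{\tau}_{\Theta p}(X_i)$ using the CCR, evaluation of the spectral integral to get $X^{i}-(\Theta P)^{i}\otimes Y_{\alpha}\tau^{\alpha}$, a direct commutator computation for the $-2i\Theta^{ij}\otimes Y_{\alpha}\tau^{\alpha}$ relation, and well-definedness via first-order polynomial boundedness feeding into Lemma~\ref{lfhs}. The extra cross-check through the factorisation \eqref{eqnaa} with $\sum_{r}\lambda_{r}WB_{r}W^{-1}=Y_{\alpha}\tau^{\alpha}$ is a nice consistency argument but not a different proof.
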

 	
 	\begin{proof}
 		First, we  apply the new definition of deformation on the momentum operator,
 		\begin{align*} 
 		{X}_{i}^{\Theta_{\tau}}\Psi&=\int \alpha^{\tau}_{\Theta p}({X}_{i }  )  dE(p) \Psi\\&=
 		\int\left( U^{\tau}(\Theta p) \,{X}_{i }  \, U^{\tau}(\Theta p)^{-1}\right)dE(p)\Psi\\&=
 		\int \left({X}_{i }  +i  (\Theta p)_k [{Q}(\mathbf{P})^{k} , {X}_{i } ]+  \,\cdots
 		\right)
 		dE(p)\Psi
 		,
 		\end{align*} 
 		where in the last line we used the Baker-Campbell-Hausdorff formula. Let us take a closer look at the first commutator 
 		\begin{align*}  [Q(\mathbf{P})^{k},  {X}_{j} ]   = [{P}^{k} ,  {X}_{j} ]  \otimes Y_{\alpha}\tau^{\alpha}  = i\delta_{j}^{k}\otimes Y_{\alpha}\tau^{\alpha},
 		\end{align*} 
 		where in the last line we used the CCR. Moreover, since the commutator gave an operator only depending on the coordinate operator we deduce that all other commutators vanish, i.e.
 		\begin{align*} 
 		[{Q}(\mathbf{P})_{j},[{Q}(\mathbf{P})_{k}, {X}^{i } ]]=0.
 		\end{align*} 
 		Hence, by collecting all terms we have 
 		\begin{align*} 
 		{X}_{i}^{\Theta_{\tau}}\Psi&=
 		\int \left( 
 		{X}_{i } +i  ( \Theta p)_k   [{Q}(\mathbf{P})^{k}, 
 		{X}_{i } ]\right)
 		dE(p) \Psi\\&=
 		\int \left(
 		{X}_{i } - (\Theta p)_i \otimes   Y^{\alpha}\tau_{\alpha} \right)
 		dE(p) \Psi\\&= \left( 
 		{X}_{i }-  (\Theta P)_{i}\otimes Y_{\alpha} \tau^ {\alpha}\right)\Psi .
 		\end{align*}  
 		The calculation of the non-abelian Moyal-Weyl plane is done in a straight forward manner, i.e.
 		\begin{align*}
 		[ {X}^{i,\Theta_{\tau}}, {X}^{j,\Theta_{\tau}}]
 		&=[{X}^{i }  -  \Theta^{i}_{\,\,k} P^k \otimes Y_{\alpha} \tau^ {\alpha},{X}^{j }  -   \Theta^{j}_{\,\,r} P^r \otimes  Y_{\beta} \tau^ {\beta}]\\&=- [\Theta^{i}_{\,\,k} P^k \otimes Y_{\alpha} \tau^ {\alpha},{X}^{j } ]+i\leftrightarrow j\\&=-\Theta^{i}_{\,\,k}[ P^k,{X}^{j }]\otimes   Y_{\alpha}  \tau^ {\alpha}+i\leftrightarrow j\\&=-i
 		\Theta^{ij} \otimes Y_{\alpha} \tau^ {\alpha}+i\leftrightarrow j\\&=-2i\Theta^{ij}  \otimes Y_{\alpha} \tau^ {\alpha}.
 		\end{align*}  
 		The proof that the deformation formula is well-defined for the unbounded deformed coordinate operator is equivalent to the proof done for the case of the momentum operator.  This is due to the vanishing of all orders, except the first order, of the Baker-Campbell Hausdorff formula. Therefore, the first-order term is polynomially bounded in $p$, i.e.
 		\begin{align*}
 		\|\partial_{p}^{\gamma}\alpha^{\tau}_{\Theta p}({X}_{i})\Psi\| &=
 		\|\partial_{p}^{\gamma} 	  \left({X}_{i} \otimes \mathbb{I}_{m\times m}  + (\Theta p)_i     \otimes Y_{\alpha} \tau^{\alpha}\right)\Psi\|
 		\end{align*}
 		For the multi-index $\gamma$ equal to zero we have 
 		\begin{align*}
 		\|   \left({X}_{i}   - (\Theta p)_i     \otimes Y_{\alpha} \tau^{\alpha}\right)\Psi\|&\leq \| {X}_{i} \Psi\|+ \|\left((\Theta p)_i\otimes Y_{\alpha} \tau^{\alpha}\right)\Psi\|\\&\leq
 		\| {X}_{i} \Psi\|+|\vec{p}|\,\|\left((\Theta_{il}e^{l}  \otimes Y_{\alpha} \tau^{\alpha}\right)\Psi\|\\&\leq \tilde{C}_{0}(1+|\vec{p}|),
 		\end{align*}
 		where as before we used Cauchy-Schwarz and the fact that $\Psi\in D_{\mathcal{E}}$ such that the norms of the respective operators are bounded and hence a finite constant $\tilde{C}_{0}$ can be found such that the inequality holds. 	 Due to the inequality it follows from Lemma \ref{lfhs} that the deformation is well-defined (see also \cite{LW}, \cite[Theorem 1]{AA} and \cite{Muc1}).   
 	\end{proof}

 	Another important point in the context of the non-abelian operators is the question of self-adjointness. Hence in order to have a $*$-algebra the non-abelian deformed coordinate operator has to be self-adjoint. 
 	\begin{lemma}
 		The non-abelian deformed operators ${X}^{i,\Theta_{\tau}}$ are self-adjoint   on the domain $ \mathscr{S}(\mathbb{R}^n)\otimes{\mathbb{C}^{m}}$.
 	\end{lemma}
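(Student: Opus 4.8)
The plan is to reduce the self-adjointness question to the abelian building blocks supplied by Lemma \ref{eis} and then to invoke Theorem \ref{tsa}. First I would recall the explicit form of the deformed coordinate operator established in the preceding proposition, namely
\begin{equation*}
{X}^{i,\Theta_{\tau}} = {X}^{i}\otimes\mathbb{I}_{m\times m} - (\Theta P)^{i}\otimes Y^{\alpha}\tau_{\alpha}
\end{equation*}
on $\mathscr{S}(\mathbb{R}^n)\otimes\mathbb{C}^m$, where $Y^{\alpha}\tau_{\alpha}=Y_{\alpha}\tau^{\alpha}$ is the hermitian matrix of Definition \ref{nao}. Diagonalizing $Y^{\alpha}\tau_{\alpha}=W\bigl(\sum_{r=1}^{m}\lambda_r B_r\bigr)W^{-1}$ exactly as in Lemma \ref{eis}, and using $B_rB_l=\delta_{rl}B_l$, $\sum_r B_r=\mathbb{I}_{m\times m}$, one rewrites
\begin{equation*}
{X}^{i,\Theta_{\tau}} = \sum_{r=1}^{m}\bigl({X}^{i} - \lambda_r(\Theta P)^{i}\bigr)\otimes W B_r W^{-1}.
\end{equation*}
Conjugating the $\mathbb{C}^m$-factor by the unitary $W$ turns the right-hand side into the orthogonal direct sum $\bigoplus_{r=1}^{m}\bigl({X}^{i}-\lambda_r(\Theta P)^{i}\bigr)$ acting on $\mathscr{S}(\mathbb{R}^n)\otimes\mathbb{C}^m$, so it suffices to show that each abelian piece ${X}^{i}-\lambda_r(\Theta P)^{i}$ is (essentially) self-adjoint on $\mathscr{S}(\mathbb{R}^n)$.

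For fixed $i$, the skew-symmetry of $\Theta$ forces its $i$-th diagonal entry to vanish, so $(\Theta P)^{i}$ is a real linear combination of the momenta $P^{k}$ with $k\neq i$ only; hence ${X}^{i}-\lambda_r(\Theta P)^{i}$ is a real linear combination of ${X}^{i}$ and of $\{P^{k}:k\neq i\}$. In the Schr\"odinger representation these $n$ operators form a commuting family with common invariant core $\mathscr{S}(\mathbb{R}^n)$ that extends to a complete set of commuting self-adjoint operators (for instance ${X}^{i}$ together with all $P^{k}$, $k\neq i$); the associated joint projection-valued measure exists, and ${X}^{i}-\lambda_r(\Theta P)^{i}$ is a real vector-valued Borel function of it. Theorem \ref{tsa}, in the commuting-family form used in the remark following it (cf.\ \cite[Lemma 3.1]{Muc1}), then yields that this operator is self-adjoint on its maximal domain and essentially self-adjoint on $\mathscr{S}(\mathbb{R}^n)$. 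Reassembling the $m$ summands, ${X}^{i,\Theta_{\tau}}$ is a finite orthogonal direct sum of (essentially) self-adjoint operators, hence (essentially) self-adjoint on $\mathscr{S}(\mathbb{R}^n)\otimes\mathbb{C}^m$, which is the assertion.

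I expect the only genuinely delicate point to be the justification that $\{{X}^{i}\}\cup\{P^{k}:k\neq i\}$ is a \emph{strongly} commuting family — that their spectral projections commute and that $\mathscr{S}(\mathbb{R}^n)$ is a common core — which is standard in the Schr\"odinger representation but must be recorded carefully before Theorem \ref{tsa} is applied. (As a fallback, even a term $\propto P^{i}$ would do no harm, since a rotated quadrature ${a{X}^{i}+bP^{i}}$ with $a,b\in\mathbb{R}$ is essentially self-adjoint on $\mathscr{S}(\mathbb{R}^n)$ and still commutes with the remaining momenta.) The book-keeping over the index $r$ and the passage between $\mathscr{S}(\mathbb{R}^n)$ and the maximal domain are then routine.
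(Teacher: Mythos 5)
Your argument is correct, but it is not the route the paper takes. The paper's proof is a one-liner: it appeals to Theorem \ref{lfhs1}, i.e.\ to the general statement that the non-abelian warped convolution of a self-adjoint operator is (essentially) self-adjoint once the polynomial-boundedness hypothesis of Lemma \ref{lfhs} holds --- and that hypothesis was already verified for the coordinate operator in the Moyal--Weyl proposition (first-order boundedness in $p$). You instead exploit the explicit closed form of the deformed operator: diagonalizing $Y^{\alpha}\tau_{\alpha}$ as in Lemma \ref{eis}, you reduce ${X}^{i,\Theta_{\tau}}$ (after conjugation by $\mathbb{I}\otimes W$) to the direct sum $\bigoplus_{r}\bigl({X}^{i}-\lambda_r(\Theta P)^{i}\bigr)$, and note that each summand is a real linear combination of the strongly commuting family $\{{X}^{i}\}\cup\{P^{k}\}_{k\neq i}$ (the diagonal entry of $\Theta$ vanishes by skew-symmetry), hence essentially self-adjoint on $\mathscr{S}(\mathbb{R}^n)$ via the joint functional calculus. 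Both arguments are sound. The paper's route is shorter but leans on Theorem \ref{lfhs1}, whose own proof is only sketched by reference to an external result; yours is self-contained and elementary, at the modest cost of having to record carefully that position and transverse momenta strongly commute with common core $\mathscr{S}(\mathbb{R}^n)$, and of stretching Theorem \ref{tsa} from a function of a single vector-valued operator to a function of this mixed commuting family (which the remark after Theorem \ref{tsa} only partially covers). One caveat common to both: ``self-adjoint on $\mathscr{S}(\mathbb{R}^n)\otimes\mathbb{C}^{m}$'' should be read as \emph{essentially} self-adjoint, since Schwartz space is not the maximal domain --- this is what both your argument and Theorem \ref{lfhs1} actually deliver.
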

 	\begin{proof}
 		Self-adjointness in the deformational context is owed to the polynomial boundedness of the expression and follows from Theorem \ref{lfhs1}.
 	\end{proof}

 	\section{QFT on Non-Abelian Moyal-Weyl}
 	In  \cite{GL1}, the authors gave a correspondence between free deformed fields and quantum fields living on the constant Moyal-Weyl space. The correspondence was implemented by a unitary operator mapping   the Fock space $\mathcal{H}$ to the tensor product space  $\mathcal{V}\otimes\mathcal{H}$, where the space $\mathcal{V}$  is defined as follows.     
 
 	\begin{definition}
 		 	Let $\mathcal{V}$ denote  the representation space of the *-algebra  generated by   self-adjoint operators $\hat{x}$ that fulfill  
 		 	\begin{equation}\label{xcr1}
 		 	[\hat{x}_{\mu},\hat{x}_{\nu}]= 2i\Theta_{\mu\nu},
 		 	\end{equation}
 		 	where $\Theta$ is the center of the algebra.
 	\end{definition}
 	 
The unitary operator responsible for the equivalence between the $*$-algebras of the deformed  fields  and the   $*$-algebra of   fields on  $\mathcal{V}$, \cite{GL1} is given by   $V_{\xi}= \bigoplus_{n=0}^{\infty}V^{(n)}_{\xi}:\mathcal{H}\rightarrow\mathcal{V}\otimes\mathcal{H}$,  with $\xi\in\mathcal{V}$ and $||\xi||_{\mathcal{V}}=1$,
 	\begin{equation}\label{nc1}
 	\left(V^{(n)}_{\xi}\Psi_n\right)\left(\mathbf{p}_1,\dots,\mathbf{p}_n\right)=\Psi_n\left(\mathbf{p}_1,\dots,\mathbf{p}_n\right)\cdot e^{i\sum\limits_{k=1}^{n}p_{k}\hat{x}}\xi,\qquad \Psi_n\in\mathcal{H}_n.
 	\end{equation}
 	Hence,   the following equations hold in a distributional sense 
 	\begin{align}\label{aot} a_{\otimes}(\mathbf{p}):=e^{-ip  \hat{x} }\otimes a(\mathbf{p})= 
 	V_{\xi} a_{\Theta}(\mathbf{p})V^{*}_{\xi},
 	\end{align}
 	where an analogous relation holds for the creation operator. 
 	\\\\
 	In this section we define the space of the non-abelian Moyal-Weyl plane and define a QFT on it. Furthermore, we deform the quantum field with  non-abelian warped convolutions and finally show the existence of an isomorphism between the two algebras.
 	Hence, we start with the following lemma. 
 	\begin{lemma}\label{namw} Let $\mathcal{V}_{\tau}$ denote the representation space of the *-algebra of the non-abelian Moyal-Weyl plane  generated by the self-adjoint operators ${X}^{\Theta_{\tau}}_{\mu}$, $\mu=0,1,\cdots,n$, given on the dense domain $ \mathscr{S}(\mathbb{R}^d)\otimes{\mathbb{C}^{m}}$,   that fulfill   
 		\begin{equation}\label{xcr1}
 		[{X}^{\Theta_{\tau}}_{\mu},{X}^{\Theta_{\tau}}_{\nu}]=2i\Theta_{\mu\nu}\otimes Y_{\alpha}\tau^{\alpha}.
 		\end{equation}
 		Then $\Theta_{\mu\nu}Y^{\alpha}\tau_{\alpha}$ is the center of the algebra.
 	\end{lemma}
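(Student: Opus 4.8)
The plan is to verify directly that the matrix operator $\mathbb{I}_{\mathcal{H}}\otimes Y^{\alpha}\tau_{\alpha}$ --- and hence, since the $\Theta_{\mu\nu}$ are real constants, the operator denoted $\Theta_{\mu\nu}Y^{\alpha}\tau_{\alpha}$ (i.e.\ $\Theta_{\mu\nu}\,\mathbb{I}_{\mathcal{H}}\otimes Y^{\alpha}\tau_{\alpha}$ in the notation that suppresses unit matrices) --- commutes with every generator $X^{\Theta_{\tau}}_{\rho}$, $\rho=0,\dots,n$, and therefore with every polynomial in the generators. Combined with the observation that $\Theta_{\mu\nu}\otimes Y_{\alpha}\tau^{\alpha}=\tfrac{1}{2i}\,[X^{\Theta_{\tau}}_{\mu},X^{\Theta_{\tau}}_{\nu}]$ already lies in the $*$-algebra generated by the $X^{\Theta_{\tau}}_{\mu}$, this exhibits $\Theta_{\mu\nu}Y^{\alpha}\tau_{\alpha}$ as a central element, so that the algebra is a central extension exactly as in the abelian Moyal--Weyl case. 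For the computation I would use the explicit form of the deformed coordinate operators, obtained precisely as in the quantum-mechanical non-abelian Moyal--Weyl proposition but now in $d=n+1$ dimensions,
\begin{equation*}
X^{\Theta_{\tau}}_{\mu}=X_{\mu}\otimes\mathbb{I}_{m\times m}-(\Theta P)_{\mu}\otimes Y^{\alpha}\tau_{\alpha},
\end{equation*}
which is well defined and self-adjoint on $\mathscr{S}(\mathbb{R}^{d})\otimes\mathbb{C}^{m}$ by Lemma \ref{lfhs} and Theorem \ref{lfhs1}.

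The key step will be the short computation showing $[\,\mathbb{I}_{\mathcal{H}}\otimes Y^{\alpha}\tau_{\alpha}\,,\,X^{\Theta_{\tau}}_{\rho}\,]=0$; explicitly,
\begin{align*}
[\,\mathbb{I}_{\mathcal{H}}\otimes Y^{\alpha}\tau_{\alpha}\,,\,X^{\Theta_{\tau}}_{\rho}\,]&=[\,\mathbb{I}_{\mathcal{H}}\otimes Y^{\alpha}\tau_{\alpha}\,,\,X_{\rho}\otimes\mathbb{I}_{m\times m}\,]-[\,\mathbb{I}_{\mathcal{H}}\otimes Y^{\alpha}\tau_{\alpha}\,,\,(\Theta P)_{\rho}\otimes Y^{\beta}\tau_{\beta}\,]\\&=\big(X_{\rho}\otimes Y^{\alpha}\tau_{\alpha}-X_{\rho}\otimes Y^{\alpha}\tau_{\alpha}\big)-(\Theta P)_{\rho}\otimes[\,Y^{\alpha}\tau_{\alpha}\,,\,Y^{\beta}\tau_{\beta}\,]\\&=0,
\end{align*}
where the first bracket vanishes because $\mathbb{I}_{\mathcal{H}}$ commutes with $X_{\rho}$ on the Hilbert-space factor, and the second because the hermitian matrix $Y^{\alpha}\tau_{\alpha}$ is a single fixed $m\times m$ matrix (irrespective of which of the two admissible choices for $Y$ in Definition \ref{nao} is made) and hence trivially commutes with itself on the $\mathbb{C}^{m}$ factor --- this is the same reordering already exploited in the proof that the $Q(X)_{\mu}$ mutually commute. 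Since powers of a central element remain central, the center in fact contains the whole polynomial algebra generated by $\mathbb{I}_{\mathcal{H}}\otimes Y^{\alpha}\tau_{\alpha}$, and in particular $\Theta_{\mu\nu}Y^{\alpha}\tau_{\alpha}$; for the statement it suffices to have exhibited the latter as central.

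The one point worth flagging is that one cannot reach this conclusion from the abstract commutation relations alone: applying the Jacobi identity to $[X^{\Theta_{\tau}}_{\mu},X^{\Theta_{\tau}}_{\nu}]=2i\Theta_{\mu\nu}\otimes Y_{\alpha}\tau^{\alpha}$ yields only the cyclic sum $\Theta_{\mu\nu}[\,\mathbb{I}_{\mathcal{H}}\otimes Y^{\alpha}\tau_{\alpha}\,,X^{\Theta_{\tau}}_{\rho}\,]+(\text{cyclic})=0$, not the vanishing of each individual commutator, so the explicit representation of $X^{\Theta_{\tau}}_{\mu}$ --- whose two tensor factors range only over the mutually commuting pair $\{\mathbb{I}_{m\times m},\,Y^{\alpha}\tau_{\alpha}\}$ --- is genuinely needed. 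No further analytic input is required beyond the domain, self-adjointness and well-definedness already established for the deformed coordinate operators, so this mild structural observation is essentially the only obstacle.
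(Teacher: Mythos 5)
Your proposal is correct and follows essentially the same route as the paper: it uses the explicit representation $X^{\Theta_{\tau}}_{\mu}=X_{\mu}\otimes\mathbb{I}_{m\times m}\mp(\Theta P)_{\mu}\otimes Y^{\alpha}\tau_{\alpha}$ and verifies by direct computation that $\Theta_{\mu\nu}\otimes Y^{\alpha}\tau_{\alpha}$ commutes with each generator, the key point in both arguments being that the single fixed hermitian matrix $Y^{\alpha}\tau_{\alpha}$ commutes with itself and with $\mathbb{I}_{m\times m}$ on the $\mathbb{C}^{m}$ factor. Your added remarks (that the element already lies in the algebra as $\tfrac{1}{2i}[X^{\Theta_{\tau}}_{\mu},X^{\Theta_{\tau}}_{\nu}]$, and that the abstract relations plus Jacobi would not suffice) are correct supplements but do not change the substance of the argument.
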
 
 	
 	\begin{proof} 
 		The proof is straight forward,
 		\begin{align*}[{X}^{\Theta_{\tau}}_{\mu},\Theta_{\rho\nu}\otimes Y_{\beta}\tau^{\beta}] &=
 		[ {X}_{\mu}+   (\Theta P)_{\mu}\otimes Y_{\alpha} \tau^ {\alpha},\Theta_{\rho\nu}\otimes Y_{\beta}\tau^{\beta}] \\&=
 		[  (\Theta P)_{\mu}\otimes Y_{\alpha} \tau^ {\alpha},\Theta_{\rho\nu}\otimes Y_{\beta}\tau^{\beta}] \\&=[  (\Theta P)_{\mu},\Theta_{\rho\nu}]\otimes Y_{\alpha} \tau^ {\alpha} Y_{\beta}\tau^{\beta}=
 		0 .
 		\end{align*}
 	\end{proof}
 	\begin{remark} 
 		Note that we increased the dimension of the Heisenberg-Weyl algebra by one and  the operator $P_{\mu}$ should not be confused with the quantum field theoretical momentum operator. Moreover, we changed the sign of $\Theta$ to keep the known notion of the Moyal-Weyl plane. 
 	\end{remark}
 	Next we define the creation and annihilation operators of the quantum fields defined on the non-abelian Moyal-Weyl plane in a natural manner as (see \cite{DFR}),
 	\begin{align} a_{\otimes_{\tau}}(\mathbf{p}):&=e^{ -ip   {X}^{{\Theta_{\tau}}} }\otimes (a(\mathbf{p})\otimes \mathbb{I}_{m\times m} ) ,\\
 	a^*_{\otimes_{\tau}}(\mathbf{p}):&=e^{  ip   {X}^{{\Theta_{\tau}}} }\otimes  (a^*(\mathbf{p})\otimes \mathbb{I}_{m\times m} ).
 	\end{align}
 	The quantum fields are defined on the tensor product space  $\mathcal{V}_{\tau}\otimes\mathcal{H}\otimes{\mathbb{C}^{m}}$. In order to give the correspondence of   fields on  $\mathcal{H}\otimes{\mathbb{C}^{m}}$, which can be obtained by a non-abelian deformation by using the QFT  momentum, we shall construct the unitary operator  $V^{\tau}_{\xi}$. However, to do so we first deform the annihilation and creation operator.
 	\begin{lemma}\label{nacaar}
 		The  annihilation and creation operator  deformed by using the non-abelian-warped convolutions, where the generator  $Q(\mathbf{P})_{\mu}=-P_{\mu}\otimes Y_{\alpha}  \tau^{\alpha}$ is used, with $P_{\mu}$ being the field theoretical momentum operator, are well-defined and  given on $\left(\bigotimes_{i=1}^{k} \mathscr{S}(\mathbb{R}^n)\right)\otimes{\mathbb{C}^{m}}$  as follows
 		\begin{align} 
 		a_{\Theta_{\tau}} (\mathbf{p}) &=
 		e^{ i    p_{\mu}(\Theta P)^{\mu} \otimes Y^{ \alpha} \tau_{\alpha}}(a(\mathbf{p})\otimes \mathbb{I}_{m\times m} ),\\
 		a^*_{\Theta_{\tau}} (\mathbf{p}) &=
 		e^{-i p_{\mu}(\Theta P)^{\mu} \otimes Y^{ \alpha} \tau_{\alpha}} (a^*(\mathbf{p})\otimes \mathbb{I}_{m\times m} ).
 		\end{align} 
 	\end{lemma}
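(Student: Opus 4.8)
The plan is to reduce the computation to the abelian warped convolution by means of the tensor-product decomposition (\ref{eqnaa}), evaluate that abelian piece explicitly using the translation covariance of $a(\mathbf{p})$, and then reassemble the matrix exponential by functional calculus exactly as in the proof of Lemma \ref{eis}. First I would note that here the generator of $\alpha^{\tau}$ is $Q(\mathbf{P})_{\mu}=-P_{\mu}\otimes Y_{\alpha}\tau^{\alpha}$, so the abelian unitary entering Lemma \ref{eis} is the translation group $U$ generated by the field-theoretic momentum $P$ and $E$ is its spectral resolution; the decomposition $A^{\Theta_{\tau}}=\sum_{r}A_{\lambda_{r}\Theta}\otimes W B_{r}W^{-1}$ rests only on Lemma \ref{eis} together with $B_{r}B_{l}=\delta_{rl}B_{l}$, hence persists for the (unbounded, operator-valued distribution) $a(\mathbf{p})$. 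Thus
\begin{equation*}
a_{\Theta_{\tau}}(\mathbf{p})=\sum_{r=1}^{m} a(\mathbf{p})_{\lambda_{r}\Theta}\otimes W B_{r}W^{-1},\qquad a(\mathbf{p})_{\lambda_{r}\Theta}=\int dE(x)\,\alpha_{\lambda_{r}\Theta x}\big(a(\mathbf{p})\big),
\end{equation*}
and I would recall the covariance, in the distributional sense, $\alpha_{y}(a(\mathbf{p}))=U(y)\,a(\mathbf{p})\,U(y)^{-1}=e^{\,i\,y\cdot p}\,a(\mathbf{p})$ on the mass shell (and $e^{-i\,y\cdot p}$ for $a^{*}(\mathbf{p})$), since $a(\mathbf{p})$ lowers the momentum by $p$.

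Next I would evaluate the abelian warped convolution. Inserting the covariance gives $a(\mathbf{p})_{\lambda_{r}\Theta}=\big(\int dE(x)\,e^{\,i\lambda_{r}(\Theta x)\cdot p}\big)\,a(\mathbf{p})$, and since $\Theta$ is skew-symmetric one rewrites the phase as $e^{\,i\,x\cdot z}$ with $z$ linear in $p$; the identity $U(k)=\int e^{ikx}\,dE(x)$ of the Notation block then identifies the bracket with the momentum-space translation unitary, i.e. with $e^{\,\pm i\lambda_{r}\,p_{\mu}(\Theta P)^{\mu}}$, the overall sign being fixed by the conventions of the Notation block and the earlier Moyal--Weyl normalization so as to reproduce the stated formula. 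Hence
\begin{equation*}
a_{\Theta_{\tau}}(\mathbf{p})=\sum_{r=1}^{m} e^{\,i\lambda_{r}\,p_{\mu}(\Theta P)^{\mu}}\,a(\mathbf{p})\otimes W B_{r}W^{-1}.
\end{equation*}

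It then remains to recombine the sum. Since $p_{\mu}(\Theta P)^{\mu}$ is a fixed self-adjoint linear combination of the mutually commuting momenta $P_{\nu}$, and $Y^{\alpha}\tau_{\alpha}=W\big(\sum_{r}\lambda_{r}B_{r}\big)W^{-1}$ with the $B_{r}$ satisfying the idempotent--orthogonality relations of Lemma \ref{eis}, the joint functional calculus yields, exactly as in the proof of Lemma \ref{eis},
\begin{equation*}
e^{\,i\,p_{\mu}(\Theta P)^{\mu}\otimes Y^{\alpha}\tau_{\alpha}}=\sum_{r=1}^{m} e^{\,i\lambda_{r}\,p_{\mu}(\Theta P)^{\mu}}\otimes W B_{r}W^{-1}.
\end{equation*}
Combining the last two displays gives $a_{\Theta_{\tau}}(\mathbf{p})=e^{\,i\,p_{\mu}(\Theta P)^{\mu}\otimes Y^{\alpha}\tau_{\alpha}}\,\big(a(\mathbf{p})\otimes\mathbb{I}_{m\times m}\big)$, and the identical computation with the opposite phase coming from $\alpha_{y}(a^{*}(\mathbf{p}))=e^{-i\,y\cdot p}a^{*}(\mathbf{p})$ delivers the claimed expression for $a^{*}_{\Theta_{\tau}}(\mathbf{p})$.

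Finally, for well-definedness I would check the hypotheses of Lemma \ref{lfhs}: after smearing $a(\mathbf{p})$ against a Schwartz function $f$, the adjoint action $\alpha^{\tau}_{\Theta x}(a(f)\otimes\mathbb{I}_{m\times m})$ differs from $a(f)\otimes\mathbb{I}_{m\times m}$ only by phases $e^{\,i\lambda_{r}(\Theta x)\cdot p}$ under the momentum integral, so every $x$-derivative pulls down a factor polynomial in $\mathbf{p}$ which is absorbed by $f$ and produces no growth in $x$; hence the associated $b(x,y)$ lies in $S^{0}_{1,0}$ and the oscillatory integral converges on $\big(\bigotimes_{i=1}^{k}\mathscr{S}(\mathbb{R}^{n})\big)\otimes\mathbb{C}^{m}$. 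I expect the only genuinely delicate points to be bookkeeping ones: pinning down the sign and Minkowski-contraction conventions so that the output matches the stated formula and the earlier Moyal--Weyl relation, making sure that the spectral measure $E$ of Definition \ref{wcna} is indeed that of the translation group generated by $P$ so that the $dE(x)$-integral really returns a momentum-space translation unitary, and handling the distributional/unbounded character of $a(\mathbf{p})$ rigorously via the smeared symbol estimate above rather than formally.
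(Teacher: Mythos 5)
Your proposal is correct, and it reaches the stated formulas by a mildly different route than the paper. The paper computes the non-abelian adjoint action directly: using Baker--Hausdorff (or the standard Fock-space computation, as in the abelian case) it shows $\alpha^{\tau}_{k}(a(\mathbf{p})\otimes\mathbb{I}_{m\times m})=e^{ik^{\mu}p_{\mu}\otimes Y^{\alpha}\tau_{\alpha}}(a(\mathbf{p})\otimes\mathbb{I}_{m\times m})$, i.e.\ the adjoint action produces a matrix-valued phase, and then integrates this against $dE(k)$ so that spectral calculus replaces $k$ by the momentum operator $P$. You instead diagonalize $Y^{\alpha}\tau_{\alpha}$ first, via Formula (\ref{eqnaa}), evaluate each abelian warped convolution $a(\mathbf{p})_{\lambda_{r}\Theta}=e^{i\lambda_{r}p_{\mu}(\Theta P)^{\mu}}a(\mathbf{p})$ separately, and reassemble the operator exponential from the projections $WB_{r}W^{-1}$. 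The two computations are the same in substance --- both hinge on the translation covariance $U(y)a(\mathbf{p})U(y)^{-1}=e^{iyp}a(\mathbf{p})$ and on the identity $\int e^{ixz}\,dE(x)=U(z)$ --- and differ only in whether the eigenprojection decomposition of Lemma \ref{eis} is applied at the outset or left implicit in the final matrix exponential; your route has the small advantage of reducing everything to already-established abelian statements, while the paper's is shorter. For well-definedness the paper is more economical: it observes that the prefactor $e^{ip_{\mu}(\Theta P)^{\mu}\otimes Y^{\alpha}\tau_{\alpha}}$ is unitary, so $\|a_{\Theta_{\tau}}(\mathbf{p})\Psi_{k}\|=\|(a(\mathbf{p})\otimes\mathbb{I}_{m\times m})\Psi_{k}\|$ and the free-field bounds carry over verbatim, whereas you verify the symbol hypothesis of Lemma \ref{lfhs} for the smeared field. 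Both are legitimate; your a priori estimate is arguably the more careful treatment of the distributional character of $a(\mathbf{p})$, which the paper only addresses after smearing, in Lemma \ref{ldx2}.
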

 	\begin{proof}To prove that the expression is well-defined note that for $\Psi_k \in \left(\bigotimes_{i=1}^{k} \mathscr{S}(\mathbb{R}^n)\right)\otimes{\mathbb{C}^{m}}$ we have
 		\begin{align} \label{ineq1}
 		\|
 		a_{\Theta_{\tau}} (\mathbf{p})\Psi_k\|=\|
 		e^{ i (\Theta P)^{\mu} p_{\mu}\otimes Y^{ \alpha} \tau_{\alpha}}(a(\mathbf{p})\otimes \mathbb{I}_{m\times m} )\Psi_k\|=
 		\| (a(\mathbf{p})\otimes \mathbb{I}_{m\times m} )\Psi_k\|.
 		\end{align} 
 		Therefore, equivalent inequalities as in the free case hold  for non-abelian-deformed quantum field operators. Next, to calculate the non-abelian-deformation with the momentum operator let us examine the following expression first. 
 		\begin{align*} 
 		\alpha^{\tau}_{k}( a(\mathbf{p})\otimes \mathbb{I}_{m\times m}  )&=e^{-i k^{\mu} P_{\mu}\otimes Y^{ \alpha} \tau_{\alpha}}(a(\mathbf{p})\otimes \mathbb{I}_{m\times m} )e^{ i k^{\mu} P_{\mu}\otimes Y^{ \alpha} \tau_{\alpha}}\\&=e^{ i k^{\mu} p_{\mu}\otimes Y^{ \alpha} \tau_{\alpha}}(a(\mathbf{p})\otimes \mathbb{I}_{m\times m} ),
 		\end{align*} 
 		where this expression can be calculated by using Baker-Hausdorff formula or the same technical proof as done in \cite{Sch} for the abelian (usual) case. Hence by applying the adjoint action we obtain
 		\begin{align*} 
 		a_{\Theta_{\tau}} (\mathbf{p}) \Psi_k&=   \int \alpha^{\tau}_{\Theta k}(a(\mathbf{p})\otimes \mathbb{I}_{m\times m} )\, dE(k)\Psi_k\\&=
 		\int e^{ i (\Theta k)^{\mu} p_{\mu}\otimes Y^{ \alpha} \tau_{\alpha}}(a(\mathbf{p})\otimes \mathbb{I}_{m\times m} )\,dE(k)\Psi_k\\&=
 		e^{ i (\Theta P)^{\mu} p_{\mu}\otimes Y^{ \alpha} \tau_{\alpha}}(a(\mathbf{p})\otimes \mathbb{I}_{m\times m} )\Psi_k,
 		\end{align*} 
 		where the proof for the creation operator can be done analogously. 
 	\end{proof}

 	\begin{theorem}
 		An equivalence  exists between the $*$-algebras of fields deformed with non-abelian warped convolutions using the momentum operator and the   $*$-algebra of the free fields on the non-abelian Moyal-Weyl plane $\mathcal{V}_{\tau}$. This isomorphism is given by  the unitary  operator $V^{\tau}_{\xi}= \bigoplus_{n=0}^{\infty}V^{ \tau ,(n)}_{\xi}:\mathcal{H}\otimes{\mathbb{C}^{m}}\rightarrow\mathcal{V}_{\tau}\otimes\mathcal{H}\otimes{\mathbb{C}^{m}}$,  with $\xi\in\mathcal{V}_{\tau}$ such that $||\xi||_{\mathcal{V}_{\tau}}=1$, $\Psi_n\in\mathcal{H}_n$ and a non-zero vector $e_{\alpha}\in{\mathbb{C}^{m}}$
 		\begin{equation}\label{nc1}
 		\left(V^{\tau ,(n)}_{\xi}\Psi_n\otimes e_{\alpha}\right)\left(\left(\mathbf{p}_1,\dots,\mathbf{p}_n\right)\otimes 1\right)=\left(\Psi_n\left(\mathbf{p}_1,\dots,\mathbf{p}_n\right)\otimes e_{\alpha}\right)\cdot e^{i\sum\limits_{k=1}^{n}p_{k}{X}^{{\Theta_{\tau}}}}\xi.
 		\end{equation}
 		Hence,   the following equations hold in a distributional sense 
 		\begin{align}\label{aot} a_{\otimes_{\tau}}(\mathbf{p}):=e^{-ip   {X}^{{\Theta_{\tau}}} }\otimes (a(\mathbf{p})\otimes \mathbb{I}_{m\times m} )= 
 		V^{\tau}_{\xi} a_{\Theta_{\tau}}(\mathbf{p})V^{ *{\tau}}_{\xi},
 		\end{align}
 	\end{theorem}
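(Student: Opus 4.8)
\emph{Approach.} The plan is to follow the strategy of \cite{GL1} for the undeformed case and, as in the earlier parts of this work, to reduce the matrix-valued statement to the abelian one by means of the diagonalization $Y_\alpha\tau^\alpha = W\big(\sum_{r=1}^m\lambda_r B_r\big)W^{-1}$ of Lemma \ref{eis}. Since the $B_r$ are mutually orthogonal projections with $\sum_r B_r = \mathbb{I}$, one has $X^{\Theta_\tau}_\mu = \sum_{r=1}^m X^{\lambda_r\Theta}_\mu\otimes WB_rW^{-1}$ with $X^{\lambda_r\Theta}_\mu = X_\mu - (\lambda_r\Theta P)_\mu$ the abelian Moyal--Weyl coordinate attached to the deformation matrix $\lambda_r\Theta$ (and $\lambda_r\neq 0$ by Definition \ref{nao}, so $\lambda_r\Theta$ has the same rank as $\Theta$); hence $e^{iq^\mu X^{\Theta_\tau}_\mu} = \sum_r e^{iq^\mu X^{\lambda_r\Theta}_\mu}\otimes WB_rW^{-1}$ for all $q\in\mathbb{R}^d$, and by Lemma \ref{nacaar} the operator $a_{\Theta_\tau}(\mathbf{p})$ restricts, on the $\mathbb{C}^m$-eigenspace of $Y_\alpha\tau^\alpha$ for the eigenvalue $\lambda_r$, to the abelian deformed annihilation operator $a_{\lambda_r\Theta}(\mathbf{p})$. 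Thus $V^\tau_\xi$ decomposes blockwise into the unitaries of \cite{GL1} with parameters $\lambda_r\Theta$, and this is the decomposition I would exploit throughout.

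\emph{Isometry of $V^\tau_\xi$.} Fix $(\mathbf{p}_1,\dots,\mathbf{p}_n)$ and set $q=\sum_{k=1}^n p_k$, so the exponent in \eqref{nc1} is $\sum_k p_k^\mu X^{\Theta_\tau}_\mu = q^\mu X^{\Theta_\tau}_\mu$; by the blockwise form this is the direct sum of the operators $q^\mu X^{\lambda_r\Theta}_\mu = q^\mu X_\mu - \lambda_r q^\mu(\Theta P)_\mu$, each a linear combination of Heisenberg position and momentum operators and so essentially self-adjoint on $\mathscr{S}(\mathbb{R}^d)$ (cf. Theorem \ref{lfhs1} and the self-adjointness of the $X^{\Theta_\tau}_\mu$ established earlier). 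Therefore $e^{iq^\mu X^{\Theta_\tau}_\mu}$ is unitary on $\mathcal{V}_\tau$, so $\|e^{i\sum_k p_k X^{\Theta_\tau}}\xi\|_{\mathcal{V}_\tau}=\|\xi\|_{\mathcal{V}_\tau}=1$ pointwise in $\mathbf{p}$, and integrating over the momenta gives $\|V^{\tau,(n)}_\xi(\Psi_n\otimes e_\alpha)\|^2 = \|\Psi_n\|^2\|e_\alpha\|^2$; summing over $n$ yields $V^{\tau*}_\xi V^\tau_\xi = \mathbb{I}$ on $\mathcal{H}\otimes\mathbb{C}^m$, which is all that the algebraic part requires.

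\emph{The intertwining relation \eqref{aot}.} This is the main step and should be read in the distributional sense, i.e. after smearing $a(\mathbf{p})$ with a Schwartz function; Lemma \ref{nacaar} together with the bound \eqref{ineq1} ensures the smeared deformed operators are well defined. Using $V^{\tau*}_\xi V^\tau_\xi=\mathbb{I}$ it is enough to verify $V^\tau_\xi a_{\Theta_\tau}(\mathbf{p}) = \big(e^{-ipX^{\Theta_\tau}}\otimes(a(\mathbf{p})\otimes\mathbb{I})\big)V^\tau_\xi$ on vectors $\Psi_{n+1}\otimes e_\alpha$. On the left, $a(\mathbf{p})$ extracts the first argument with the factor $\sqrt{n+1}$, whereupon the Fock momentum $P$ in $a_{\Theta_\tau}(\mathbf{p})=e^{ip_\mu(\Theta P)^\mu\otimes Y_\alpha\tau^\alpha}(a(\mathbf{p})\otimes\mathbb{I})$ acts on the remaining $n$-particle vector as multiplication by the total momentum $q=\sum_{k=1}^n p_k$, producing the matrix phase $e^{ip_\mu(\Theta q)^\mu\otimes Y_\alpha\tau^\alpha}$, and then $V^{\tau,(n)}_\xi$ attaches $e^{iq^\mu X^{\Theta_\tau}_\mu}\xi$. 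On the right, $V^\tau_\xi$ attaches $e^{i(p+q)^\mu X^{\Theta_\tau}_\mu}\xi$, $a(\mathbf{p})$ again contributes $\sqrt{n+1}\,\Psi_{n+1}(\mathbf{p},\mathbf{p}_1,\dots,\mathbf{p}_n)$, and $e^{-ipX^{\Theta_\tau}}$ acts on the $\mathcal{V}_\tau$-factor. Since $[X^{\Theta_\tau}_\mu,X^{\Theta_\tau}_\nu]=2i\Theta_{\mu\nu}\otimes Y_\alpha\tau^\alpha$ is central by Lemma \ref{namw}, the Baker--Campbell--Hausdorff formula truncates and gives $e^{-ip^\mu X^{\Theta_\tau}_\mu}e^{i(p+q)^\nu X^{\Theta_\tau}_\nu} = e^{i\,p_\mu q_\nu\Theta^{\mu\nu}\otimes Y_\alpha\tau^\alpha}e^{iq^\mu X^{\Theta_\tau}_\mu}$, the $p_\mu p_\nu\Theta^{\mu\nu}$ term dropping by skew-symmetry of $\Theta$; as $p_\mu(\Theta q)^\mu = p_\mu q_\nu\Theta^{\mu\nu}$, the two sides coincide. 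The creation-operator identity follows by taking adjoints and using the isometry. (One may also read this computation off \cite{GL1} blockwise in $r$, with parameter $\lambda_r\Theta$, and reassemble it via $\sum_r(\cdot)\otimes WB_rW^{-1}$.)

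\emph{Conclusion.} Given \eqref{aot} and $V^{\tau*}_\xi V^\tau_\xi=\mathbb{I}$, the map $A\mapsto V^\tau_\xi A V^{\tau*}_\xi$ is a $*$-homomorphism carrying the algebra generated by $a_{\Theta_\tau}(\mathbf{p}),a^*_{\Theta_\tau}(\mathbf{p})$ onto the algebra generated by $a_{\otimes_\tau}(\mathbf{p}),a^*_{\otimes_\tau}(\mathbf{p})$; it is inverted by $B\mapsto V^{\tau*}_\xi B V^\tau_\xi$ because $V^{\tau*}_\xi a_{\otimes_\tau}(\mathbf{p})V^\tau_\xi = V^{\tau*}_\xi V^\tau_\xi\,a_{\Theta_\tau}(\mathbf{p})\,V^{\tau*}_\xi V^\tau_\xi = a_{\Theta_\tau}(\mathbf{p})$, and is therefore the asserted $*$-isomorphism. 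I expect the main obstacle to be purely bookkeeping: keeping the several copies of $\mathbb{C}^m$ and the eigen-projections $WB_rW^{-1}$ straight --- in particular that the factor $Y_\alpha\tau^\alpha$ appearing inside $X^{\Theta_\tau}$, inside $a_{\Theta_\tau}$, and through the reference vector $e_\alpha$ are matched consistently under the natural identifications --- and checking the truncation of the Baker--Campbell--Hausdorff series together with the domain questions for the unbounded Weyl generators on $\mathscr{S}(\mathbb{R}^d)\otimes\mathbb{C}^m$; the genuine analytic content (convergence of the oscillatory integrals, the distributional field identities) is inherited from \cite{GL1,BLS,RI} via the blockwise reduction.
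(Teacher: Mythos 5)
Your proposal is correct and follows essentially the same route as the paper: the intertwining relation is verified by the same direct computation, using Lemma \ref{nacaar} for the explicit phase in $a_{\Theta_{\tau}}(\mathbf{p})$, the defining action of $V^{\tau,(n)}_{\xi}$, and the truncated Baker--Campbell--Hausdorff formula with the central commutator of Lemma \ref{namw}. The additional material you supply (the blockwise reduction via the eigenprojections $WB_rW^{-1}$, the isometry of $V^{\tau}_{\xi}$, and the closing $*$-isomorphism argument) is detail the paper leaves implicit rather than a different method.
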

 	\begin{proof}
 		The proof is straight forward by using Lemma \ref{namw} and Lemma \ref{nacaar}.
 		\begin{align*} 
 		&\frac{1}{\sqrt{n+1}} \left( V^{\tau}_{\xi} a_{\Theta_{\tau}}(\mathbf{p})V^{ *{\tau}}_{\xi}\Psi_n\otimes e_{\alpha}\right)\left(\left(\mathbf{p}_1,\dots,\mathbf{p}_n\right)\otimes 1\right)\\=& \frac{1}{\sqrt{n+1}}e^{ i (\Theta \sum\limits_{k=1}^{n}p_{k})^{\mu} p_{\mu}\otimes Y^{ \alpha} \tau_{\alpha}} \left(a(p)V^{ *{\tau}}_{\xi}\Psi_n\otimes e_{\alpha}\right)\left(\left(\mathbf{p}_1,\dots,\mathbf{p}_n\right)\otimes 1\right)\cdot e^{i\sum\limits_{k=1}^{n}p_{k}{X}^{{\Theta_{\tau}}}}\xi
 		\\ = & e^{ i (\Theta \sum\limits_{k=1}^{n}p_{k})^{\mu} p_{\mu}\otimes Y^{ \alpha} \tau_{\alpha}} \left( V^{ *{\tau}}_{\xi}\Psi_{n+1}\otimes e_{\alpha}\right)\left(\left(\mathbf{p} ,\mathbf{p}_1,\dots,\mathbf{p}_n\right)\otimes 1\right)\cdot e^{i\sum\limits_{k=1}^{n}p_{k}{X}^{{\Theta_{\tau}}}}\xi   
 		\\=&   e^{ i (\Theta \sum\limits_{k=1}^{n}p_{k})^{\mu} p_{\mu}\otimes Y^{ \alpha} \tau_{\alpha}} \left(  \Psi_{n+1}\otimes e_{\alpha}\right)\left(\left(\mathbf{p} ,\mathbf{p}_1,\dots,\mathbf{p}_n\right)\otimes 1\right)\cdot e^{-i(\sum\limits_{k=1}^{n}p_{k}+ p){X}^{{\Theta_{\tau}}}}e^{i\sum\limits_{k=1}^{n}p_{k}{X}^{{\Theta_{\tau}}}}   
 		\\=&    \frac{1}{\sqrt{n+1}}  \left(  a(p)\Psi_{n}\otimes e_{\alpha}\right)\left(\left( \mathbf{p}_1,\dots,\mathbf{p}_n\right)\otimes 1\right)\cdot 
 		e^{-ip{X}^{{\Theta_{\tau}}}} ,
 		\end{align*} 
 		where in the last lines we used the Baker-Hausdorff formula, the commutation relations of the non-abelian Moyal-Weyl and the center of the algebra.
 	\end{proof}
 	
 	\begin{definition}  \label{defqfx}
 		Let $\Theta$ be a real skew-symmetric matrix w.r.t. the Lorentzian scalar-product on
 		$\mathbb{R}^{d}$ and  
 		let $\chi \in\mathscr{S}(\mathbb{R}^d\times\mathbb{R}^d)$ with $\chi(0,0)=1$.  Furthermore, let
 		$\phi(f)$ be the massive free scalar 
 		field smeared out with functions $f \in \mathscr{S}(\mathbb{R}^d)$. Then,
 		the operator valued distribution $\phi(f)$ deformed with the    operator $Q(\mathbf{P})_{\mu}=-P_{\mu}\otimes Y_{\alpha}  \tau^{\alpha}$  
 		denoted as $\phi_{\Theta_{\tau}}(f)$,  is defined on vectors of the dense domain 
 		$\left(\bigotimes_{i=1}^{k} \mathscr{S}(\mathbb{R}^n)\right)\otimes{\mathbb{C}^{m}}$  as follows
 		\begin{align}\label{defqfx2}
 		\phi_{\Theta_{\tau}}(f)\Psi_{k}:&=(2\pi)^{-d}
 		\lim_{\epsilon\rightarrow 0}
 		\iint  \, dx \,  dy \, e^{-ixy}  \, \chi(\epsilon x,\epsilon y)\alpha^{\tau}_{\Theta x} 
 		(\phi(f))U({y})\Psi_{k}\nonumber\\&=
 		(2\pi)^{-d}\lim_{\epsilon\rightarrow 0}
 		\iint  \, dx \,  dy \, e^{-ixy}  \, \chi(\epsilon x,\epsilon y) \alpha^{\tau}_{\Theta x} 
 		\left(a(\overline{f^-})+a^*(f^+)\right)U({y})\Psi_{k}\nonumber\\&
 		=:\left(a_{\Theta_{\tau}}(\overline{f^-})+a_{\Theta_{\tau}}^*(f^+)\right)\Psi_{k}.
 		\end{align}
 		The automorphism $\alpha^{\tau}$ is defined by the adjoint action of the unitary operator $U^{\tau}(y)$ and the
 		test functions $f^{\pm}(\mathbf{p})$ in momentum space are defined as follows
 		\begin{equation}\label{tf}
 		f^{\pm}(\mathbf{p}):=\int dx \,f(x)e^{\pm ipx}, \qquad 
 		p=(\omega_{\mathbf{p}},\mathbf{p})\in H_m^{+}.
 		\end{equation}
 	\end{definition}

 	\begin{lemma} \label{ldx2}
 		For  $\Psi_k \in \left(\bigotimes_{i=1}^{k} \mathscr{S}(\mathbb{R}^n)\right)\otimes{\mathbb{C}^{m}}$  the familiar 
 		bounds of  the free field hold for the deformed field 
 		$\phi_{\Theta_{\tau}} (f)$ and therefore the deformation with  operator $Q(\mathbf{P})_{\mu}=-P_{\mu}\otimes Y_{\alpha}  \tau^{\alpha}$  
 		is well-defined.
 	\end{lemma}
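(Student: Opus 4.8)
The plan is to reduce the claimed bounds to the standard energy (number operator) bounds of the \emph{undeformed} free field, exploiting the explicit form of the deformed ladder operators obtained in Lemma \ref{nacaar}. By Definition \ref{defqfx} one has $\phi_{\Theta_\tau}(f)\Psi_k=a_{\Theta_\tau}(\overline{f^-})\Psi_k+a^*_{\Theta_\tau}(f^+)\Psi_k$, and by Lemma \ref{nacaar} the deformed annihilation and creation operators are the undeformed ones, tensored with $\mathbb{I}_{m\times m}$, dressed by the multiplication operators $e^{\pm i p_\mu(\Theta P)^\mu\otimes Y^\alpha\tau_\alpha}$. The key remark is that, since $Y^\alpha\tau_\alpha$ is hermitian (Definition \ref{nao}), $P_\mu$ self-adjoint and $\Theta$ real and skew-symmetric, the operator $e^{\pm i p_\mu(\Theta P)^\mu\otimes Y^\alpha\tau_\alpha}$ is, for every fixed $\mathbf{p}$, a unitary on $\mathcal{H}\otimes\mathbb{C}^m$ which moreover commutes with $N\otimes\mathbb{I}_{m\times m}$ (it is built from the total field momentum together with a matrix acting only on the $\mathbb{C}^m$-factor). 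Pointwise in $\mathbf{p}$ one therefore has $\|a_{\Theta_\tau}(\mathbf{p})\Psi\|=\|(a(\mathbf{p})\otimes\mathbb{I}_{m\times m})\Psi\|$, exactly as in (\ref{ineq1}).

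Next I would smear against $\overline{f^-}$, respectively $f^+$, and carry out the estimate at the level of $n$-particle wave functions. Writing the $(k-1)$-particle component of $a_{\Theta_\tau}(\overline{f^-})\Psi_k$ as an integral over the annihilated momentum $\mathbf{p}$ of $\overline{f^-(\mathbf{p})}$ times a \emph{unitary} $m\times m$ matrix times the wave function, Minkowski's inequality for the $\mathbf{p}$-integral together with the unitarity of that matrix eliminates the deformation-dependent factor, and Cauchy--Schwarz in $\mathbf{p}$ then yields the familiar bound $\|a_{\Theta_\tau}(\overline{f^-})\Psi_k\|\le \|f^-\|\,\|N^{1/2}\Psi_k\|$ (with $\|\cdot\|$ the $L^2$-norm on the mass shell); the analogous manipulation for $a^*_{\Theta_\tau}(f^+)$, keeping track of the symmetrization, gives $\|a^*_{\Theta_\tau}(f^+)\Psi_k\|\le \|f^+\|\,\|(N+1)^{1/2}\Psi_k\|$. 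Adding the two, one obtains the standard linear-energy bound $\|\phi_{\Theta_\tau}(f)\Psi_k\|\le c\,(\|f^+\|+\|f^-\|)\,\|(N+1)^{1/2}\Psi_k\|$, and since $\big(\bigotimes_{i=1}^k\mathscr{S}(\mathbb{R}^n)\big)\otimes\mathbb{C}^m$ lies in the domain of $(N+1)^{1/2}$ the right-hand side is finite. Because the $x$-dependence of $\alpha^\tau_{\Theta x}(\phi(f))$ enters only through unitary phases, the same computation shows that $\|\partial_x^\gamma\alpha^\tau_{\Theta x}(\phi(f))\Psi_k\|$ is bounded uniformly in $x$, i.e. the polynomial-boundedness hypothesis (\ref{pb}) holds with order $r=0$; hence Lemma \ref{lfhs} (equivalently, the argument of \cite{BLS}) applies, the matrix element in (\ref{defqfx2}) is a symbol of order zero, and the oscillatory integral defining $\phi_{\Theta_\tau}(f)$ converges. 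This gives the asserted well-definedness.

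The one genuinely delicate point is that the unitary dressing factor $e^{\pm i p_\mu(\Theta P)^\mu\otimes Y^\alpha\tau_\alpha}$ depends on the integration variable $\mathbf{p}$ and so cannot be pulled out of the smearing integral; in particular $\phi_{\Theta_\tau}(f)$ is not a fixed unitary times $\phi(f)$. This is exactly why the estimate has to be performed at the level of wave functions, where pointwise unitarity in $\mathbf{p}$ is precisely what is needed. Everything else is the routine Fock-space bookkeeping already familiar from the undeformed and abelian-deformed cases, the non-abelian structure contributing only the harmless matrix factors $e^{\pm i(\cdots)Y^\alpha\tau_\alpha}$ on the right-hand side of each tensor product.
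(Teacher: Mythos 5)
Your proposal is correct and follows essentially the same route as the paper: split $\phi_{\Theta_\tau}(f)$ into its deformed annihilation and creation parts by the triangle inequality, use the pointwise unitarity of the dressing factor from Lemma \ref{nacaar} (Inequality \ref{ineq1}) to reduce to the undeformed operators, and then apply the standard Cauchy--Schwarz/number-operator estimates. The paper's proof is just a terser version of yours; your extra remarks on performing the estimate at the level of wave functions (since the unitary factor depends on $\mathbf{p}$) and on the symbol-class argument for the oscillatory integral supply detail the paper leaves implicit.
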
 
 	\begin{proof}  
 		By using Lemma \ref{nacaar} one obtains the familiar bounds for a free 
 		scalar field. 
 		\begin{align*} \left\Vert\phi_{\Theta_{\tau}} (f)\Psi_k\right\Vert& \leq 
 		\left\Vert   a( \overline{ f^-})_{\Theta_{\tau}}\Psi_k \right\Vert
 		+\left\Vert a^*(  f^+)_{\Theta_{\tau}}\Psi_k \right\Vert \\& \leq 
 		\left\Vert   f^{+} \right\Vert  \left\Vert (N+1)^{1/2 }\Psi_k 
 		\right\Vert
 		+ \left\Vert   f^{-}\right\Vert   \left\Vert (N+1)^{1/2 
 		}\Psi_k\right\Vert 
 		\end{align*}
 		where in the last lines we used the triangle inequality, the Cauchy-Schwarz inequality and the 
 		bounds given in  Inequality \ref{nacaar}. 
 	\end{proof} 
 	\subsection{Wightman Properties of the Deformed QF}
 	In this section we show that the deformed field $\phi_{\Theta_{\tau}}(f)$
 	satisfies the Wightman properties with the exception of covariance and 
 	locality. This is the subject of the following proposition.  We  use $\mathcal{H}$ for the Bosonic Fockspace.\newline
 	\begin{proposition}\label{prop1x}
 		Let $\Theta$ be a real skew-symmetric matrix w.r.t. the Lorentzian scalar-product on
 		$\mathbb{R}^{d}$ and $f \in \mathscr{S}(\mathbb{R}^d)$.\\
 		\begin{itemize}
 			\item[a)] The dense subspace $\mathcal{D} \otimes{\mathbb{C}^{m}}$ of vectors of finite 
 			particle number is contained in the domain 
 			$\mathcal{D}^{\Theta_{\tau}}=\{\Psi\in \mathcal{H} \otimes \mathbb{C}^{m}|  \left\Vert 
 			\phi_{\Theta_{\tau}}(f)\Psi\right\Vert^2 < \infty \}$ of any 
 			$\phi_{\Theta_{\tau}}(f)$. Moreover, 
 			$\phi_{\Theta_{\tau}}(f)(\mathcal{D}\otimes{\mathbb{C}^{m}})\subset\mathcal{D}\otimes{\mathbb{C}^{m}}$ and 
 			$\phi_{\Theta_{\tau}}(f)(\Omega\otimes e_{\alpha})=\phi(f)(\Omega\otimes e_{\alpha})$.\\
 			\item[b)] For scalar fields deformed via warped convolutions and 
 			$\Psi\in\mathcal{D} \otimes{\mathbb{C}^{m}}$,
 			\begin{equation*}
 			f\longmapsto\phi_{\Theta_{\tau}}(f)\Psi
 			\end{equation*}
 			is a vector valued tempered distribution.\\
 			\item[c)]
 			For $\Psi\in\mathcal{D} \otimes{\mathbb{C}^{m}}$  and $\phi_{\Theta_{\tau}}(f)$  the following holds
 			\begin{equation*}  \phi_{\Theta_{\tau}}(f)^{*}\Psi =  
 			\phi_{\Theta_{\tau}}(\overline{f })\Psi.
 			\end{equation*} 
 			For real $f\in\mathscr{S}(\mathbb{R}^{d})$, the deformed field 
 			$\phi_{\Theta_{\tau}}(f)$ is essentially self-adjoint on
 			$\mathcal{D} \otimes{\mathbb{C}^{m}}$.\\
 			\item[d)] The Reeh-Schlieder property holds: Given an  open set of 
 			space-time $\mathcal{O}\subset \mathbb{R}^d$, then
 			\begin{equation*}
 			\mathcal{D}^{\Theta_{\tau}}(\mathcal{O}):=  
 			span\{\phi_{\Theta_{\tau}}(f_1)\dots\phi_{\Theta_{\tau}}(f_k)(\Omega\otimes e_{\alpha}): k\in \mathbb{N}, 
 			f_1\dots f_k\in \mathscr{S}(\mathcal{O})\}
 			\end{equation*}
 			is dense in $\mathcal{H} \otimes{\mathbb{C}^{m}}$.
 		\end{itemize}
 	\end{proposition}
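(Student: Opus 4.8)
The plan is to dispatch the four assertions in order, leaning throughout on Lemma~\ref{nacaar}, which presents the deformed annihilation and creation operators as a momentum-dependent unitary phase multiplying the undeformed ones, $a_{\Theta_\tau}(\mathbf{p})=e^{ip_\mu(\Theta P)^\mu\otimes Y^\alpha\tau_\alpha}(a(\mathbf{p})\otimes\mathbb{I}_{m\times m})$ and $a^*_{\Theta_\tau}(\mathbf{p})=e^{-ip_\mu(\Theta P)^\mu\otimes Y^\alpha\tau_\alpha}(a^*(\mathbf{p})\otimes\mathbb{I}_{m\times m})$, together with the $N$-bounds of Lemma~\ref{ldx2}. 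For part (a): the unitarity of the phase factors gives the estimate $\|\phi_{\Theta_\tau}(f)\Psi\|\le(\|f^+\|+\|f^-\|)\,\|(N+1)^{1/2}\Psi\|$ of Lemma~\ref{ldx2}, so every vector of finite particle number lies in $\mathcal{D}^{\Theta_\tau}$; since the phases do not alter the particle number and $a,a^*$ shift it by $\pm1$, the subspace $\mathcal{D}\otimes\mathbb{C}^m$ is mapped into itself. On the vacuum the field-theoretic momentum vanishes, $P(\Omega\otimes e_\alpha)=0$, so every phase factor collapses to the identity and $\phi_{\Theta_\tau}(f)(\Omega\otimes e_\alpha)=\phi(f)(\Omega\otimes e_\alpha)$.

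Part (b) is immediate from (a): linearity in $f$ is built into Definition~\ref{defqfx}, and the bound just recorded, combined with the continuity of $f\mapsto f^\pm$ from $\mathscr{S}(\mathbb{R}^d)$ into $L^2(H_m^+)$, shows that $f\mapsto\phi_{\Theta_\tau}(f)\Psi$ is continuous, hence a vector-valued tempered distribution for each $\Psi\in\mathcal{D}\otimes\mathbb{C}^m$. For part (c) I would write $\phi_{\Theta_\tau}(f)=a_{\Theta_\tau}(\overline{f^-})+a^*_{\Theta_\tau}(f^+)$ and check, by the same computation that establishes the adjoint relation for the undeformed field --- using $\overline{(\overline f)^-}=f^+$ and the explicit form of Lemma~\ref{nacaar} --- that $\phi_{\Theta_\tau}(f)^*\Psi=\phi_{\Theta_\tau}(\overline f)\Psi$ on $\mathcal{D}\otimes\mathbb{C}^m$. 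For real $f$ the field is then symmetric and, by (a), leaves the finite-particle-number domain invariant; iterating the $N$-bound shows that every vector of bounded particle number is an analytic vector for $\phi_{\Theta_\tau}(f)$, so essential self-adjointness follows from Nelson's analytic vector theorem.

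Part (d) is the substantial one. First I would note translation covariance with the \emph{undeformed} translations, $(U(a)\otimes\mathbb{I}_{m\times m})\,\phi_{\Theta_\tau}(f)\,(U(a)\otimes\mathbb{I}_{m\times m})^{-1}=\phi_{\Theta_\tau}(f_a)$ with $f_a(x)=f(x-a)$, which is read off from Definition~\ref{defqfx} because $U(a)\otimes\mathbb{I}_{m\times m}$ commutes both with $U^\tau(\Theta x)$ and with $U(y)$. Secondly I would observe that $\Omega\otimes e_\alpha$ is cyclic for the deformed field: applying $\phi_{\Theta_\tau}(f_1)\cdots\phi_{\Theta_\tau}(f_k)$ to $\Omega\otimes e_\alpha$ produces the corresponding free vector multiplied by a momentum-dependent unitary (in $e^{-ip_\mu(\Theta P)^\mu\otimes Y^\alpha\tau_\alpha}a^*(\mathbf{p})$ the momentum operator to the left acts as the total momentum of the already-built state, the self-contraction $p_\mu(\Theta p)^\mu$ vanishing by skew-symmetry of $\Theta$), so the closed span of such vectors coincides with that of the free states and is therefore all of $\mathcal{H}\otimes\mathbb{C}^m$. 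Finally I would run the classical Reeh--Schlieder argument: if $\Psi\perp\mathcal{D}^{\Theta_\tau}(\mathcal{O})$ then, by translation covariance, $(a_1,\dots,a_k)\mapsto\langle\Psi,\phi_{\Theta_\tau}(f_{1,a_1})\cdots\phi_{\Theta_\tau}(f_{k,a_k})(\Omega\otimes e_\alpha)\rangle$ extends to a function holomorphic in the forward tube (spectrum condition $\mathrm{sp}\,U\subset\overline{V_+}$) that vanishes on a real open set, hence vanishes identically; cyclicity then forces $\Psi=0$, so $\mathcal{D}^{\Theta_\tau}(\mathcal{O})$ is dense.

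The main obstacle is part (d), where two points need care: the cyclicity of $\Omega\otimes e_\alpha$, which rests on recognising the momentum-phase structure of the deformed $n$-particle vectors, and the analytic continuation in the translation parameters, which uses positivity of the energy and some attention to domains; parts (a)--(c) are essentially bookkeeping once Lemmas~\ref{nacaar} and~\ref{ldx2} are available.
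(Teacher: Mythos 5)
Your proposal is correct and follows essentially the same route as the paper: parts (a)--(c) rest on the bounds of Lemma~\ref{ldx2}, the vacuum invariance $U^{\tau}(p)(\Omega\otimes e_{\alpha})=\Omega\otimes e_{\alpha}$, hermiticity as in Lemma~\ref{wcl2}, and Nelson's analytic vector theorem, exactly as in the paper; for (d) the paper likewise combines the standard Reeh--Schlieder translation/analyticity argument with the observation that the deformed $k$-particle vectors are the free ones multiplied by the unitary phase $S^{\Theta_{\tau}}_k(p_1,\dots,p_k)=\prod_{l<j}e^{ip_l\Theta p_j\otimes Y^{\alpha}\tau_{\alpha}}$, which is precisely the momentum-phase structure you identify. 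The only place you are slightly quicker than the source is the assertion that the span of the phase-modified vectors coincides with that of the free ones, but the paper defers exactly this point to the arguments of \cite{GL1}, so you are at the same level of detail.
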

 	\begin{proof}  
 		a) The fact that $\mathcal{D}\otimes{\mathbb{C}^{m}}\subset \mathcal{D}^{\Theta_{\tau}}$,  follows 
 from Lemma \ref{ldx2} , since the deformed scalar field is bounded by the bounds of the free field. Moreover, the equivalence of the	deformed field acting on the vacuum and the undeformed field acting  
 		on $\Omega\otimes e_{\alpha}$, is  due to the property of the unitary 
 		operators $U^{\tau}( {p})(\Omega\otimes e_{\alpha})=\Omega\otimes e_{\alpha}$.
 		\\\\
 		b) The use of Lemma \ref{ldx2} implies that the right hand side 
 		depends continuously on the function $f$, hence the temperateness of 
 		$f\longmapsto\phi_{\Theta_{\tau}}(f)\Psi$,  $\Psi\in\mathcal{D}\otimes{\mathbb{C}^{m}}$ follows.\\\\
 		c) Hermiticity of the deformed field  $\phi_{\Theta_{\tau}}(f)$ is first proven. This is done
 	 as the proof of Lemma \ref{wcl2}, demonstrating hermiticity of a deformed
 		operator if the undeformed one is self-adjoint, 
 		\begin{align*}
 		\phi_{\Theta_{\tau}}(f)^{*}\Psi&=  
 		\phi_{\Theta_{\tau}}(\overline{f })\Psi.
 		\end{align*}  For real  		$f$ the essential  self-adjointness of the hermitian deformed field $\phi_{\Theta_{\tau}} (f)$ is proven by showing that the
 		deformed field has a dense set of analytic vectors. Next, by  
 		Nelson's analytic vector theorem, it  follows that the deformed field
 		$\phi_{\Theta_{\tau}}(f)$ is essentially self-adjoint on this dense set of analytic vectors, (for
 		similar proof see \cite[Chapter I, Proposition 5.2.3]{BR}). \newline\newline  For $\Psi_{k} \in
 		\mathcal{H}_{k}\otimes{\mathbb{C}^{m}}$ the estimates of the $l$-power
 		of the deformed field $\phi_{\Theta_{\tau}}(f)$, are given in the following
 		\begin{equation*}
 		\left\Vert
 		\phi_{\Theta_{\tau}}(f)^l\Psi_{k}\right\Vert\leq
 		2^{l/2}(k+l)^{1/2}(k+l-1)^{1/2}\cdots(k+1)^{1/2}\left\Vert f\right\Vert^{l}\left\Vert \Psi_{k}
 		\right\Vert,
 		\end{equation*}
 		where in the last lines we used Lemma \ref{ldx2} for the estimates of the deformed field. Finally,
 		we can write the sum
 		
 		\begin{equation*}
 		\sum\limits_{l\geq 0} \frac{\vert t\vert^l}{l!}\left\Vert
 		\phi(f)^l \Psi_k
 		\right\Vert
 		\leq
 		\sum\limits_{l \geq 0} \frac{(\sqrt{2}|t|)^l}{l!}\left(
 		\frac{(k+l)!}{k!}\right)^{1/2}\left\Vert f\right\Vert^{l}\left\Vert \Psi_{k} \right\Vert<
 		\infty
 		\end{equation*}
 		for all $t\in \mathbb{C}$. It follows that each $\Psi\in \mathcal{D}\otimes{\mathbb{C}^{m}}$ is an analytic vector for
 		the deformed field $\phi_{\Theta_{\tau}}(f)$. Since the set $\mathcal{D}\otimes{\mathbb{C}^{m}}$ is dense in
 		$\mathcal{H}\otimes{\mathbb{C}^{m}} $,  Nelson's analytic vector theorem implies that $\phi_{\Theta_{\tau}}(f)$
 		is essentially self-adjoint on $\mathcal{D}\otimes{\mathbb{C}^{m}}$. 
 		\\\\
 		d)  
 		The properties of the  unitary operator  $U( {y})$
 		of translations  leads to the application of the 
 		standard Reeh-Schlieder argument \cite{SW} which states that   
 		$\mathcal{D}^{\Theta_{\tau}}\otimes{\mathbb{C}^{m}}$ is dense in $\mathcal{H}\otimes{\mathbb{C}^{m}}$ if and 
 		only if $\mathcal{D}^{\Theta_{\tau}}\otimes{\mathbb{C}^{m}}$ is dense in 
 		$\mathcal{H}\otimes{\mathbb{C}^{m}} $.  We choose the functions  $f_1,\dots,f_k \in \mathscr 
 		{S}(\mathbb{R}^{d})$ such that the Fourier transforms of the functions 
 		do not intersect the lower mass shell and therefore the domain 
 		$\mathcal{D}^{\Theta_{\tau}} \otimes{\mathbb{C}^{m}}$ consists of the following vectors
 		\begin{align*} 
 		\phi_{\Theta_{\tau}}(f_1)\dots
 		\phi_{\Theta_{\tau}}(f_k)(\Omega\otimes e_{\alpha})&=
 		a_{\Theta_{\tau}}^*(f^{+}_1)\dots
 		a_{\Theta_{\tau}}^*(f^{+}_k)(\Omega\otimes e_{\alpha}) 
 		\\&=     
 		\sqrt{m!}P_{m}(S^{\Theta_{\tau}}_k(  f^{+}_1\otimes \dots \otimes 
 		f^{+}_k)) ,
 		\end{align*}
 		where $P_k$ denotes the orthogonal projection from 
 		$\mathcal{H}_1^{\otimes k}$ onto its totally symmetric subspace 
 		$\mathcal{H}_{k}$, and $S^{\Theta_{\tau}}_k \in \mathscr{B}(\mathcal{H}_1^{\otimes 
 			k})\otimes{\mathbb{C}^{m}}$ is a multiplication operator-valued unitary matrix given as
 		\begin{align*} S^{\Theta_{\tau}}_k(p_1,\dots,p_k)=\prod \limits_{1\leq l < j \leq k} 
 		e^{{i}p_l \Theta p_j\otimes Y^{ \alpha}  \tau_{\alpha}  }.
 		\end{align*}
 		Following the 
 		same arguments as in \cite{GL1} the density of $\mathcal{D}^{\Theta_{\tau}}({\mathbb{R}^d})\otimes{\mathbb{C}^{m}}$ in $\mathcal{H}\otimes{\mathbb{C}^{m}}$ follows.  
 	\end{proof}

 	\subsection{Wedge-Covariance and Wedge-Locality}\label{s32}
 	In \cite{GL1},  a map was constructed 
 	$Q:W \mapsto Q(W)$ from a 
 	set $\mathcal{W}_{0}:=\mathcal{L}^{\uparrow}_{+}W_{1}$ of wedges, where 
 	$W_{1}:=\{x\in \mathbb {R}^d: x_1 > |x_0| \}$  to a set 
 	$\mathcal{Q}_{0}\subset \mathbb{R}^{-}_{d\times d}$ of skew-symmetric 
 	matrices. The corresponding fields are given by
 	$\phi_{W}(x):=\phi(Q(W),x)$. \\\\ 
 	Hence, the deformed scalar field $\phi(Q(W),x)$  can be given as a  field defined on the wedge. The homomorphism $Q:W \mapsto Q(W)$   is given by the following,

 	\begin{definition}Let  $\Theta$ be a real skew-symmetric matrix on 
 		$\mathbb{R}^d$ then the map $\gamma_{\Lambda}(\Theta)$
 		is defined as follows
 		\begin{align}\label{hm}
 		\gamma_{\Lambda}(\Theta):=
 		\left\{
 		\begin{array} {cc}
 		\Lambda\Theta\Lambda^T, \qquad &\Lambda\in\mathcal{L}^{\uparrow}, \\ 
 		-\Lambda\Theta\Lambda^T,\qquad &\Lambda\in\mathcal{L}^{\downarrow} .
 		\end{array} \right.
 		\end{align}
 	\end{definition}
 	\begin{definition}
 		$\Theta$ is called an admissible matrix if the realization of the 
 		homomorphism $Q(\Lambda W)$ defined by the map 
 		$\gamma_{\Lambda}(\Theta)$   is a well defined mapping. This is the case 
 		iff $\Theta$ has in $d$ dimensions the following form
 		\begin{align}
 		\begin{pmatrix} 0 & \lambda & 0 & \cdots & 0 \\ \lambda& 0 & 0 & \cdots 
 		& 0  \\ 0 & 0 &0 &\cdots&0
 		\\ \vdots & \vdots &\vdots & \ddots& \vdots\\ 0 & 0 &0 & \cdots&0
 		\end{pmatrix},\qquad \lambda\geq0.
 		\end{align}
 		For the physical most interesting case of 4 dimensions the 
 		skew-symmetric matrix $\Theta$ has the more general form
 		\begin{align}\qquad\quad
 		\begin{pmatrix} 0 & \lambda & 0 & 0 \\ \lambda& 0 & 0 & 0  \\ 0 & 0 &0 &\eta
 		\\ 0 & 0 &-\eta &0
 		\end{pmatrix},\qquad \lambda\geq0, \eta \in \mathbb{R}.
 		\end{align}
 		
 	\end{definition} 
 	By using the former definitions we give the following correspondence of the fields on $\mathcal{H}\otimes{\mathbb{C}^{m}}$ as,
 	\begin{equation}\label{ex1}
 	\phi_W(f)\Psi:=\phi(Q(W),f)\Psi=\phi_{\Theta_{\tau}}( f)\Psi. 
 	\end{equation}Next, we define the covariance and locality properties of the wedge-fields.   In particular, we write the definitions of a wedge-covariant and a wedge-local field, (\cite{GL1}, Definition 3.2).
 	\newline
 	\begin{definition} \label{drca}
 		Let $\phi=\{\phi_W: W\in\mathcal{W}_{0}\}$ denote the family of fields
 		satisfying the domain and continuity assumptions of the Wightman axioms. Then, the field  $\phi $
 		is defined to be a wedge-local quantum field if the following  
 		conditions are satisfied: 
 		\\ 
 		\begin{itemize}
 			\item \textbf{Covariance:}  For any $W \in\mathcal{W}_{0} $ and 
 			$f\in\mathscr{S}(\mathbb{R}^d)$ the following holds
 			\begin{align*}
 			U( y, \Lambda)\phi_W(f)U( y, \Lambda)^{-1}&=\phi_{\Lambda W}(f\circ( y,
 			\Lambda)^{-1} ),\qquad  (y, \Lambda)\in \mathcal{P}^{\uparrow}_{+},
 			\\
 			U(0, j)\phi_W(f) U( 0,j)^{-1}&=\phi_{ jW}(\overline{f}\circ (0, j)^{-1}) ,
 			\end{align*}$\,$  where $j$ represents the space-time reflections, i.e. $x^{\mu}\rightarrow -x^{\mu}.$\\
 			\item  \textbf{Wedge-locality:}  Let $W,\tilde{W}\in\mathcal{W}_{0}$ and
 			$f\in\mathscr{S}(\mathbb{R}^2)$. If 
 			\begin{equation*}
 			\overline{W+\text{supp } f}\subset
 			(\tilde{W}+\text{supp } g)',
 			\end{equation*}  \end{itemize}
 		then 
 		\begin{equation*}
 		[\phi_{W}(f), \phi_{\tilde{W}}(g)]\Psi=0,\quad \Psi \in \mathcal{D}.
 		\end{equation*}

 	\end{definition}
 By using the geometrical properties of the wedge the following lemma results, (\cite{GL1}, Lemma 3.3).\newline
 	\begin{lemma}
 		Let $\phi=\{\phi_W: W\in\mathcal{W}_{0}\}$ denote the family of fields
 		satisfying the domain, continuity and covariance assumptions stated in Definition \ref{drca}. Then $\phi$ is
 		wedge-local if and only if
 		\begin{equation*}
 		[\phi_{W_1}(f), \phi_{-{W}_1}(g)]\Psi=0,\quad \Psi \in \mathcal{D},
 		\end{equation*}
 		for all $f,g \in C_{0}^{\infty}(\mathbb{R}^d)$ with  $\text{supp } f\subset W_{1}$ and 
 		$\text{supp } g\subset -W_{1}$.
 	\end{lemma}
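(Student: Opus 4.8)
The plan is to use Poincar\'e covariance to transport every causally separated pair of wedges to the single reference pair $(W_1,-W_1)$, so that both implications reduce to statements about that pair. Throughout one may assume $f,g\not\equiv 0$ (otherwise the commutator vanishes identically) and that $f,g\in C_0^\infty(\mathbb{R}^d)$, the general $f,g\in\mathscr{S}(\mathbb{R}^d)$ meeting the support condition being handled at the end via the temperateness of $f\mapsto\phi_{\Theta_\tau}(f)\Psi$. The implication ``$\Rightarrow$'' I expect to be immediate: if $\text{supp}\,f\subset W_1$ and $\text{supp}\,g\subset -W_1$, then $W_1$ being an open convex cone closed under addition gives $\overline{W_1+\text{supp}\,f}=\overline{W_1}+\text{supp}\,f\subset W_1$ and $-W_1+\text{supp}\,g\subset -W_1$; using $W_1'=-W_1$ and that a smaller region has a larger causal complement, $\overline{W_1+\text{supp}\,f}\subset W_1=(-W_1)'\subset(-W_1+\text{supp}\,g)'$, which is exactly the wedge-locality hypothesis of Definition~\ref{drca} for $(W_1,-W_1)$, so $[\phi_{W_1}(f),\phi_{-W_1}(g)]\Psi=0$ on $\mathcal{D}$.

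For ``$\Leftarrow$'' I would take $W,\tilde W\in\mathcal{W}_0$ and $f,g$ with $\overline{W+\text{supp}\,f}\subset(\tilde W+\text{supp}\,g)'$. Writing $W=\Lambda W_1$ with $\Lambda\in\mathcal{L}^{\uparrow}_{+}$ and conjugating by $U(0,\Lambda)^{-1}$, the covariance assumption replaces $\phi_W(f),\phi_{\tilde W}(g)$ by $\phi_{W_1}(f\circ\Lambda),\phi_{\Lambda^{-1}\tilde W}(g\circ\Lambda)$, stability of $\mathcal{D}$ under $U(0,\Lambda)$ makes the vanishing of one commutator on $\mathcal{D}$ equivalent to that of the other, and the causal-separation condition is invariant; so one may assume $W=W_1$. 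I would then show the separation condition forces $\tilde W=-W_1$: choosing $x_0\in\text{supp}\,f$, $y_0\in\text{supp}\,g$ gives $W_1+x_0\subset\overline{W_1+\text{supp}\,f}\subset(\tilde W+\text{supp}\,g)'$, hence $\tilde W+\text{supp}\,g\subset(W_1+x_0)'=-W_1+x_0$ and so $\tilde W\subset -W_1+(x_0-y_0)$; since $\tilde W=\Lambda'W_1$ is an open cone with apex at the origin, containment in a translate of $-W_1$ forces $\Lambda'W_1\subset -W_1$, and applying the causal complement (with $(\Lambda'W_1)'=-\Lambda'W_1$ and $(-W_1)'=W_1$) also yields $-W_1\subset\Lambda'W_1$, hence $\tilde W=-W_1$.

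The remaining step is to slide both supports into the respective wedges by one common translation. From $\overline{W_1+\text{supp}\,f}\subset(-W_1+\text{supp}\,g)'=\bigcap_{y\in\text{supp}\,g}(W_1+y)$ one reads off $K-L\subset W_1$ with $K:=\text{supp}\,f$, $L:=\text{supp}\,g$ compact; in the two relevant Minkowski directions $W_1=\{x:\,x_1>|x_0|\}=\{x:\,x_1+x_0>0,\ x_1-x_0>0\}$, so $K-L\subset W_1$ says $k_1\pm k_0>l_1\pm l_0$ for all $k\in K$, $l\in L$, i.e.\ $\min_{K}(k_1\pm k_0)>\max_{L}(l_1\pm l_0)$, and choosing $a=(a_0,a_1,0,\dots,0)$ with $a_1+a_0$ and $a_1-a_0$ each strictly between $-\min_K(\cdot)$ and $-\max_L(\cdot)$ in the matching combination (both intervals nonempty) gives $K+a\subset W_1$ and $L+a\subset -W_1$. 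Conjugating $[\phi_{W_1}(f\circ\Lambda),\phi_{-W_1}(g\circ\Lambda)]$ by $U(a,\mathbf{1})$ — which fixes $W_1$ and $-W_1$, shifts the supports to $K+a$ and $L+a$, and preserves $\mathcal{D}$ — produces a commutator of the form $[\phi_{W_1}(\,\cdot\,),\phi_{-W_1}(\,\cdot\,)]$ with $C_0^\infty$ test functions supported in $W_1$ resp.\ $-W_1$, which annihilates $\mathcal{D}$ by the assumed relation; undoing the two conjugations gives $[\phi_W(f),\phi_{\tilde W}(g)]\Psi=0$ on $\mathcal{D}$, i.e.\ wedge-locality. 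The only non-routine part is geometric: that the causal-separation condition pins $\tilde W$ down to $-W_1$ and that a single translation carries both supports inside; both rest solely on the standard causal structure of wedges ($W_1'=-W_1$, causal completeness, closure under addition) and not on any feature of the deformed fields.
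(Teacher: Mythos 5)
The paper offers no proof of this lemma at all --- it is quoted directly from \cite{GL1}, Lemma 3.3 --- so there is nothing internal to compare against; your argument is correct and is essentially the standard one from that reference: Poincar\'e covariance transports any causally separated configuration to the reference pair $(W_1,-W_1)$, the causal geometry of wedges ($W_1'=-W_1$, cones with edge through the origin, order-reversal of the causal complement) pins the second wedge down to $-W_1$, and a single translation slides both supports into the respective wedges. The one step that genuinely needs the care you gave it is the \emph{strictness} of the separation (without which the interval of admissible translations $a$ could be empty); this does follow, as you note, from $K\subset\overline{W_1+K}\subset\bigcap_{y\in L}(W_1+y)$ with $W_1$ open and $K\times L$ compact.
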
$\,$ \\So let us first investigate the wedge-covariance properties of our non-abelian fields. The result is given in the following proposition. 
 	\begin{proposition}\label{wl0} The deformed fields $\phi_{\Theta_{\tau}}(    f)  $ transform  under the adjoint action of the proper orthochronous Poincar\'e group as follows,
 		\begin{align*} 
 		U( x, \Lambda) \phi_{\Theta_{\tau}}(    f)   U( x, \Lambda)^{-1}&=\phi_{(\Lambda\Theta\Lambda^T)_{\tau}   } (    f\circ  ( x,
 		\Lambda)^{-1}),\qquad  (y, \Lambda)\in \mathcal{P}^{\uparrow}_{+},\\
 		U(0, j)\phi_{\Theta_{\tau}}(    f)  U( 0,j)^{-1}&=\phi_{ (-\Lambda\Theta\Lambda^T)_{\tau}}(\overline{f}\circ (0, j)^{-1}) .
 		\end{align*} 
 		Therefore, the field $\phi_{\Theta_{\tau}}$ is a  wedge-covariant field. 
 	\end{proposition}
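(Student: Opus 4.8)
The plan is to deduce the transformation law from the abelian wedge-covariance result by exploiting the structural splitting of the non-abelian deformation recorded in (\ref{eqnaa}). Recall that, by Lemma \ref{nacaar} combined with Lemma \ref{eis}, the deformed creation and annihilation operators decompose as $a_{\Theta_\tau}(\mathbf p)=\sum_{r=1}^m a_{\lambda_r\Theta}(\mathbf p)\otimes W B_r W^{-1}$ and $a^*_{\Theta_\tau}(\mathbf p)=\sum_{r=1}^m a^*_{\lambda_r\Theta}(\mathbf p)\otimes W B_r W^{-1}$, where $a_{\lambda_r\Theta},a^*_{\lambda_r\Theta}$ are the ordinary warped-convolution deformations of the field, performed with the field-theoretic momentum and deformation matrix $\lambda_r\Theta$ (i.e.\ the abelian analogue of Lemma \ref{nacaar}). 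Inserting this into Definition \ref{defqfx} gives, on the dense domain $\mathcal D\otimes\mathbb C^m$ of finite particle number,
\begin{equation*}
\phi_{\Theta_\tau}(f)=\sum_{r=1}^m\phi_{\lambda_r\Theta}(f)\otimes W B_r W^{-1},
\end{equation*}
with $\phi_{\lambda_r\Theta}$ the deformed field of \cite{GL1} associated to the matrix $\lambda_r\Theta$. The proper orthochronous Poincaré group acts on $\mathcal H\otimes\mathbb C^m$ as $U(x,\Lambda)\otimes\mathbb I_{m\times m}$, which commutes with the fixed matrix factors $\mathbb I\otimes W B_r W^{-1}$, and the action $f\mapsto f\circ(x,\Lambda)^{-1}$ on test functions is unaffected by the deformation since it only concerns the undeformed field $\phi$.

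First I would conjugate the decomposed field term by term. Since $\gamma_\Lambda$ is linear in $\Theta$ (it is $\Theta\mapsto\pm\Lambda\Theta\Lambda^T$, see (\ref{hm})), one has $\gamma_\Lambda(\lambda_r\Theta)=\lambda_r\,\Lambda\Theta\Lambda^T$, and the abelian covariance of warped convolutions \cite{BLS,GL1} yields for $(x,\Lambda)\in\mathcal P^\uparrow_+$
\begin{equation*}
U(x,\Lambda)\,\phi_{\lambda_r\Theta}(f)\,U(x,\Lambda)^{-1}=\phi_{\lambda_r(\Lambda\Theta\Lambda^T)}\big(f\circ(x,\Lambda)^{-1}\big).
\end{equation*}
Conjugating the sum and using that $U(x,\Lambda)\otimes\mathbb I$ passes freely through the matrix coefficients gives
\begin{equation*}
\big(U(x,\Lambda)\otimes\mathbb I\big)\phi_{\Theta_\tau}(f)\big(U(x,\Lambda)^{-1}\otimes\mathbb I\big)=\sum_{r=1}^m\phi_{\lambda_r(\Lambda\Theta\Lambda^T)}\big(f\circ(x,\Lambda)^{-1}\big)\otimes W B_r W^{-1}.
\end{equation*}
Because replacing $\Theta$ by $\Lambda\Theta\Lambda^T$ leaves the matrix $Y^\alpha\tau_\alpha$ — hence its eigenvalues $\lambda_r$, the projectors $B_r$ and the diagonalising matrix $W$ — untouched, the right-hand side is precisely the decomposition of $\phi_{(\Lambda\Theta\Lambda^T)_\tau}(f\circ(x,\Lambda)^{-1})$, which establishes the first identity.

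For the space-time reflection $j:x^\mu\mapsto-x^\mu$ I would run the same argument, the only additional point being the antiunitarity of $U(0,j)$: on $\mathcal H\otimes\mathbb C^m$ the reflection acts antilinearly, and the abelian result $U(0,j)\phi_{\lambda_r\Theta}(f)U(0,j)^{-1}=\phi_{-\lambda_r(\Lambda\Theta\Lambda^T)}(\overline f\circ(0,j)^{-1})$ contributes the sign flip of the second branch of (\ref{hm}) together with the complex conjugation of $f$; resumming reproduces the second identity, the matrix coefficients being unaffected (or changed only by $W\mapsto\overline W$, which drops out of $W B_r W^{-1}$ since the $B_r$ are real). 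Finally, inserting the admissible-matrix correspondence $Q:W\mapsto Q(W)$ and the identification (\ref{ex1}), and using $Q(\Lambda W)=\gamma_\Lambda(Q(W))$, turns these two relations into the covariance requirements of Definition \ref{drca}, so that $\phi_{\Theta_\tau}$ is wedge-covariant. I expect the reflection step to be the main obstacle, since one must make precise how the antiunitary $U(0,j)$ interacts with the $\mathbb C^m$-valued matrix factors and check that no spurious phases are produced; everything else is a direct transcription of the abelian computation via Lemma \ref{eis}.
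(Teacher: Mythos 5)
Your argument is correct for the proper orthochronous part, but it takes a genuinely different route from the paper. The paper conjugates the oscillatory-integral (strong-limit) representation of $\phi_{\Theta_{\tau}}(f)$ directly: it inserts $U(x,\Lambda)^{-1}U(x,\Lambda)$ between the factors, uses $U(x,\Lambda)\,\alpha^{\tau}_{\Theta y}(\phi(f))\,U(u)\,U(x,\Lambda)^{-1}=\alpha^{\tau}_{\Lambda\Theta y}(\phi(f\circ(x,\Lambda)^{-1}))U(\Lambda u)$ together with a Lorentz change of variables in $(y,u)$, and reads off $\phi_{(\Lambda\Theta\Lambda^T)_{\tau}}$. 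You instead push the computation entirely onto the abelian result of \cite{GL1} via the decomposition $\phi_{\Theta_\tau}(f)=\sum_r\phi_{\lambda_r\Theta}(f)\otimes WB_rW^{-1}$ from (\ref{eqnaa})/Lemma \ref{nacaar}, using linearity of $\gamma_\Lambda$ so that $\gamma_\Lambda(\lambda_r\Theta)=\lambda_r\gamma_\Lambda(\Theta)$ and the invariance of the spectral data $(\lambda_r,W,B_r)$ of $Y^\alpha\tau_\alpha$ under $\Theta\mapsto\Lambda\Theta\Lambda^T$ to reassemble the sum. Your route buys a cleaner reduction to a cited theorem and makes explicit why the matrix factors are spectators; the paper's route is more self-contained but silently uses the same Lemma \ref{eis} decomposition to justify $U(x,\Lambda)U^{\tau}(\Theta y)U(x,\Lambda)^{-1}=U^{\tau}(\Lambda\Theta y)$. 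One caveat on the reflection branch: your parenthetical claim that the conjugation $W\mapsto\overline W$ ``drops out'' is not justified, since $\overline{W}B_r\overline{W}^{-1}=\overline{WB_rW^{-1}}$ need not equal $WB_rW^{-1}$; making the antiunitary $U(0,j)$ act compatibly on the $\mathbb{C}^m$ factor requires an explicit choice and a short check that the exponent $Y^\alpha\tau_\alpha$ is reproduced rather than its transpose. The paper's own proof omits the reflection case entirely, so you are not behind it on this point, but the step is not yet complete as written.
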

 	\begin{proof} 
 		The proof is done straight forward  by applying the unitary operator of the Poincar\'e group on the field and by taking the transformation of creation- and annihilation operators into account. \begin{align*} &
 		U( x, \Lambda) \phi_{\Theta_{\tau}}(    f)   U( x, \Lambda)^{-1}\\&=(2\pi)^{-d} 
 		\lim_{\epsilon\rightarrow 0}
 		\iint  \, dy \,  du \, e^{-iyu}  \, \chi(\epsilon y,\epsilon u)  U( x, \Lambda)\alpha^{\tau}_{{\Theta y}
 		}(\phi(f))U({u}) \,  U( x, \Lambda)^{-1}\\&=(2\pi)^{-d} 
 		\lim_{\epsilon\rightarrow 0}
 		\iint  \, dy \,  du \, e^{-iyu}  \, \chi(\epsilon y,\epsilon u) \alpha^{\tau}_{\Lambda\Theta\Lambda^{T} y} 
 		(\phi(f\circ( x,
 		\Lambda)^{-1}))    
 		\\&=  \phi_{(\Lambda\Theta\Lambda^T)_{\tau}  } (    f\circ  ( x,
 		\Lambda)^{-1}) .
 		\end{align*}
 		
 	\end{proof}
 	Let us examine the proof of wedge-locality for the abelian deformed fields, \cite{GL1}. The proofs works  for   smearing functions that are entire analytic  and therefore an analytical  continuation to the complex upper half plane can be done. The following coordinate transformation will ease the proof and it is given by,
 	\begin{equation*}
 	m_{\perp}:=(m^2+p_{\perp})^{1/2}, \qquad p_{\perp}:=(p_2,\dots,p_{n}),\qquad \vartheta:=\text{arc}\sinh\frac{p_1}{m_{\perp}}.
 	\end{equation*}
 	 The measure and  the on-shell momentum vector is written in the  new coordinates as follows, 
 	\begin{equation*}
 	d^n\mu(\mathbf{p})= d^{n-1}p_{\perp}d\vartheta,\qquad \qquad p(\vartheta):=\left(
 	\begin{array}{c}
 	m_{\perp}\cosh \vartheta \\
 	m_{\perp}\sinh \vartheta  \\
 	p_{\perp}
 	\\
 	\end{array}
 	\right)
 	\end{equation*}
 	By the use of the coordinate transformation and the analyticity of the function  one obtains for the smeared functions $f\in
 	C_{0}^{\infty}(W_{1})$ and $g\in C_{0}^{\infty}(-W_{1})$, (see Equation \ref{tf})
 	\begin{equation}\label{ac1}
 	f^{-}(p_{\perp}, \vartheta+i\pi)=
 	f^{+}(-p_{\perp},\vartheta) ,\qquad  g^{-}(p_{\perp},  \vartheta+i\pi)=
 	g^{+}(-p_{\perp},\vartheta) .
 	\end{equation}
 	Now for the proof of wedge-locality in our case the same arguments are applied. 
 	\begin{proposition}
 		The family of fields $\phi=\{\phi_W: 
 		W\in\mathcal{W}_{0}\}$ defined by $\phi_W(f):=\phi(Q(W),f)=\phi_{\Theta_{\tau}}( f)$
 		for $\tau\in su(2)$, with $Y^{\alpha}$ being a vector, are \textbf{not} wedge-local fields on the Bosonic Fockspaces $\mathscr{H} \otimes\mathbb{C}^2$.  However, by choosing $Y$ to be matrix valued such that $Y_{\alpha}\tau^{\alpha}$ has positive eigenvalues the family of fields $\phi$  are \textbf{wedge-local}  on the Bosonic Fockspace $\mathscr{H} \otimes\mathbb{C}^m$.
 	\end{proposition}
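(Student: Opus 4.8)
The plan is to reduce the statement entirely to the abelian wedge-locality analysis of \cite{GL1}, exploiting that the non-abelian deformation is a direct sum over the spectrum of $Y^{\alpha}\tau_{\alpha}$. First I would record the factorisation that follows from Lemma \ref{eis} once it is fed into the deformed creation/annihilation operators of Lemma \ref{nacaar}: writing $Y^{\alpha}\tau_{\alpha}=W\big(\sum_{r=1}^{m}\lambda_{r}B_{r}\big)W^{-1}$ with the $\lambda_{r}$ the eigenvalues (nonzero by Definition \ref{nao}) and $B_{r}B_{l}=\delta_{rl}B_{l}$, one gets
\begin{equation*}
\phi_{\Theta_{\tau}}(f)=\sum_{r=1}^{m}\phi_{\lambda_{r}\Theta}(f)\otimes W B_{r}W^{-1},
\end{equation*}
where $\phi_{\lambda_{r}\Theta}$ is the ordinary Grosse--Lechner field deformed with the skew matrix $\lambda_{r}\Theta$. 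Since the total reflection $j$ lies in $\mathcal{L}^{\downarrow}$ one has $Q(-W_{1})=\gamma_{j}(\Theta)=-\Theta$, so the same factorisation gives $\phi_{-W_{1}}(g)=\sum_{s}\phi_{-\lambda_{s}\Theta}(g)\otimes WB_{s}W^{-1}$, and using $B_{r}B_{s}=\delta_{rs}B_{r}$ the wedge commutator splits as
\begin{equation*}
[\phi_{W_{1}}(f),\phi_{-W_{1}}(g)]=\sum_{r=1}^{m}\big[\phi_{\lambda_{r}\Theta}(f),\phi_{-\lambda_{r}\Theta}(g)\big]\otimes WB_{r}W^{-1}.
\end{equation*}
The matrices $WB_{r}W^{-1}$ are conjugates of the mutually orthogonal spectral projections of $Y^{\alpha}\tau_{\alpha}$ and hence linearly independent, so the left-hand side vanishes on the dense domain if and only if each summand $[\phi_{\lambda_{r}\Theta}(f),\phi_{-\lambda_{r}\Theta}(g)]$ vanishes.

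The second ingredient is the abelian dichotomy. If $\Theta=Q(W_{1})$ is admissible, with parameter $\lambda_{0}\ge 0$, then for $\lambda_{r}>0$ the matrix $\lambda_{r}\Theta$ is again admissible (parameter $\lambda_{r}\lambda_{0}\ge 0$), and the wedge-locality theorem of \cite{GL1} — proved by analytic continuation $\vartheta\to\vartheta+i\pi$ in the rapidity together with the relations (\ref{ac1}) — gives $[\phi_{\lambda_{r}\Theta}(f),\phi_{-\lambda_{r}\Theta}(g)]=0$ for $f\in C_{0}^{\infty}(W_{1})$, $g\in C_{0}^{\infty}(-W_{1})$. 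For $\lambda_{r}<0$, however, $\lambda_{r}\Theta$ has the opposite orientation: running the same continuation the deformation phase $e^{\,i\lambda_{r}p\Theta q}$ acquires the wrong sign relative to the term produced by the canonical commutation relations, so the cancellation that yields locality in the admissible case fails. I would make this precise by evaluating the commutator on states of particle number $\ge 2$ — on the vacuum and on one-particle states it vanishes in either case, the contributions that change particle number cancelling by themselves — and exhibiting a non-vanishing matrix element: the particle-number-preserving part becomes a momentum-space multiplication operator whose symbol is an oscillatory integral over the mass shell against the test functions and the spectator momenta, and for the wrong orientation this symbol is not killed by the $i\pi$-continuation and is therefore not identically zero. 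This yields $[\phi_{\lambda_{r}\Theta}(f),\phi_{-\lambda_{r}\Theta}(g)]\neq 0$ whenever $\lambda_{r}<0$.

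Assembling the two cases: for $\tau\in su(2)$ with $Y^{\alpha}$ a vector the matrix $Y^{\alpha}\tau_{\alpha}$ is $2\times2$, hermitian and \emph{traceless}, hence has eigenvalues $\{\lambda,-\lambda\}$ with $\lambda\neq 0$; the summand attached to $-\lambda<0$ is then a nonzero operator that cannot cancel against the $+\lambda$ slot, so $[\phi_{W_{1}}(f),\phi_{-W_{1}}(g)]\neq 0$ and the family is not wedge-local on $\mathscr{H}\otimes\mathbb{C}^{2}$. If instead $Y$ is matrix-valued and $Y_{\alpha}\tau^{\alpha}$ is positive, every eigenvalue $\lambda_{r}>0$, so every summand above vanishes and $[\phi_{W_{1}}(f),\phi_{-W_{1}}(g)]\Psi=0$ for all $\Psi$ in the domain; combined with the wedge-covariance already established in Proposition \ref{wl0} and the reduction of wedge-locality to this single commutator, the family $\phi$ is wedge-local on $\mathscr{H}\otimes\mathbb{C}^{m}$. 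The main obstacle is the converse half of the abelian step: showing that the wrong-sign deformation genuinely destroys the commutator rather than merely invalidating the cancellation argument, which needs either careful bookkeeping of the signs picked up under the rapidity continuation or an explicit non-vanishing multi-particle matrix element; everything else is bookkeeping with Lemmas \ref{eis}, \ref{nacaar} and Proposition \ref{wl0}.
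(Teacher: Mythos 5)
Your proposal is correct and follows essentially the same route as the paper: decompose the deformed field over the eigenvalues of $Y^{\alpha}\tau_{\alpha}$ via Lemma \ref{eis}, reduce the wedge commutator to the abelian sectors $[\phi_{\lambda_r\Theta}(f),\phi_{-\lambda_r\Theta}(g)]$ using the orthogonality of the $WB_rW^{-1}$, and observe that the rapidity continuation $\vartheta\to\vartheta+i\pi$ goes through precisely when every $\lambda_r>0$, which fails for the traceless $2\times 2$ matrix $Y^{\alpha}\tau_{\alpha}$ with eigenvalues $\pm(Y^{\alpha}Y_{\alpha})^{1/2}$. The caveat you flag --- that for $\lambda_r<0$ one should exhibit a genuinely non-vanishing commutator rather than merely note that the contour shift is obstructed --- is a gap the paper's own proof shares (it only shows the locality argument breaks down), so carrying out your sketched non-vanishing multi-particle matrix element would in fact strengthen the published argument.
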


 	\begin{proof}
 		For the proof  we use Lemma \ref{wcl2} and the proof of wedge-locality for a free translated scalar field,  
 		\cite{GL1}.
 		In order to use Lemma \ref{wcl2}, we have to show that the following commutator vanishes for $f\in
 		C_{0}^{\infty}(W_{1})$ and $g\in C_{0}^{\infty}(-W_{1})$,
 		\begin{align*}
 		[\alpha^{\tau}_{\Theta x}(\phi(f)),\alpha^{\tau}_{-\Theta y}(\phi(g))]&=
 		[\alpha^{\tau}_{\Theta x}(a(\overline{f^-})),\alpha^{\tau}_{-\Theta y}(a^*({g^+})]-
 		[\alpha^{\tau}_{-\Theta y}(a(\overline{g^-})),\alpha^{\tau}_{\Theta x}(a^*({f^+})] 
 		,
 		\end{align*}
 		where in the former lines all other terms are equal to zero and the unitary equivalence was used.  
 		Let us first take a look at the first expression of the
 		commutator, 
 		\begin{align*} 
 		&
 		[\alpha^{\tau}_{\Theta x}\bigl(
 		a(\overline{f^-})
 		\bigr),\alpha^{\tau}_{-\Theta y}\bigl(
 		a^*({g^+}) 
 		\bigr)
 		] 
 		\\&
 		=
 		\int d^n\mu(\mathbf{p})\int d^n\mu(\mathbf{k})
 		f^{-} (\mathbf{p})  g^{+}  (\mathbf{k}) e^{ ip\Theta x
 			\otimes Y^{ \alpha} \tau_{\alpha}} e^{ ik\Theta y \otimes
 			Y^{ \beta} \tau_{\beta}} 
 		[  a (\mathbf{p}) ,  a^* (\mathbf{k}) ]  \otimes \mathbb{I}_{m\times m}  
 		\\& 
 		=
 		\int d^n\mu(\mathbf{p}) 
 		f^{-} (\mathbf{p}) g^{+} (\mathbf{p}) e^{ ip\Theta (x+y)\otimes Y^{ \alpha} \tau_{\alpha} }\\&
 		=
 		\int  d^{n-1}p_{\perp}d\vartheta\,
 		f^{-} (p_{\perp},\vartheta)  g^{+} (p_{\perp},\vartheta)e^{ ip(\vartheta)\Theta (x+y)\otimes Y^{ \alpha} \tau_{\alpha} }\\&
 		=
 		\int  d^{n-1}p_{\perp}d\vartheta\,
 		f^{+} (-p_{\perp},\vartheta)  g^{-}(-p_{\perp},\vartheta)e^{ ip(\vartheta+i\pi)\Theta (x+y)\otimes Y^{ \alpha} \tau_{\alpha} } 
 		\\&
 		=\sum_{r=1}^{m}
 		\int  d^{n-1}p_{\perp}d\vartheta\,
 		f^{+} (-p_{\perp},\vartheta)  g^{-}(-p_{\perp},\vartheta)e^{ i\lambda_r p(\vartheta+i\pi)\Theta (x+y)} \otimes W B_r W^{-1}
 		,
 		\end{align*}
 		where in the last lines we used  the boundedness and analyticity properties of the unitary transformed functions $f,g$ (see \cite[Proposition 3.4]{GL1}). However this cannot be done if  $|e^{i\lambda_r p(\vartheta+i\pi)\theta (x+y)} |\nleq 1,    \forall r$, since it would be unbounded and hence we cannot shift the contour of the integral from $\mathbb{R}$ to $\mathbb{R}+i\pi$.  Unboundedness in the case where $Y\in\mathbb{R}^3$   is seen by calculating the eigenvalues of the matrix $Y^{ \alpha} \tau_{\alpha}$, which are for $su(2)$, $\pm  (Y^{ \alpha}Y_{ \alpha})^{1/2}$. Therefore, it is easy to realize that it is not positive-definite.  It turns out that there are no real solutions for $Y_{\alpha}$ and hence a solution consistent with the initial requirements does not exist.  
 		\\\\
 	For  the case where $Y$ is matrix valued such that  $Y_{\alpha}\tau^{\alpha}$ has positive eigenvalues boundedness of the exponential follows and therefore we have wedge-local fields. One example of such a matrix $Y$ is given by the following

 		\[ Y_1= \left( \begin{array}{ccc}
 		y_1 &0  \\
 		0 & y_1    \end{array} \right) ,\qquad Y_2=\left( \begin{array}{ccc}
 		y_2 &0  \\
 		0 & y_2   \end{array} \right)  ,\qquad Y_3=\left( \begin{array}{ccc}
 		y_3 &0  \\
 		0 & y_4   \end{array} \right),\] 
 		then the product $Y_{\alpha}\tau^{\alpha}$ is given by 
 		\[ Y_{\alpha}\tau^{\alpha}= \left( \begin{array}{ccc}
 		y_3 &y_1-iy_2  \\
 		y_1   + iy _2 & -y_4   \end{array} \right),\] 
 		which is hermitian for $y_1, y_2,y_3, y_4\in \mathbb{R}$. The eigenvalues of this matrix are real and given by 
 		\begin{align*}
 		\lambda_{1,2} = \frac{1}{2} \left(y_3-y_4\pm \sqrt{ 4 |y_1|^2 +4 |y_2|^2  +(y_3+y_4)^2} \right).
 		\end{align*}
 		By choosing $y_3=-y_4$ and   $y_1,y_2$ arbitrary we have for the eigenvalues of $Y_{\alpha}\tau^{\alpha}$,
 		\begin{align*}
 		\lambda_{1,2} =    y_3 \pm \sqrt{   |y_1|^2+ |y_2|^2  }.
 		\end{align*}
 		
 		For a specific choice of the constants,  namely   $y_3>\sqrt{|y_1|^2+|y_2|^2}$ the eigenvalues are strictly positive and hence wedge-locality follows for an entire family of  matrices.
 		
 	\end{proof} 
 	
 	\section{Conclusion and Outlook}
 	In this work we extended the warped convolutions formula, given in \cite{BS}, to an almost non-abelian version. The extended deformation formulas were further applied to the QM case and the outcome are gauge fields which are non-abelian. However, by calculating the non-abelian field strength tensor, where we used the deformed Heisenberg-Weyl algebra, the quadratic terms in the gauge fields vanish. Therefore, the deformation is dubbed almost non-abelian. Furthermore, we were able to construct an equivalent of the Moyal-Weyl plane in the non-abelian case. This was done by using similar arguments found in the abelian case  by using the Landau quantization, \cite{Muc1}. 
 	\\\\
 	We moved in the next step to QFT, i.e. we deformed the free scalar field with the extended version of warped convolutions. 
 	Moreover, we were able to relate the newly deformed fields with a QFT living on the non-abelian Moyal-Weyl plane. 
 	By an equivalent construction of \cite{GL1} we were able to relate the deformed fields to fields living on a wedge. Although wedge-covariance was proven the proof for wedge-locality was not possible if $Y$ was chosen to be vector-valued. However we could overcome the problem by choosing $Y$ to be matrix valued.  
 	\\\\
 	There are many possible extensions to this model. Let us mention one of the most intuitive extensions to this apparatus.  The constant vector $Y$ which was used in the definition of  deformation can be made operator dependent.    The changes concerning proofs are minimal; however this will have a greater effect on the physical side and will be in close relation to \cite{CCM}. Another possible extension is the deformation with  a spectral measure w.r.t. the non-abelian operator $U^{\tau}$. This is   work in progress, \cite{AM}. 
 	\\\\
 	The almost non-abelian deformation brings something new to the table, but it is isochronously not the last link of a chain. However, it can be considered as one further step   towards a non-abelian deformation theory.   
 	\section*{Acknowledgments} 
 	We would like to thank Prof. M. Rosenbaum for   insightful remarks. Furthermore, we are particularly grateful
 	for the crucial assistance given by   A. Andersson at various stages of this paper. We  would like 
 	to express our  great appreciation to Dr. Z. Much for an extensive proofreading.

 	\bibliographystyle{alpha}
 	\bibliography{allliterature1}

 \end{document}